\newtheorem{proposition}{Proposition}
\newtheorem{theorem}{Theorem}
\newtheorem{definition}{Definition}
\newtheorem{lemma}{Lemma}
\theoremstyle{definition}
\newtheorem{assumption}{Assumption}
\author{Santiago Paternain$^\dagger$, Juan Andr\'es Bazerque$^*$, Austin Small$^\dagger$ and Alejandro Ribeiro$^\dagger$
\thanks{Work supported by ARL DCIST CRA W911NF-17-2-0181. $^\dagger$Department of Electrical and System Engineering, University of Pennsylvania. Email: \{spater, ausmall, aribeiro\}@seas.upenn.edu. $^*$Departamento de Ingenier\'ia El\'ectrica, Facultad de Ingenier\'ia, Universidad de la Rep\'ublica. Email: jbazerque@fing.edu.uy
}}
\renewcommand{\comment}[1]{}
\newcolumntype{S}{>{\centering\arraybackslash} m{.10\linewidth} }
\newcolumntype{T}{>{\centering\arraybackslash} m{.30\linewidth} }
\title{Stochastic Policy Gradient Ascent in Reproducing Kernel Hilbert Spaces}
\begin{document}

\maketitle


%
\begin{abstract}
Reinforcement learning consists of finding policies that maximize an expected cumulative long term reward in a Markov decision process with unknown transition probabilities and instantaneous rewards. In this paper we consider the problem of finding such optimal policies while assuming they are continuous functions belonging to a reproducing kernel Hilbert space (RKHS). To learn the optimal policy we introduce a stochastic policy gradient ascent algorithm with three unique novel features: (i) The stochastic estimates of policy gradients are unbiased. (ii) The variance of stochastic gradients is reduced drawing on ideas from numerical differentiation. (iii) Policy complexity is controlled using sparse RKHS representations. Novel feature (i) is instrumental in proving convergence to a stationary point of the expected cumulative reward. Novel feature (ii) facilitates reasonable convergence times. Novel feature (iii) is a necessity in practical implementations which we show can be done in a way that does not eliminate convergence guarantees. Numerical examples in standard problems illustrate successful learning of policies with low complexity representations which are close to stationary points of the expected cumulative reward.
\end{abstract}

%
\section{Introduction}

Markov decision processes (MDPs) \cite{howard1964dynamic} provide a mathematical framework for modeling decision making in situations where outcomes are partly random and partly under the control of a decision maker. This general framework has been used to study systems in diverse disciplines such as robotics \cite{kober2013reinforcement}, control \cite{shreve1978alternative}, and finance \cite{rasonyi2005utility}. An MDP is a memoryless discrete time stochastic control process, where the state of the system at the next time is a random variable, whose probability distribution depends on the current state and the action selected by the decision maker. The actions selected by the agent determine instantaneous rewards that can be aggregated over a trajectory to determine cumulative rewards. The instantaneous rewards depend on both the state and the actions and thus, the reward along a trajectory depends on the policy under which the actions are selected based on the current state. In that sense, cumulative rewards are a measure of the quality of the decision making policy, and the objective of the agent is to find a policy that maximizes the expectation of the cumulative rewards, which is known as the Q-function of the MDP \cite{sutton1998reinforcement}.

In this paper we consider reinforcement learning problems, in which the transition probabilities and the rewards are unknown and can only be accessed trough experiments that permit observation of realized transitions and rewards \cite{sutton1998reinforcement}. Solutions to these problems can be roughly divided among those that learn the Q-function to then chose for any given state the action that maximizes the function \cite{watkins1992q} and those that attempt to directly learn the optimal policy \cite{sutton2000policy, deisenroth2013survey}. Among the former, Q-learning is a standard solution that is applicable in discrete state and discrete action spaces \cite{watkins1992q}. A drawback of Q-learning, and any other algorithm that learns Q-functions for that matter, is that maximizing the Q-function to select optimal actions is itself computationally challenging. This motivates development of algorithms that attempt to learn the optimal policy directly by performing (stochastic) gradient ascent on the Q-function with respect to a policy variable \cite{sutton2000policy, deisenroth2013survey}.

The algorithms in \cite{watkins1992q, sutton2000policy, deisenroth2013survey} suffer from a dimensionality curse because the complexity of learning scales exponentially with the number of actions and states \cite{friedman2001elements}. This is of particular concern in continuous state-action spaces, where any reasonable discretization leads to a very large number of states and possible actions. As is the case in many other learning domains, a common approach to sidestep the dimensionality curse is to assume that either the Q-function or the policy admits a finite parametrization that can be linear \cite{sutton2009convergent}, rely on a nonlinear basis expansion \cite{bhatnagar2009convergent}, or be given by a neural network \cite{mnih2013playing}. Alternatively one can assume that the Q-function \cite{koppel2017breaking, tolstayanonparametric} or the policy \cite{lever2015modelling} belong to a reproducing kernel Hilbert space (RKHS) which provide the ability to approximate functions using nonparameteric functional representations. Although the structure of the space is determined by the choice of the kernel, the set of functions that can be represented is sufficiently rich to permit a good approximation of a large class of functions.  

Our focus here is on the convergence and complexity of policy learning in RKHS. The contributions of the paper are: 

\begin{itemize}
\item[(i)] We develop a method that computes unbiased stochastic policy gradients in the RKHS (Section \ref{sec_algorithm}). These unbiased estimates are plugged into a stochastic gradient ascent method that we can formally prove learns a policy that is a stationary point of the Q-function (Theorem \ref{thm_convergence_unbiased}). 
\item[(ii)] We introduce a mechanism to reduce the variance of the stochastic policy gradients thereby reducing the overall learning cost (Section \ref{sec_approximation}). 
\item[(iii)] We use sparse RKHS representations to learn policies of limited complexity (Section \ref{sec_komp}) that we can formally prove converge to a neighborhood of a stationary point of the Q-function (Theorem \ref{theo_parsimonious}). Numerical examples illustrate that RKHS policies of low complexity perform well in standard problems (Section \ref{sec_numerical}).
\end{itemize}

To produce unbiased estimates of policy gradients [cf. (i)] there are two main challenges that are addressed.
The first one arises from the fact that since the expression of the policy gradient depends on the Q-function itself, the Q-function has to be estimated. This can be solved using a stochastic estimator of said function (Algorithm \ref{alg_q_estimate}) that is unbiased (Proposition \ref{prop_unbiased_q}). The second difficulty when computing the gradient of the Q-function is that it depends on a state-action distribution that is not that of sample trajectories. Meaning that if one were to consider a trajectory of the system as a sample to compute the stochastic gradient, this estimate would be biased. This issue is typically reinforced by other policy gradient algorithms which consider a fixed horizon as an estimate of the infinite sequence of state and action pairs. The biases introduced by the mentioned algorithms prevent to show convergence of stochastic gradient ascent to a stationary point of the Q-function. To overcome these issues, we propose to use as stopping time a random variable drawn from a geometric distribution. Such stopping time defines a horizon that is representative of the infinite time horizon problem and hence yields an unbiased estimate (Proposition \ref{prop_unbiased_grad}). We emphasize that our policy gradient estimates are different form those in, e.g., \cite{lever2015modelling}, and that those differences are instrumental in proving convergence (Section \ref{sec_convergence}).

To reduce the variance of stochastic policy gradient estimates [cf. (ii)] we show that multiple samples from a Gaussian random policy can be related to numerical differentiation of the Q-function (Section \ref{sec_approximation}). This idea has been used in the zero-th order optimization literature \cite{nesterov2011random,ghadimi2013stochastic}. This is, when the gradient of the function one is trying to minimize cannot be directly computed, one can estimate it by considering random samples in a neighborhood of the iterate and evaluating the objective function at those points. 

The representations in Section \ref{sec_algorithm} have growing complexity because they require the addition of a kernel center and weight at each iteration of the stochastic gradient ascent algorithm. This memory explosion problem is typical of learning in RKHS and a major hurdle in practical implementation. Indeed, since we require as many kernel elements as stochastic gradient iterations we perform and convergence is asymptotic, we need, in principle, to add an infinite number of kernels to represent the optimal policy. Iterations are halted in practice but policy gradient nonetheless requires large number of iterations -- between $10^4$ and $10^5$ in the experiments in Section \ref{sec_numerical}. To control memory explosion of RKHS representations  [cf. (iii)] we follow the ideas in \cite{koppel2017breaking} to propose the use of orthogonal matching pursuit to construct sparse kernel representations (Section \ref{sec_komp}). By doing so, we ensure that the model order of the representation remains bounded for all iterates at the cost of achieving convergence only to a neighborhood of a critical point of the Q-function (Theorem \ref{theo_parsimonious}). The size of the neighborhood depends both on the learning rate -- step size-- selected and the error that one allows in the construction of sparse representations. Other than concluding remarks the paper ends with numerical experiments where we consider the mountain car and the cartpole problem (Section \ref{sec_numerical}). In both cases we successfully learn policies that are close to stationary points of the Q-function and that admit low complexity representations -- with 40 kernels for mountain car and 120 kernels for cartpole.



\section{Problem Formulation}
In this work we are interested in the problem of finding a policy that maximizes the expected reward of an agent that chooses actions sequentially. Formally, let us denote the time by $t \in \mathbb{N}$ and let $\ccalS $ be a compact set denoting the state space of the agent and $\ccalA = \mathbb{R}^p$ be its action space. The transition dynamics are governed by a conditional probability $ P_{s_t\to s_{t+1}}^{a_t}(s) :=p(s_{t+1}=s|(s_t,a_t) \in \ccalS\times \ccalA)$ satisfying the Markov property, i.e.,
%
$p(s_{t+1}=s\big|(s_u,a_u) \in \ccalS\times \ccalA, \forall u\leq t) 
    =p(s_{t+1}=s|(s_t,a_t) \in \ccalS\times \ccalA).$
%
    The policy of the agent is a map $h: \ccalS \to \ccalA$ and we assume it to be a vector-valued function in a vector-valued RKHS $\ccalH$. The reason for considering a vector-valued RKHS is that the system to be controlled might have more than one input. We formally define this notion next, and we relate it to the classic definition of  a scalar RKHS.
%
\begin{definition}\label{def_rkhs} 
  A vector valued RKHS $\ccalH$ is a Hilbert space of functions $h:\ccalS\to \mathbb{R}^p$ such that for all $\bbc\in\mathbb{R}^p$ and $\bbx \in \ccalS$, $
\left(\kappa_x\bbc\right)(\bby) = \kappa(\bbx,\bby)\bbc \in \ccalH  \quad\mbox{for all}\quad \bby \in \ccalS,$ 
  where $\kappa_x(\bby)$ is a symmetric function that is a positive definite matrix for any $\bbx,\bby \in \ccalS$ and it has the reproducing property 
  \begin{equation}\label{eqn_reproducing}
<h,\kappa_x\bbc>_{\ccalH} = h(\bbx)^\top\bbc.
  \end{equation}
  Without loss of generality we will assume that the Hilbert norm of $\kappa(\bbx,\cdot)$ is equal to one.
\end{definition}
%
%
If $\kappa(\mathbf x,\mathbf y)$ is a diagonal matrix-valued function with diagonal elements  $\kappa(\textbf x,\textbf y)_{ii}$,  and $\mathbf c$ is the $i$-th canonical vector in $\mathbb R^p$, then \eqref{eqn_reproducing} reduces to the standard one-dimensional reproducing property per coordinate 
%
  %
  %
$h_i(\bbx) = <h_i,\kappa(\bbx,\cdot)_{ii}>.$ 
  %
  %
%
The latter allows us to treat each individual input as an independent function in a RKHS.

  Instead of choosing the action deterministically as $a = h(s)$, we randomly draw it from a multivariate Gaussian distribution with mean $h(s)$. A random policy helps the exploration of the state space and it is a good approximation of the deterministic policy as we show in Proposition \ref{prop_approximation}. The conditional probability of the action is defined as $\pi_h(a|s):\ccalS\times \ccalA \to \mathbb{R}_+$, 
\begin{equation}
\pi_h(a|s) = \frac{1}{\det(2\pi\Sigma)}
             \exp\left[- \big(a-h(s)\big)^\top\Sigma^{-1}\big(a-h(s)\big)\right].
\end{equation}
The latter means that given a policy $h\in\ccalH$ and the current state $s\in\ccalS$, the agent selects an action $a\in\ccalA$ from a multivariate normal distribution $\ccalN(h(s),\Sigma)$. The actions selected by the agent result in a reward defined by a function $r:\ccalS\times\ccalA \to \mathbb{R}$. We assume these rewards to be uniformly bounded as we formally state next.
\begin{assumption}\label{assumption_reward_function}
There exists $B_r>0$ such that $\forall (s,a) \in \ccalS\times\ccalA$, the reward function $r(s,a)$ satisfies $|r(s,a)|\leq B_r$.
\end{assumption}

The objective is then to find a policy $h^* \in \ccalH$ such that it maximizes the expected discounted reward 
\begin{equation}\label{eqn_problem_statement}
  h^* := \argmax_{h\in\ccalH} U(h) = \argmax_{h\in\ccalH} \mathbb{E}\left[\sum_{t=0}^\infty \gamma^t r(s_t,a_t)\Big| h\right] , 
  \end{equation}
where the expectation is taken with respect to all states $s_0,s_1,\ldots $ and all actions $a_0,a_1,\ldots,$ and $\gamma \in(0,1)$ is a discount factor that gives relative weights to the reward at different times. Values of $\gamma$ close to one imply that rewards in the present are as important as future rewards, whereas smaller values of $\gamma$ give origin to myopic policies that prioritize maximizing immediate rewards. It is also noticeable that $U(h)$ is indeed a function of the policy $h$, since policies affect the joint probabilities of the trajectories $\{s_t,a_t\}_{t=0}^\infty$.

Conceivably, problem \eqref{eqn_problem_statement} could be solved iteratively by running a gradient ascent iteration on the space of functions. In parametric problems where variables lie in a finite space, gradient ascent converges to a critical point of $U(h)$ -- if $U(h)$ is upper bounded -- under constant and diminishing step size \cite[pp 43-45] {bertsekas1999nonlinear}. The same will be proved here in the case of maximizing a functional where the decision variable is a function in $\ccalH$. When the functional is a convex function these results have been established in \cite{koppel2016parsimonious,koppel2017decentralized}. 

The importance of this theoretical result notwithstanding, is limited by the computation of the gradient of $U(h)$ with respect to $h$ being intractable. To see why this is the case, define the discounted long-run probability distribution $\rho(s,a)$ 
\begin{align}\label{eqn_discounted_distribution}
      &\rho(s,a) := (1-\gamma)\sum_{t=0}^\infty \gamma^t p(s_t=s,a_t=a) 
    \end{align}
where $p(s_t=s,a_t=a)$ defines the probability of reaching state $s$ and action $a$ at time $t$, and is given by
		\begin{align}\label{eqn_def_pstat}
		& p(s_t,a_t)=\hspace{-.2cm}\int\hspace{-.1cm}\pi_h(a_t|s_t)\prod_{u=0}^{t-1}\hspace{-.1cm}p(s_{u+1}|s_u,a_u)\pi_h(a_u|s_u) p(s_0)d\bbs d\bba
    \end{align}
		and where $d\bbs=ds_0\ldots ds_{t-1}$ and $d\bba=da_0\ldots da_{t-1}$ imply integration over the previous states and actions.

Let $Q(s,a;h)$ be the expected discounted reward for a policy $h$ that at state $s$ selects action $a$, formally  defined as
\begin{equation}\label{eqn_q_function}
Q(s,a;h) := \mathbb{E}\left[\sum_{t=0}^\infty \gamma^t r(s_t,a_t)\Big| h, s_0=s, a_0=a\right].
\end{equation}
The previous functions are useful to write the expression of the gradient of $U(h)$ as we formally state in the next proposition.
\begin{proposition}[\cite{sutton2000policy,lever2015modelling}]
  The gradient of the discounted rewards with respect to $h$ yields  
\begin{align}\label{eqn_nabla_U}
     &\nabla_h U(h,\cdot)= 
\frac{1}{1-\gamma}\mathbb{E}_{\rho}\left[ Q(s,a;h)\kappa(s,\cdot)\Sigma^{-1}\left(a-h(s)\right) \Big| h\right],  \end{align}
where the dot in $(h,\cdot)$ substitutes the second variable of the kernel, belonging to $\mathcal S$,  which is omitted to simplify notation.
\end{proposition}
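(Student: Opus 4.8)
The plan is to adapt the classical policy gradient theorem to the vector-valued RKHS setting, letting the reproducing property of Definition \ref{def_rkhs} play the role that the chain rule plays in the parametric case. First I would introduce the state-value function $V(s;h):=\mathbb{E}_{a\sim\pi_h}[Q(s,a;h)]$ and record the two identities that tie it to $Q$: the averaging relation $V(s;h)=\int \pi_h(a|s)Q(s,a;h)\,da$ and the Bellman relation $Q(s,a;h)=r(s,a)+\gamma\,\mathbb{E}_{s'\sim p(\cdot|s,a)}[V(s';h)]$, which both follow from \eqref{eqn_q_function}. Since $U(h)=\mathbb{E}_{s_0}[V(s_0;h)]$, computing $\nabla_h U$ reduces to computing $\nabla_h V$.

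The key step is the RKHS gradient of the log-policy. Because the normalizing constant of $\pi_h$ does not depend on $h$, only the quadratic form $(a-h(s))^\top\Sigma^{-1}(a-h(s))$ contributes. I would take a Gateaux derivative in an arbitrary direction $g\in\ccalH$: using $(h+\epsilon g)(s)=h(s)+\epsilon g(s)$, the derivative equals $g(s)^\top\Sigma^{-1}(a-h(s))$ up to a multiplicative constant, and the reproducing property \eqref{eqn_reproducing} identifies this linear functional of $g$ with the inner product $\langle \kappa(s,\cdot)\Sigma^{-1}(a-h(s)),\,g\rangle_{\ccalH}$. Hence $\nabla_h\log\pi_h(a|s)=\kappa(s,\cdot)\Sigma^{-1}(a-h(s))$ as an element of $\ccalH$; this is the score-function step, and it is where the RKHS structure enters the computation.

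Next I would differentiate $V$ through the product rule, using $\nabla_h\pi_h=\pi_h\,\nabla_h\log\pi_h$ on the first factor and the Bellman relation (with $r$ independent of $h$) to write $\nabla_h Q(s,a;h)=\gamma\,\mathbb{E}_{s'}[\nabla_h V(s';h)]$ on the second. This yields the recursion
\begin{equation*}
\nabla_h V(s;h)=\mathbb{E}_{a\sim\pi_h}\!\big[\kappa(s,\cdot)\Sigma^{-1}(a-h(s))\,Q(s,a;h)\big]+\gamma\,\mathbb{E}_{a,s'}\big[\nabla_h V(s';h)\big].
\end{equation*}
Unrolling it and taking the outer expectation over $s_0$ produces the series $\sum_{t=0}^\infty \gamma^t\,\mathbb{E}_{(s_t,a_t)}[\kappa(s_t,\cdot)\Sigma^{-1}(a_t-h(s_t))\,Q(s_t,a_t;h)]$, where the time-$t$ occupancy is exactly $p(s_t,a_t)$ in \eqref{eqn_def_pstat}. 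Recognizing the discounted distribution \eqref{eqn_discounted_distribution} then collapses the series into $\tfrac{1}{1-\gamma}\mathbb{E}_\rho[\cdots]$, which is the claimed expression \eqref{eqn_nabla_U}.

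The main obstacle is rigor rather than algebra: I must justify interchanging $\nabla_h$ with the integral over actions and with the infinite sum over $t$, and show the unrolled tail $\gamma^t\,\mathbb{E}[\nabla_h V(s_t;h)]$ vanishes. Here Assumption \ref{assumption_reward_function} is essential, since it bounds $|Q(s,a;h)|\le B_r/(1-\gamma)$; combined with the fact that the score has finite expected Hilbert norm under the Gaussian $\pi_h$ (its conditional mean is zero and the norm of $\kappa(s,\cdot)$ is one), dominated convergence legitimizes every interchange and guarantees that $\nabla_h U$ is a well-defined element of $\ccalH$.
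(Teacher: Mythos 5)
The paper does not prove this proposition; it imports it from \cite{sutton2000policy,lever2015modelling}, and your argument is essentially the standard derivation given in those references, transported to the RKHS via the Gateaux derivative and the reproducing property \eqref{eqn_reproducing} — define $V$, compute the score $\nabla_h\log\pi_h(a|s)=\kappa(s,\cdot)\Sigma^{-1}(a-h(s))$, unroll the Bellman recursion, and recognize the occupancy measure \eqref{eqn_discounted_distribution}. Two minor remarks: the "multiplicative constant" you wave at is a factor of $2$ coming from the paper's nonstandard Gaussian (no $\tfrac12$ in the exponent), so the clean match to \eqref{eqn_nabla_U} presumes the usual normalization; and your tail argument $\gamma^t\,\mathbb{E}[\nabla_h V(s_t;h)]\to 0$ implicitly assumes the uniform boundedness of $\nabla_h V$, which is part of what is being constructed — the tidy fix is to unroll the finite-horizon truncation and pass to the limit, exactly the kind of dominated-convergence bookkeeping the paper itself performs in Proposition \ref{prop_unbiased_grad}.
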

%
Observe that the expectation with respect to the distribution $\rho(s,a)$ is an integral of an infinite sum over a continuous space. In addition,  the system transition density $p(s_{t+1}|s_t,a_t)$ is not known. Therefore, computing \eqref{eqn_nabla_U} in closed form is intractable and a large number of samples might be needed to obtain an accurate Monte Carlo approximation even if $(p_{t+1}|s_t,a_t)$ was known. An alternative to overcome this drawback is the use of stochastic approximation methods (see \cite{robbins1951stochastic,kivinen2004online,zhang2004solving,pontil2005error}), where the main idea is to compute an unbiased estimate of the policy gradient by evaluating the expression inside the expectation for one sample of a pair $(s,a) \sim \rho(s,a)$, thus reducing the cost of each iteration. Observe however, that in this particular case the evaluation of the stochastic gradient requires the $Q$-function defined in \eqref{eqn_q_function}, which presents the same challenges that computing the gradient of the expected discounted reward, i.e., an intractable closed-form expression and a computationally prohibitive approximation. In Section \ref{sec_q_estimate} we present an efficient subroutine to find an unbiased estimate of the $Q$ function which is used in Section \ref{sec_nabla_U} to define the stochastic gradient of the expected discounted reward. 
In Section \ref{sec_convergence}, we show that by updating the policy with the stochastic estimate of $\nabla_h U(h,\cdot)$, convergence to a stationary point of $U(h)$ is achieved with probability one.  


\section{Stochastic Policy Gradient}\label{sec_algorithm}
In order to  compute a stochastic approximation of $\nabla_h U(h)$ we need to sample from $\rho(s,a)$ given in \eqref{eqn_discounted_distribution}. The intuition behind $\rho(s,a)$ is that it weights the probability of the system being at a specific state-action pair $(s,a)$ at time $t$ by a factor of $(1-\gamma)\gamma^t$. Notice that this factor is equal to the probability of a geometric distribution of parameter $\gamma$ to take the value $t$. Thus, for the $k$-th policy update, one can interpret the distribution $\rho(s,a)$ as the probability  of running the system for $T$ steps, with $T$ randomly drawn from a geometric distribution of parameter $\gamma$. This supports steps 2-7 in Algorithm \ref{alg_stochastic_grad} which describes how to obtain a sample $(s_k,a_k)\sim\rho(s,a)$. Later, in Proposition \ref{prop_unbiased_grad}  it is claimed that an unbiased estimate of $\nabla_h U(h)$ is obtained by substituting the sample $(s_k,a_k)$  in  
\begin{equation}\label{eqn_first_stochastic_gradient}
\hat{\nabla}_h U(h,\cdot) = \frac{1}{1-\gamma}\hat{Q}(s_k,a_k;h)\kappa(s_k,\cdot)\Sigma^{-1}(a_k-h(s_k)),
\end{equation}
with $\hat Q(s_k,a_k;h)$ being an unbiased estimate of $Q(s_k,a_k;h)$. The previous expression reveals a second challenge in  computing of the stochastic gradient, namely the need of computing the function $Q$ -- or an estimate-- at the state-action pair $(s_k,a_k)$. We deal with this  in Section \ref{sec_q_estimate}, providing an unbiased estimate of $Q(s_k,a_k;h)$ that yields an unbiased estimate of $\nabla_h U(h,\cdot)$ when substituted in \eqref{eqn_first_stochastic_gradient}. This unbiased estimate is constructed in a finite number of steps. Using this estimate and a step size $\eta_k>0$ we propose to update the policy iteratively following the rule
\begin{equation}\label{eqn_stochastic_update}
  h_{k+1} = h_k + \eta_k \hat{\nabla}_h U(h_k,\cdot),
\end{equation}
 Under proper conditions stochastic gradient ascent methods can be shown to converge with probability one to the local maxima  \cite{pemantle1990nonconvergence}. This approach has been widely used to solve parametric optimization problems where the decision variables are vectors in $\mathbb{R}^n$. In this paper we extend these results to non-parametric problems in RKHSs. First, we describe the algorithm to obtain the unbiased estimate $\hat Q(s_k,a_k;h)$ in a finite number of steps, which is instrumental for our overall non-parametric stochastic approximation strategy.
%
\subsection{Unbiased Estimate of Q}\label{sec_q_estimate}
A theoretically conceivable but unrealizable form of estimating the value of $Q(s,a;h)$ is to run a trajectory for infinite steps starting from $(s_0,a_0)=(s,a)$ and then compute
%
$\hat{q}_h = \sum_{t=0}^\infty \gamma^{t} r(s_t,a_t).$  
%
%
%
Despite being unbiased, the previous estimate requires an infinite number of steps. In contrast, we present the subroutine  Algorithm \ref{alg_q_estimate} that allows to compute an unbiased estimate of $Q(s,a;h)$ by considering a representative future reward obtained after a finite number of steps. 
As with $U(h)$, a parameter $\gamma$  closer to one  assigns similar weights to present and future rewards, and $\gamma$ close to zero prioritizes the present.   
%
%
In that sense, when $\gamma$ is very small, we do not need to let the system evolve for long time to get a representative reward. 
Again, the geometric distribution allows us to represent this idea. Specifically, let $T_Q$ be a geometric random variable with parameter $\gamma$, i.e., $P(T_Q=t) =(1-\gamma)\gamma^t$, which is finite with probability one. 
Then define the estimate of $Q(s,a;h)$ as the sum of rewards collected from step $t=0$ until $t=T_Q$ 
\begin{align}
     \hat{Q}(s,a;h) &:= (1-\gamma)\sum_{t=0}^{T_Q}r(s_t,a_t) \label{eqn_q_estimate} 
 \end{align}
Algorithm \ref{alg_q_estimate} summarizes how to obtain $\hat{Q}(s,a;h)$ as in \eqref{eqn_q_estimate}, and Proposition 1 states that it is unbiased.
%
\begin{algorithm}[t]
  \caption{estimateQ}
  \label{alg_q_estimate} 
\begin{algorithmic}[1]
 \renewcommand{\algorithmicrequire}{\textbf{Input:}}
 \renewcommand{\algorithmicensure}{\textbf{Output:}}
 \Require $s,a,h$
 \State \textit{Initialize}: $\hat{Q} = 0$, $s_0 = s$, $a_0 =a$
  \State Draw an integer $T_Q$ form a geometric distribution with parameter $\gamma$, $P(T_Q = t) = (1-\gamma)\gamma^t$
  \For {$t = 0,1,\ldots T_Q-1$}
  \State Collect reward and add to estimate $\hat{Q} = \hat{Q} + r(s_t,a_t)$
  \State Let system advance $s_{t+1} \sim P_{s_t\to s_{t+1}}^{a_t}$
  \State Select action $a_{t+1} \sim \pi_h(a|{s_{t+1}})$ 
  \EndFor
  \State Collect last reward $\hat{Q} = \hat{Q} + r(s_{T_Q},a_{T_Q})$
  \State Scale $\hat{Q} = (1-\gamma)\hat{Q}$\\	
 \Return $\hat{Q}$, $s_{T_Q}$ 
 \end{algorithmic}
 \end{algorithm}
%
\begin{proposition}\label{prop_unbiased_q}
  The output $\hat{Q}(s,a;h)$ of Algorithm \ref{alg_q_estimate} is an unbiased estimate of $Q(s,a;h)$. \end{proposition}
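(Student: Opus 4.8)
The plan is to view $\hat Q$ as a Monte Carlo sum over a random horizon and to show that averaging over the geometric stopping time $T_Q$ reproduces exactly the discounting that defines $Q$. Write the estimator produced by Algorithm \ref{alg_q_estimate} as $\hat Q(s,a;h) = (1-\gamma)\sum_{t=0}^{T_Q} r(s_t,a_t)$, where the trajectory $\{(s_t,a_t)\}$ is generated from $(s_0,a_0)=(s,a)$ exactly as in the definition \eqref{eqn_q_function} of $Q$ (advance by the transition kernel $P^{a_t}_{s_t\to s_{t+1}}$ in step 5 and draw $a_{t+1}\sim\pi_h(\cdot\mid s_{t+1})$ in step 6), while $T_Q$ is drawn independently from the geometric law $P(T_Q=t)=(1-\gamma)\gamma^t$ in step 2. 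The total expectation $\mathbb{E}[\hat Q]$ therefore factorizes over two independent sources of randomness, the trajectory and $T_Q$, and I would evaluate it by conditioning.

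First I would rewrite the finite sum as an infinite sum gated by an indicator, $\sum_{t=0}^{T_Q} r(s_t,a_t) = \sum_{t=0}^{\infty} r(s_t,a_t)\,\ind{t\le T_Q}$, and interchange expectation and summation. Using independence of $T_Q$ and the trajectory, each term factors as $\mathbb{E}[r(s_t,a_t)\,\ind{t\le T_Q}] = P(T_Q\ge t)\,\mathbb{E}[r(s_t,a_t)]$. The geometric tail is elementary, $P(T_Q\ge t)=\sum_{n=t}^{\infty}(1-\gamma)\gamma^{n}=\gamma^{t}$. This is the heart of the argument: the probability that the random horizon reaches step $t$ is precisely the discount weight $\gamma^t$, which converts the randomized finite horizon into geometric discounting.

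Substituting the tail probability gives $\mathbb{E}\big[\sum_{t=0}^{T_Q} r(s_t,a_t)\big]=\sum_{t=0}^{\infty}\gamma^{t}\,\mathbb{E}[r(s_t,a_t)\mid s_0=s,a_0=a,h]$, and since the trajectory law is identical to the one appearing in \eqref{eqn_q_function}, the per-step means coincide and the right-hand side is recognized as $Q(s,a;h)$. Combining this core identity with the normalization in step 9 then yields $\mathbb{E}[\hat Q(s,a;h)]=Q(s,a;h)$, which is the claim.

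The step I expect to be the main obstacle is justifying the interchange of the infinite summation and the expectation. This requires absolute summability, which I would establish from Assumption \ref{assumption_reward_function}: since $|r(s_t,a_t)|\le B_r$ uniformly and $\sum_{t=0}^\infty P(T_Q\ge t)=\sum_{t=0}^{\infty}\gamma^{t}=1/(1-\gamma)<\infty$ for $\gamma\in(0,1)$, the double series satisfies $\sum_t \mathbb{E}[|r(s_t,a_t)|\,\ind{t\le T_Q}]\le B_r/(1-\gamma)$, so Fubini--Tonelli licenses the exchange. The two remaining points to verify carefully are that the trajectory sampled inside Algorithm \ref{alg_q_estimate} is distributed identically to the one defining $Q$ in \eqref{eqn_q_function} (immediate from steps 5--6), and that the $(1-\gamma)$ normalization is tracked consistently so that the scaling in step 9 matches the value of $Q$.
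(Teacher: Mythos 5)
Your proposal is correct and follows essentially the same route as the paper's own proof: rewrite the random-horizon sum as $\sum_{t=0}^{\infty}\mathbbm{1}(T_Q\ge t)\,r(s_t,a_t)$, use independence of $T_Q$ from the trajectory together with $\mathbb{E}[\mathbbm{1}(T_Q\ge t)]=\gamma^t$, and justify the interchange of sum and expectation via the uniform bound $B_r$ and the summable geometric tail (you invoke Fubini--Tonelli where the paper uses monotone plus dominated convergence, an immaterial difference). Note only that the residual $(1-\gamma)$ bookkeeping you flag at the end is a genuine quirk of the paper's own definitions (the scaling in step 9 makes the estimator match $(1-\gamma)\sum_t\gamma^t\mathbb{E}[r(s_t,a_t)]$ rather than the unnormalized $Q$ of \eqref{eqn_q_function}), and the paper's proof glosses over it in exactly the same way.
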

\begin{proof}
We start by writing the estimate $\hat{Q}(s,a;h)$ as 
    \begin{equation}\label{eqn_prop_unbiased_q_1}
    \begin{split}
 \hat{Q}(s,a;h)=(1-\gamma)\sum_{t=0}^\infty \mathbbm{1}(T_Q \geq t) r(s_t,a_t) ,\end{split}
    \end{equation}
  where we substituted $\infty$ for the $T_Q$ as the last index of the sum, but added null summands  for $t>T_Q$ by using the indicator function $\mathbbm{1}$. To show that it is unbiased we compute its expectation conditioning on $h$ and the initial state--action pair.
  %
  %
Notice that according to Algorithm \ref{alg_q_estimate} $T_Q$ is drawn independently of the system evolution. Furthermore, it will be argued below that the sum and expectation can be exchanged, with this in mind we write the expectation of \eqref{eqn_prop_unbiased_q_1} as 
\begin{align}
  &\mathbb{E}\left[ \hat{Q}(s,a;h)\Big| h,s_0=s, a_0=a\right]\label{q-sum-exchanged} \\
  &= (1-\gamma) \sum_{t=0}^\infty \mathbb{E}\left[\mathbbm{1}(T_Q \geq t)\right] \mathbb{E}\left[r(s_t,a_t) \Big| h,s_0=s, a_0=a\right].\nonumber
  \end{align}
Because $T_Q\sim$Geom$(\gamma)$ we have that $\mathbb{E}\left[\mathbbm{1}(T_Q \geq t)\right]=\gamma^t$ and
\begin{align}
  &\mathbb{E}\left[ \hat{Q}(s,a;h)\Big| h,s_0=s, a_0=a\right]\label{q-sum-exchanged-gamma} \\
	&=(1-\gamma) \sum_{t=0}^\infty \gamma^t \mathbb{E}\left[r(s_t,a_t) \Big| h,s_0=s, a_0=a\right]={Q}(s,a;h)\nonumber
  \end{align}
	%
		It remains to prove that the sum and the expectation in the previous expression are exchangeable. Using Assumption \ref{assumption_reward_function} and the triangle inequality, for all $N>0$ we have that 
  \begin{equation}
\left|\sum_{t=0}^N \mathbbm{1}(T_Q \geq t) r(s_t,a_t)\right| \leq \sum_{t=0}^N \mathbbm{1}(T_Q \geq t) B_r.
 \end{equation}
  Which by virtue of the monotonicity and the linearity of the expectation implies that 
  \begin{equation}\label{eqn_bound_auxiliary_expectations}
\mathbb{E}\left[\left|\sum_{t=0}^N \mathbbm{1}(T_Q \geq t) r(s_t,a_t)\right|\right]\leq B_r\mathbb{E}\left[\sum_{t=0}^N \mathbbm{1}(T_Q \geq t) \right].
  \end{equation}
 Observe that the random variable on the right is a monotonic increasing random variable and thus, by virtue of the monotone convergence theorem we have that
 \begin{align}\label{eqn_sum_of_indicators}
     \mathbb{E}\left[\sum_{t=0}^\infty \mathbbm{1}(T_Q \geq t) \right] =  \sum_{t=0}^\infty  \mathbb{E}\left[ \mathbbm{1}(T_Q \geq t) \right]  
     =\sum_{t=0}^\infty \gamma^t      = \frac{1}{1-\gamma}.
    \end{align}
Notice that the sequence $\left|\sum_{t=0}^N \mathbbm{1}(T_Q \geq t) r(s_t,a_t)\right|$ is dominated by $\sum_{t=0}^\infty \mathbbm{1}\left(T_Q\geq t\right) B_r$ for all $N\geq 0$ and that the latter has bounded expectation. Then, the Dominated Convergence Theorem applies (see e.g., \cite[Theorem 1.6.7]{durrett2010probability}), and guarantees that expectation and  sum can be exchanged in \eqref{eqn_prop_unbiased_q_1}.
  \end{proof}
%
\subsection{Unbiased Estimate of the Stochastic Gradient}\label{sec_nabla_U}
In this section we present a subroutine that uses the estimate  $\hat Q(s,a;h)$ produced by Algorithm  \ref{alg_q_estimate}   to obtain an unbiased estimate of $\nabla_hU(h)$. As discussed before, a sample from $\rho(s,a)$ can be obtained by sampling a time $T$ from a geometric  distribution of parameter $\gamma$ and running the system $T$ times. Although the resulting estimate in \eqref{eqn_first_stochastic_gradient} can be shown to be unbiased, which would be enough for the purpose of stochastic approximation, we chose to introduce  symmetry with respect to $h(s)$ as it is  justified in Section \ref{sec_approximation}. Instead of computing the approximation only at the state-action pair $(s_T,a_T)$ we average said value with  $\hat Q(s_T,\bar{a}_T)$, where $\bar{a}_T=h(s_T)-(a_T-h(s_T))$ is the symmetric action to $a_T$ with respect to $h(s_T)$ (steps 8--11 in Algorithm \ref{alg_stochastic_grad}). Hence, the proposed estimate is
\begin{align}\nonumber  
    \hat{\nabla}_hU(h,\cdot) &=\frac{1}{2(1-\gamma)}\left(\hat{Q}(s_T,a_T;h)-\hat{Q}(s_T,\bar{a}_T;h)\right)\\&\times \kappa(s_T,\cdot)\Sigma^{-1}(a_T-h(s_T))\label{eqn_stochastic_grad}.
    \end{align}
The subroutine presented in Algorithm \ref{alg_stochastic_grad} summarizes the algorithm to compute our stochastic approximation in \eqref{eqn_stochastic_grad}. We claim that it is unbiased in the following proposition. 
%
\begin{algorithm}[t]
 \caption{StochasticGradient}\label{alg_stochastic_grad}
\begin{algorithmic}[1]
 \renewcommand{\algorithmicrequire}{\textbf{Input:}}
 \renewcommand{\algorithmicensure}{\textbf{Output:}}
 \Require $h$, $s$
 \State \textit{Initialize}:  $s_0=s$ 
  \State Draw an integer $T$ from a geometric distribution with parameter $\gamma$, $P(T = t) = (1-\gamma)\gamma^t$
  \State Select action $a_{0} \sim \pi_h(a|{s_0})$ 
  \For {$t = 0,1,\ldots T-1 $}
  \State Advance system $s_{t+1} \sim P_{s_t\to s_{t+1}}^{a_t}$
  \State Select action $a_{t+1} \sim \pi_h(a|{s_{t+1}})$ 
  \EndFor
  \State \label{stepQ} Get estimate of $Q(s_T,a_T;h)$ as in Algorithm \ref{alg_q_estimate}: $$\hat{Q}(s_T,a_T;h) = \textrm{estimateQ}(s_T,a_T;h)$$  \vspace{-3mm}
  \State  Given $a_T$, find symmetric  $\bar{a}_T=h(s_T)-(a_T-h(s_T))$
  \State Get estimate of  $Q(s_T,\bar{a}_T;h)$ as in Algorithm \ref{alg_q_estimate}: $$\hat{Q}(s_T,\bar{a}_T;h)= \textrm{estimateQ}(s_T,\bar{a}_T;h)$$ \label{stepQ2}\vspace{-4mm} 
  \State  Compute the stochastic gradient $\hat{\nabla}_hU(h,\cdot)$ as in \eqref{eqn_stochastic_grad}
 \Return $\hat{\nabla}_hU(h,\cdot)$  
 \end{algorithmic}\label{alg_stochastic_gradient} 
\end{algorithm}
%
\begin{proposition}\label{prop_unbiased_grad}
  The output $\hat{\nabla}_h U(h,\cdot)$ of Algorithm \ref{alg_stochastic_gradient} is an unbiased estimate of $\nabla_h U(h,\cdot)$ in \eqref{eqn_nabla_U}.
\end{proposition}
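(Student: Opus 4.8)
The plan is to compute the expectation of \eqref{eqn_stochastic_grad} by peeling off the randomness in three nested layers via the tower property: the fresh rollouts inside the two \textrm{estimateQ} calls, the Gaussian action $a_T$ drawn about $h(s_T)$, and the random horizon $T$ together with the trajectory producing $(s_T,a_T)$. First I would condition on the full trajectory and on $T$. By steps \ref{stepQ} and \ref{stepQ2} of Algorithm \ref{alg_stochastic_gradient}, the estimates $\hat Q(s_T,a_T;h)$ and $\hat Q(s_T,\bar a_T;h)$ come from independent rollouts with fresh geometric draws, hence are conditionally independent of the trajectory and of one another given $(s_T,a_T)$. Applying Proposition \ref{prop_unbiased_q} to each replaces $\hat Q$ by $Q$, yielding
\begin{align*}
\mathbb{E}\left[\hat{\nabla}_hU(h,\cdot)\,\big|\,s_T,a_T,T\right]
&=\frac{1}{2(1-\gamma)}\big(Q(s_T,a_T;h)-Q(s_T,\bar a_T;h)\big)\\
&\quad\times \kappa(s_T,\cdot)\Sigma^{-1}(a_T-h(s_T)).
\end{align*}

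Next I would take the expectation over $a_T$ conditioned on $s_T$ and $T$. Writing $\delta:=a_T-h(s_T)$, the conditional law of $\delta$ is the zero-mean Gaussian $\mathcal N(0,\Sigma)$, which is symmetric about the origin, while $\bar a_T=h(s_T)-\delta$. Performing the change of variables $\delta\mapsto-\delta$ in the term carrying $Q(s_T,h(s_T)-\delta;h)$ turns it into $-Q(s_T,h(s_T)+\delta;h)\kappa(s_T,\cdot)\Sigma^{-1}\delta$, so the two terms reinforce and the factor $1/2$ is cancelled, leaving
\begin{align*}
\mathbb{E}\left[\hat{\nabla}_hU(h,\cdot)\,\big|\,T\right]
&=\frac{1}{1-\gamma}\,\mathbb{E}\left[Q(s_T,a_T;h)\kappa(s_T,\cdot)\Sigma^{-1}(a_T-h(s_T))\,\big|\,T\right].
\end{align*}
Thus the symmetrization introduced in \eqref{eqn_stochastic_grad} leaves the mean of the estimator unchanged, exactly as needed.

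Finally I would average over $T\sim\mathrm{Geom}(\gamma)$. Since $T$ is drawn independently of the system evolution, $(s_t,a_t)$ follows $p(s_t,a_t)$ of \eqref{eqn_def_pstat} for each fixed $t$, and therefore the marginal law of $(s_T,a_T)$ is $\sum_{t=0}^\infty(1-\gamma)\gamma^t\,p(s_t=s,a_t=a)=\rho(s,a)$, the discounted occupation measure \eqref{eqn_discounted_distribution}. Substituting this identification,
\begin{align*}
\mathbb{E}\left[\hat{\nabla}_hU(h,\cdot)\right]
&=\frac{1}{1-\gamma}\,\mathbb{E}_{\rho}\left[Q(s,a;h)\kappa(s,\cdot)\Sigma^{-1}(a-h(s))\,\big|\,h\right],
\end{align*}
which is precisely \eqref{eqn_nabla_U}, establishing unbiasedness.

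The hard part will be the symmetrization step: the change of variables $\delta\mapsto-\delta$ is legitimate only because the Gaussian policy is symmetric about $h(s_T)$ and because the expectation is taken over $a_T$ with $s_T$ held fixed, so that both $Q$ terms are integrated against the same symmetric density; the independence of the two \textrm{estimateQ} rollouts is what lets me apply Proposition \ref{prop_unbiased_q} to the hypothetical action $\bar a_T$ even though it was never played. A secondary technical point, dispatched exactly as in the proof of Proposition \ref{prop_unbiased_q}, is the interchange of the outer expectation with the infinite sum defining $\rho$: the uniform bound $|r|\le B_r$ of Assumption \ref{assumption_reward_function} together with the geometric tail again furnishes a dominating summand for the Dominated Convergence Theorem.
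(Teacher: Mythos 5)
Your proposal is correct and follows essentially the same route as the paper's proof: condition on the trajectory and $T$, invoke Proposition \ref{prop_unbiased_q} for both \textrm{estimateQ} calls, use the symmetry of the Gaussian about $h(s_T)$ to merge the two $Q$-terms and cancel the $1/2$, and then identify the marginal of $(s_T,a_T)$ under the geometric $T$ with $\rho(s,a)$. The only minor gloss is in the final interchange of sum and expectation: the dominating summand there is not furnished by $|r|\le B_r$ and the geometric tail alone, but also requires the integrability of $\left\|\Sigma^{-1}\left(a_t-h(s_t)\right)\right\|$ (the paper dominates by $\frac{B_r}{(1-\gamma)^2}\left\|\eta_{t^*}\right\|$ and uses that the norm of a Gaussian has finite expectation), a detail worth stating explicitly.
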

\begin{proof}
  To show that the estimate is unbiased we write the expectation of $\hat{\nabla}_hU(h,\cdot)$ conditioned to $h$ as
  \begin{equation}
    \begin{split}
      \mathbb{E}\left[\hat{\nabla}_hU(h,\cdot)\Big| h\right] &= \mathbb{E}\left[\left[\hat{\nabla}_hU(h,\cdot)\Big|s_T,a_T\right]\Big| h\right] 
\end{split}
    \end{equation}
Using the linearity of the expectation and the fact that $\kappa(s_T,\cdot)\Sigma^{-1}(a_T-h(s_T))$ is measurable with respect of the sigma algebra generated by $(s_0,a_0 \ldots s_T,a_T)$ we have that 
\begin{equation}
   \begin{split}
     \mathbb{E}\left[\hat{\nabla}_hU(h,\cdot)\Big| h\right] &=
      \mathbb{E}\left[\mathbb{E}\left[\hat{Q}(s_T,a_T;h)-\hat{Q}(s_T,\bar{a}_T;h)\Big|s_T,a_T\right]\right.\\
&        \left.\frac{1}{2(1-\gamma)}\kappa(s_T,\cdot)\Sigma^{-1}(a_T-h(s_T))\Big| h\right].
   \end{split}
\end{equation}
By virtue of Proposition \ref{prop_unbiased_q} the previous expression reduces to
\begin{equation}
   \begin{split}
     \mathbb{E}\left[\hat{\nabla}_hU(h,\cdot)\Big| h\right] = 
      \mathbb{E}\left[\left({Q}(s_T,a_T;h)-{Q}(s_T,\bar{a}_T;h)\right)\right.\\
        \left.\frac{1}{2(1-\gamma)}\kappa(s_T,\cdot)\Sigma^{-1}(a_T-h(s_T))\Big| h\right].
   \end{split}
\end{equation}
%
Since $a_T$ is normally distributed with mean $h(s_T)$ we have that $\eta_T:=a_T-h(s_T)$ and $h(s_T)-a_T$ are both normally distributed with zero mean. Moreover, $\bar{a}_T$ has the same distribution as $a_T$. Hence the two terms on the right hand side of the previous equality are the same. Adding them yields
\begin{equation}\label{eqn_before_swaping_sum_expectation}
  \begin{split}
    \mathbb{E}\left[\hat{\nabla}_hU(h,\cdot)\Big| h\right]= \mathbb{E}\left[\frac{Q(s_T,a_T;h)}{1-\gamma} \kappa(s_T,\cdot)\eta_T\Big|h\right] \\
=    \frac{1}{1-\gamma}\mathbb{E}\left[\sum_{t=0}^\infty \mathbbm{1}(T=t)Q(s_t,a_t;h)\kappa(s_t,\cdot)\eta_t|h\right].
    \end{split}
\end{equation}
Next, we argue that it is possible to exchange the infinity sum and the expectation in the previous expression. Observe that only one of terms inside the sum can be different than zero. Denote by $t^*$ the index corresponding to that term and upper bound the norm of 
%
%
%
$\hat{\nabla}_h U(h)$ by
\begin{equation}
  \begin{split}
    \hspace{-0.5cm}(1-\gamma)\left\|\hat{\nabla}_h U(h)\right\|\leq \left|Q(s_{t^*},a_{t^*};h)\right|\left\| \kappa(s_{t^*},\cdot)\right\|\left\|\eta_{t^*}\right\|.
    \end{split}
\end{equation}
Using that $\left\|\kappa(s_t,\cdot)\right\| =1$ (cf., Definition \ref{def_rkhs}) and  $\left|Q(s,a;h) \right| \leq B_r/(1-\gamma)$ (cf., Lemma \ref{lemma_bounded} in the Appendix), we can upper bound the previous expression by  
%
%
\begin{equation}\label{eqn_bound_kappa_Q}
  \begin{split}
    \left\|\hat{\nabla}_h U(h)\right\|&\leq \frac{B_r}{\left(1-\gamma\right)^2}\left\|\eta_{t^*}\right\|  
  \end{split}
\end{equation}
Notice that $\Sigma^{1/2}\eta_{t} =\Sigma^{-1/2}\left(a_t-h(s_t)\right)$ are identically distributed mutlivariate normal distributions, and thus the expectation of its norm is bounded. The Dominated Convergence Theorem can be hence used to exchange the sum and the expectation in \eqref{eqn_before_swaping_sum_expectation}. In addition, the draw of the random variable $T$ is independent of the evolution of the system until infinity. Hence \eqref{eqn_before_swaping_sum_expectation} yields
\begin{align}
      \mathbb{E}\left[\hat{\nabla}_hU(h,\cdot)\Big| h\right]=     \sum_{t=0}^\infty \frac{P\left(t=T\right)}{1-\gamma}\mathbb{E}\left[Q(s_t,a_t;h)\kappa(s_t,\cdot)\eta_t|h\right]  \nonumber\\
 =\sum_{t=0}^\infty \gamma^t\mathbb{E}\left[Q(s_t,a_t;h)\kappa(s_t,\cdot)\eta_t)|h\right]=\nabla_hU(h,\cdot), \label{eqn_after_swaping_sum_expectation}
  \end{align}
where the last equality coincides with that in \eqref{eqn_nabla_U}. To be able to write the last equality we need to justify that sum and expectation are exchangeable, which we do next. Let us define the following sequence of random variables
\begin{equation}
S_k = \sum_{t=0}^k\gamma^t Q(s_t,a_t;h)\kappa(s_t,\cdot)\Sigma^{-1}(a_t-h(s_t)).
  \end{equation}
Use the triangle inequality along with the bounds for $\kappa(s_t,\cdot)$ and $Q(s_t,a_t;h)$ from \eqref{eqn_bound_kappa_Q} to bound the norm of $S_k$ by
\begin{equation}
\left\|S_k\right\| \leq  \frac{B_r}{1-\gamma}\sum_{t=0}^k\gamma^t\left\|\Sigma^{-1}(a_t-h(s_t))\right\|.
\end{equation}
Observe that the sum in the right is an increasing random variable because all terms in the sumands are positive. Hence, by virtue of the Monotone Convergence Theorem (see e.g., \cite[Theorem 1.6.6]{durrett2010probability}) we have that 
\begin{equation}
\mathbb{E}\hspace{-.1cm}\left[\sum_{t=0}^\infty\hspace{-.1cm}\gamma^t\left\|\Sigma^{-1}(a_t-h(s_t))\right\|\right]=\sum_{t=0}^\infty\hspace{-.1cm}\gamma^t\mathbb{E}\hspace{-.1cm}\left[\left\|\Sigma^{-1}(a_t-h(s_t))\right\|\right].
\end{equation}
Because $\Sigma^{-1/2}(\left(a_t-h(s_t)\right)$ is normally distributed, its norm has bounded expectation. Use in addition the fact that the geometric series converges to ensure that the right hand side of the previous expression is bounded. $S_k$ is therefore dominated by a random variable with finite expectation. Thus, the Dominated Convergence Theorem  allows us to write that
\begin{equation}
\lim_{k\to \infty} \mathbb{E}[S_k] = \mathbb{E}[\lim_{k\to\infty}S_k],
\end{equation}
which implies that \eqref{eqn_after_swaping_sum_expectation} holds.
\end{proof}
Now we are in conditions of presenting the complete algorithm for policy gradient in RKHSs. Each iteration consists of the estimation of $\hat{\nabla}_h U(h_k,\cdot)$ as described in Algorithm \ref{alg_stochastic_gradient} -- which uses Algorithm  \ref{alg_q_estimate} as a subroutine to get unbiased estimates of $Q(s,a;h)$ -- and of the updated
\begin{equation}\label{eqn_gradient_ascent}
{h}_{k+1} = h_k +\eta_k\hat{\nabla}_h U(h_k,\cdot),
  \end{equation}
where $\eta_k$ is non-summable and square summable, i.e.
\begin{equation}\label{eqn_stepsize}
  \sum_{k=0}^\infty \eta_k = \infty \quad \mbox{and} \quad \sum_{k=0}^\infty \eta_k^2 <\infty.
  \end{equation}
%
\begin{algorithm}[t]
 \caption{Stochastic Policy Gradient Ascent}
\begin{algorithmic}[1]
 \renewcommand{\algorithmicrequire}{\textbf{Input:}}
 \renewcommand{\algorithmicensure}{\textbf{Output:}}
 \Require step size $\eta_0$
 \State \textit{Initialize}: $h_0=0$
 \For{$k=0 \ldots$}
\State Draw an initial state $s_0$ for Algorithm \ref{alg_stochastic_gradient}
 \State Compute the stochastic gradient: $$\hat{\nabla}_h U(h_k,\cdot) = \textrm{StochasticGradient}(h_k,s_0)$$
 \State Gradient ascent step ${h}_{k+1} = h_k +\eta_k \hat{\nabla}_h U(h_k,\cdot)
 $
 \EndFor
 \end{algorithmic}\label{alg_policy_gradient}
 \end{algorithm}
%
\begin{theorem}\label{thm_convergence_unbiased}
  Let $\left\{h_k,k\geq 0\right\}$ be the sequence of functions given by \eqref{eqn_gradient_ascent}, where $\eta_k$ is as step size satisfying \eqref{eqn_stepsize} and $\hat{\nabla}_hU(h_k,\cdot)$ is an unbiased estimator of the gradient of the functional. With probability one we have that $\lim_{k\to \infty} h_k = H^*$, 
  %
  where $H^*$ is a random variable taking values in the set of critical points of the functional $U(h)$ defined in \eqref{eqn_problem_statement}.
\end{theorem}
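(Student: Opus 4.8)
The plan is to read the recursion \eqref{eqn_gradient_ascent} as a stochastic approximation scheme in the Hilbert space $\ccalH$ and to conclude via an almost-supermartingale (Robbins--Siegmund) argument. Three ingredients have to be assembled before that machinery can be deployed. First, $U$ must be bounded above: since $|r(s,a)|\leq B_r$ by Assumption \ref{assumption_reward_function}, summing the geometric weights gives $|U(h)|\leq B_r/(1-\gamma)$ for every $h\in\ccalH$. Second, I would need the gradient map $h\mapsto\nabla_h U(h,\cdot)$ to be Lipschitz continuous in the RKHS norm, with some constant $L$; this is what legitimizes the quadratic ascent estimate that controls the second-order effect of a step. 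Third, the stochastic gradient must have a uniformly bounded conditional second moment. This last fact follows directly from the pointwise bound \eqref{eqn_bound_kappa_Q}, $\|\hat{\nabla}_h U(h)\|\leq B_r(1-\gamma)^{-2}\|\eta_{t^*}\|$, because $\eta_{t^*}=a_{t^*}-h(s_{t^*})$ is conditionally Gaussian with covariance $\Sigma$, so that $\mathbb{E}[\|\hat{\nabla}_h U(h_k,\cdot)\|_{\ccalH}^2\mid\ccalF_k]\leq \sigma^2:=B_r^2\,\tr(\Sigma)/(1-\gamma)^4$, a bound independent of $h_k$, where $\ccalF_k$ is the $\sigma$-algebra generated by the history through iteration $k$.

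With these in hand I would write the ascent inequality. Applying $L$-smoothness to $h_{k+1}=h_k+\eta_k\hat{\nabla}_h U(h_k,\cdot)$ gives
\begin{equation}
U(h_{k+1})\geq U(h_k)+\eta_k\left\langle \nabla_h U(h_k,\cdot),\hat{\nabla}_h U(h_k,\cdot)\right\rangle_{\ccalH}-\frac{L\eta_k^2}{2}\left\|\hat{\nabla}_h U(h_k,\cdot)\right\|_{\ccalH}^2.
\end{equation}
Taking $\mathbb{E}[\,\cdot\mid\ccalF_k]$ and invoking unbiasedness (Proposition \ref{prop_unbiased_grad}), the cross term collapses to $\|\nabla_h U(h_k,\cdot)\|_{\ccalH}^2$ and the quadratic term is controlled by $\sigma^2$, yielding
\begin{equation}
\mathbb{E}\left[U(h_{k+1})\mid\ccalF_k\right]\geq U(h_k)+\eta_k\left\|\nabla_h U(h_k,\cdot)\right\|_{\ccalH}^2-\frac{L\sigma^2}{2}\eta_k^2.
\end{equation}

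I would then apply the Robbins--Siegmund theorem to the nonnegative sequence $V_k:=B_r/(1-\gamma)-U(h_k)\geq0$, which obeys $\mathbb{E}[V_{k+1}\mid\ccalF_k]\leq V_k-\eta_k\|\nabla_h U(h_k,\cdot)\|_{\ccalH}^2+\tfrac{L\sigma^2}{2}\eta_k^2$. Since $\sum_k\eta_k^2<\infty$ by \eqref{eqn_stepsize}, the perturbation is summable, so the theorem gives, with probability one, that $V_k$ (hence $U(h_k)$) converges to a finite limit and that $\sum_k\eta_k\|\nabla_h U(h_k,\cdot)\|_{\ccalH}^2<\infty$. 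Combined with the non-summability $\sum_k\eta_k=\infty$, this forces $\liminf_k\|\nabla_h U(h_k,\cdot)\|_{\ccalH}=0$ almost surely. To upgrade this to $\lim_k\|\nabla_h U(h_k,\cdot)\|_{\ccalH}=0$ I would run the standard oscillation argument: were the gradient norm bounded away from zero on a subsequence while returning arbitrarily close to zero infinitely often, the $L$-Lipschitz continuity of $\nabla_h U$ would bound the gradient's variation across each excursion by a constant times the path length $\sum\eta_j\|\hat{\nabla}_hU\|$, and the square-summability of the steps would contradict the excursions having a fixed positive gap. Hence every limit point of $\{h_k\}$ is a critical point of $U$, which is the sense in which $H^*$ takes values in the critical set.

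The hard part will be the smoothness ingredient: proving that $h\mapsto\nabla_h U(h,\cdot)$ is Lipschitz in the RKHS norm. The expression \eqref{eqn_nabla_U} depends on $h$ not only through the explicit factor $\Sigma^{-1}(a-h(s))$ but, more delicately, through both the $Q$-function and the discounted occupation measure $\rho$, each of which shifts with the policy; bounding these dependencies — leaning on $|Q|\leq B_r/(1-\gamma)$ (Lemma \ref{lemma_bounded}), the Gaussian structure of $\pi_h$, and the normalization $\|\kappa(s,\cdot)\|=1$ — is the technically demanding step. A lesser subtlety is justifying the Hilbert-space ascent lemma and the interchange of expectation with the inner product in infinite dimensions, though these are routine once Fréchet differentiability with a Lipschitz gradient is established. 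I would also flag that deducing convergence of the iterates $h_k$ to a \emph{single} limit (rather than to the critical set) is not implied by the above without extra structure such as isolated critical points, since the steps are only square-summable and the iterates need not be Cauchy.
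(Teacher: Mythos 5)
Your overall route is the same as the paper's: an almost-supermartingale inequality built from boundedness of $U$ (Lemma \ref{lemma_bounded}), smoothness of $\nabla_h U$, and moment bounds on $\hat\nabla_h U$, yielding $\sum_k\eta_k\|\nabla_h U(h_k,\cdot)\|^2<\infty$, hence $\liminf_k\|\nabla_hU(h_k,\cdot)\|=0$, followed by an oscillation argument to upgrade this to a full limit. (The paper phrases the first step as a bounded submartingale $V_k=U(h_k)-B\sum_{j\ge k}\eta_j^2$ rather than invoking Robbins--Siegmund, which is cosmetic; note however that Lemma \ref{lemma_lipchitz} only delivers $\|\nabla_hU(h_1,\cdot)-\nabla_hU(h_2,\cdot)\|\le L_1\|h_1-h_2\|+L_2\|h_1-h_2\|^2$, so the ascent inequality carries an $\eta_k^3$ term and a third moment of the stochastic gradient is needed, not just a second.) Your closing caveat --- that this argument establishes $\lim_k\|\nabla_hU(h_k,\cdot)\|=0$ rather than convergence of the iterates themselves to a single limit --- is accurate and applies equally to the paper's proof.

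Two steps in your sketch do not go through as written. First, the conditional second-moment bound does not ``follow directly'' from \eqref{eqn_bound_kappa_Q}: that inequality bounds the conditionally averaged quantity $Q(s_{t^*},a_{t^*};h)\kappa(s_{t^*},\cdot)\eta_{t^*}/(1-\gamma)$, in which the true $Q$ (uniformly bounded by $B_r/(1-\gamma)$) appears. The actual estimator \eqref{eqn_stochastic_grad} contains $\hat Q(s_T,a_T;h)-\hat Q(s_T,\bar a_T;h)$, and $|\hat Q|\le(1-\gamma)(T_Q+1)B_r$ is \emph{not} uniformly bounded; one must take moments of the random geometric horizons $T_Q,T_Q'$, which is exactly what Lemma \ref{lemma_stochastic_gradient} does (and the Gaussian factor is $\Sigma^{-1}(a-h(s))$, so the relevant trace is that of $\Sigma^{-1}$, not of $\Sigma$). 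Second, in the oscillation argument the displacement $h_{n_j}-h_m=\sum_{k=m}^{n_j-1}\eta_k\hat\nabla_hU(h_k,\cdot)$ cannot be controlled by $\sum_k\eta_k\|\nabla_hU(h_k,\cdot)\|^2<\infty$ alone: you must split $\hat\nabla_hU=\nabla_hU+e_k$ and show that the noise sum $\sum_k\eta_k e_k$ converges almost surely, which requires a separate square-integrable-martingale argument (Lemma \ref{lemma_square_int_martingale} in the paper). Without that, the ``path length'' in your excursion bound is not finite almost surely and the contradiction does not close.
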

\begin{proof}
  The proof of this result is the matter of Section \ref{sec_convergence}.
\end{proof}
%
%
The previous result establishes that  $h_k$  converges with probability one to a critical point of the functional $U(h)$. A major drawback of Algorithm 3 is that at each iteration the stochastic gradient ascent iteration will add a new element to the kernel dictionary. Indeed, for each iteration $\hat{\nabla}_hU(h_k,\cdot)$ introduces an extra  kernel  centered at a new  $s_{T}$ (cf., \eqref{eqn_stochastic_grad}). Hence for any $k>0$ in order to represent $h_k$ we require $k$ dictionary elements. This translates into  memory explosion and thus Algorithm 3, while theoretically interesting, is not practical. To overcome this limitation, we introduce in the next section a projection on a smaller Hilbert space so that we can control the model order. Before that, we 
introduce a discussion regarding the use of random policies. . 
%
\begin{comment}
%
%

%
%
\end{comment}
\subsection{Gaussian policy as an approximation} \label{sec_approximation}
Our reason to use a randomized Gaussian policy is two-fold: it yields a good approximation of the gradient of the Q-function that would result from a deterministic policy as we show in Proposition \ref{prop_approximation}, and it effects numerical derivatives when the gradients are handled via stochastic approximation (see also \cite{nesterov2017random}).  Building on these hints, we will propose alternative estimates for faster convergence. In this direction, we consider the Gaussian bell $\pi_h(a|{s})$ with covariance $\Sigma$ as an approximation to the Dirac's impulse  \cite{schwartz1998distributions}, and its gradient $\nabla_a \pi_h(a|{s})= \Sigma^{-1} (a-h(s))\pi_h(a|{s})$ as an approximation of the impulse's gradient. Then, the next proposition follows

\begin{proposition}\label{prop_approximation}
  Consider a family of Gaussian policies with  $\Sigma$ 
  and let $U_\Sigma(s;h)$ and $Q_\Sigma(s,a;h)$ be the cumulative rewards and Q-functions that results from such policies, respectively. Correspondingly, let $Q_0(s,a;h):=\mathbb{E}\left[\sum_{t=0}^\infty \gamma^t r(s_t,a_t)\Big| h, s_0=s, a_0=a\right]$ be the Q-function that results from  a deterministic policy $a_t=h(s_t)$. 
Let  $\nabla_a Q_\Sigma(s,a;h)$ be bounded for all $s,a,h$ and $\Sigma$, then  
\begin{align}
  \lim_{\Sigma\to 0}\int Q_\Sigma(s,a;h) \nabla_a\pi_h(a|{s})  da=\nabla_a Q_0(s,a;h)\label{pi_is_gradient}
\end{align}
and defining $\rho(s)$ such that $\rho(s,a)=\rho(s)\pi_h(a|{s}).$   
\begin{align*}
\lim_{\Sigma \to 0}\nabla_h & U_\Sigma(h,\cdot)  &= \frac{1}{1-\gamma} \int \nabla_a Q_0(s,a;h)\rho(s) \kappa(s,\cdot) ds.\end{align*}
\end{proposition}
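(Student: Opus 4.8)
The plan is to establish \eqref{pi_is_gradient} first and then derive the second identity from it by reusing the policy gradient formula \eqref{eqn_nabla_U}. For \eqref{pi_is_gradient}, the idea is to read $\nabla_a \pi_h(a|s)$ as a mollified derivative of a Dirac mass concentrated at $h(s)$. Since $\nabla_a\pi_h(a|s)$ carries the Gaussian envelope $\pi_h(a|s)$, which decays faster than any polynomial, and $Q_\Sigma(s,a;h)$ is uniformly bounded by $B_r/(1-\gamma)$ (cf.\ Lemma \ref{lemma_bounded}), the boundary terms vanish and I would integrate by parts in $a$ to transfer the derivative onto the Q-function,
\begin{equation*}
\int Q_\Sigma(s,a;h)\nabla_a\pi_h(a|s)\,da = -\int \nabla_a Q_\Sigma(s,a;h)\,\pi_h(a|s)\,da.
\end{equation*}
This rewrites the left-hand side as an expectation of $\nabla_a Q_\Sigma$ against the Gaussian $\pi_h(\cdot|s)$.

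Next I would let $\Sigma\to 0$. Two things happen at once: the Gaussian $\pi_h(\cdot|s)$ converges weakly to the point mass at $h(s)$, and the stochastic Q-function $Q_\Sigma$ converges to the deterministic one $Q_0$. The latter holds because the actions $a_t\sim\ccalN(h(s_t),\Sigma)$ concentrate on $h(s_t)$, so the trajectory law converges to that of the deterministic policy; with rewards bounded by Assumption \ref{assumption_reward_function} and $\gamma\in(0,1)$, dominated convergence over the discounted sum yields $Q_\Sigma\to Q_0$, and likewise $\nabla_a Q_\Sigma\to\nabla_a Q_0$. The hypothesis that $\nabla_a Q_\Sigma$ is bounded uniformly in $s,a,h,\Sigma$ supplies exactly the dominating function needed to pass the limit inside the integral against $\pi_h(\cdot|s)$, so (up to the sign fixed by the integration by parts above) the right-hand side converges to $\nabla_a Q_0$ evaluated at the mean action $a=h(s)$, which is \eqref{pi_is_gradient}.

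For the second identity I would start from the policy gradient \eqref{eqn_nabla_U} written for the family of covariances $\Sigma$,
\begin{equation*}
\nabla_h U_\Sigma(h,\cdot)=\frac{1}{1-\gamma}\int\!\!\int Q_\Sigma(s,a;h)\,\kappa(s,\cdot)\,\Sigma^{-1}(a-h(s))\,\rho(s,a)\,da\,ds.
\end{equation*}
Using the factorization $\rho(s,a)=\rho(s)\pi_h(a|s)$ together with the identity $\Sigma^{-1}(a-h(s))\pi_h(a|s)=\nabla_a\pi_h(a|s)$, the inner $a$-integral becomes precisely $\int Q_\Sigma(s,a;h)\nabla_a\pi_h(a|s)\,da$, the object treated in \eqref{pi_is_gradient}, so that
\begin{equation*}
\nabla_h U_\Sigma(h,\cdot)=\frac{1}{1-\gamma}\int \kappa(s,\cdot)\,\rho(s)\left[\int Q_\Sigma(s,a;h)\nabla_a\pi_h(a|s)\,da\right]ds.
\end{equation*}
Taking $\Sigma\to 0$ and invoking \eqref{pi_is_gradient} pointwise in $s$ replaces the bracket by $\nabla_a Q_0(s,a;h)$. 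To move the limit past the outer integral I would again appeal to dominated convergence: $\|\kappa(s,\cdot)\|=1$ by Definition \ref{def_rkhs}, $\rho(s)$ integrates to one, and the bracketed term is bounded uniformly in $\Sigma$ by the boundedness of $\nabla_a Q_\Sigma$, which furnishes an integrable dominating function. This produces the claimed limit for $\nabla_h U_\Sigma(h,\cdot)$.

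The main obstacle is the double limit in the second paragraph: the Gaussian-to-delta mollification and the de-randomization $Q_\Sigma\to Q_0$ must be controlled simultaneously, and one must verify that $\int \nabla_a Q_\Sigma\,\pi_h\,da\to\nabla_a Q_0(s,h(s);h)$ with enough uniformity in $s$ that a second application of dominated convergence is legitimate at the outer integration stage. The uniform boundedness hypothesis on $\nabla_a Q_\Sigma$ is exactly what makes both applications of dominated convergence go through.
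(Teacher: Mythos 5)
Your proposal follows essentially the same route as the paper's proof: integration by parts with vanishing boundary terms, dominated convergence justified by the assumed uniform bound on $\nabla_a Q_\Sigma$, the intermediate fact $\lim_{\Sigma\to 0}Q_\Sigma = Q_0$ obtained from the trajectory law collapsing onto the deterministic policy, and the second identity obtained by substituting \eqref{pi_is_gradient} into the policy gradient formula \eqref{eqn_nabla_U}. The only cosmetic differences are that the paper implements the concentration of the Gaussian via the explicit change of variables $\eta=\Sigma^{-1/2}(a-h(s))$, which fixes the integrating measure to the standard normal before applying dominated convergence, whereas you argue via weak convergence of $\pi_h(\cdot|s)$ to the point mass at $h(s)$; and the sign ambiguity you flag after integrating by parts is inherited from the paper's own (sign-inconsistent) expression for $\nabla_a\pi_h(a|s)$, so it is not a defect of your argument.
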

\begin{proof}
  Integrating by parts the expression \eqref{pi_is_gradient} yields 
    \begin{equation}
      \begin{split}
        \int Q_\Sigma(s,a;h) \nabla_a\pi_h(a|{s})  da \\
        = -Q_{\Sigma}(s,a;h)\pi_h(a|{s})\Big|_{-\infty}^\infty +  \int \nabla_a Q_\Sigma(s,a;h) \pi_h(a|s) \, da.
\end{split}
      \end{equation}
    The first term is zero because $Q_{\Sigma}(s,a,h)$ is bounded for all $s,a,h$ and $\Sigma$ (cf., Lemma \ref{lemma_bounded}) and the Gaussian goes to zero at infinity. To work with the second integral, consider the following variable $\eta = \Sigma^{-1/2}\left(a-h(s)\right)$. By introducing this change of variable $\pi_h(a|s)da =  \phi(\eta) d\eta$, where $\phi(\eta)$ is the multivariate normal distribution. Hence, it follows that 
    \begin{equation}
      \begin{split}
            \int Q_\Sigma(s,a;h) \nabla_a\pi_h(a|{s})  da
            = \int \nabla_a Q_\Sigma(s,a;h) \pi_h(a|s) \, da \\
            =\int \nabla_a Q_\Sigma(s,\Sigma^{1/2}\eta+h(s),h) \phi(\eta)\,d\eta.
            \end{split}
\end{equation}
Because $\nabla_a Q_\Sigma(s,\Sigma^{1/2}\eta+h(s),h)$ is bounded for all $s,a,h$ and $\Sigma$ we can use the Dominated Convergence Theorem to exchange the limit and the integral in \eqref{pi_is_gradient}. Hence, 
\begin{equation}
  \begin{split}
    \lim_{\Sigma\to 0}\int Q_\Sigma(s,a;h) \nabla_a\pi_h(a|{s})  da \\
    =\int\lim_{\Sigma\to 0} \nabla_a  Q_\Sigma(s,\Sigma^{1/2}\eta+h(s);h) \phi(\eta) \, d\eta.
  \end{split}
\end{equation}
We will show afterwards that $\lim_{\Sigma\to 0} Q_\Sigma (s,a;h) = Q_0(s,a;h)$ the Q-function for a deterministic policy $a_t = h(s_t)$. This being the case the previous integral reduces to 
\begin{equation}
  \begin{split}
    \lim_{\Sigma\to 0}\int Q_\Sigma(s,a;h) \nabla_a\pi_h(a|{s})  da    & =\int \nabla_a  Q_0(s,h(s);h) \phi(\eta) \, d\eta \\&= \nabla_a  Q_0(s,h(s);h),
  \end{split}
\end{equation}
where in the previous expression we had swaped the derivative with respect to $a$ and the limit. The proof of this is analogous to the proof that $\lim_{\Sigma\to 0} Q_\Sigma (s,a;h) = Q_0(s,a;)$ the Q-function that results from a deterministic policy $a_t = h(s_t)$. We do this next to complete the proof. Observe that for any $\Sigma$ the Q-function can be written as
  \begin{equation}
    \begin{split}
    Q_\Sigma(a_0,s_0;h) = \\
    \sum_{t=0}^\infty\gamma^t \int r(s_t,a_t) \prod_{u=0}^{t-1}p(s_{u+1}|s_u,a_u) \pi_h(a_{u+1},s_{u+1}) \, d\bbs d\bba.
    \end{split}
    \end{equation}
  Taking the limit with $\Sigma \to 0$, we have that $\pi_h(a|s) = \delta(a-h(s))$. Hence, the previous expression yields
  \begin{equation}
    \begin{split}
      \lim_{\Sigma \to 0} Q_\Sigma(a_0,s_0;h) =r(s_0,a_0)+  \\
      \sum_{t=1}^\infty \gamma^t\int r(s_t,a_t) \prod_{u=0}^{t-1}p(s_{u+1}|s_u,a_u) \delta(a_{u+1}-h(s_{u+1}) \, d\bbs d\bba. \\
      = r(s_0,a_0) \\
      +\sum_{t=1}^\infty \gamma^t\int r(s_t,h(s_t)) p(s_1|s_0,a_0)\prod_{u=1}^{t-1}p(s_{u+1}|s_u,h(s_u)) \, d\bbs.
      \end{split}
    \end{equation}
  Which shows that that $\lim_{\Sigma \to 0} Q_\Sigma(s,a;h)$ is indeed the Q-function for a deterministic policy $a_t = h(s_t)$. The proof of the second part of the proposition follows analogously. 
\end{proof}
The assumption of $\nabla_a Q_\Sigma(s,a;h)$ being bounded is satisfied if for instance the derivatives of $r(s,a)$ and $p(s_{t+1}|s,a)$ with respect to $a$ are bounded. 
 This interpretation of the integral in \eqref{pi_is_gradient} as the gradient of $Q(s,a;h)$ can be seen from the perspective of stochastic approximation. For notational brevity  we define $I_\pi:=\int Q(s,a;h) \nabla_a \pi_h(a|{s})  da$,  and express it in terms of expectations
\begin{align}I_\pi&=\mathbb E_{a\sim\pi_{h}}[Q(s,a;h) \Sigma^{-1} (a-h(s))]\label{IpiQ}
\end{align}
Then, an unbiased stochastic approximation can be obtained 
%
by sampling two (or more) instances $a$ and $a'$ from     $\pi_h(a|{s})$ and averaging  as in $\hat I_\pi= \frac{1}{2} Q(s,a;h) \Sigma^{-1} (a-h(s))+ \frac{1}{2} Q(s,a';h) \Sigma^{-1} (a'-h(s))$. Furthermore, if  $a'$ is the symmetric action of $a$ with respect to the mean $h(s)$, then the estimator  is still unbiased.   Define the zero-mean Gaussian variable $\eta = a-h(s)$ to be the deviation of $a$ from $h(s)$. Thus by symmetry,    $a'-h(s)=-\eta$, and we can  rewrite the symmetric estimate as the finite difference
\begin{align}\hspace{-.1cm} \hat I_\pi&=\frac{\Sigma^{-1}\eta}{2}(Q(s,h(s)+\eta;h)-Q(s,h(s)-\eta;h)),  
\label{numerical_derivative}\end{align}
revealing the  gradient structure hidden in \eqref{IpiQ}. 
The interpretation of \eqref{numerical_derivative} as a derivative is relevant to our policy method  because it  reveals the underlying reinforcement mechanisms, in the sense that the policy update favors directions that improve the reward. 
Fig.\ref{fig_gradient} (left) represents the field $Q(s,a;h)$ as a function of $a\in\mathbb R^2$, and
the gradient estimate $\hat I_\pi$ in \eqref{numerical_derivative} that is obtained by weighting two opposite directions with the corresponding rewards. Since the reward in the direction $\eta$ is relatively higher $\hat I_\pi(Q)$ points in this direction. 
%
Fig. \ref{fig_gradient} (right) shows that the direction of $\nabla_a Q(s,a;h)$ can be approximated more accurately at the expense of sampling the reward at $2d$ points in quadrature. 
\begin{figure}
	\centering
		\includegraphics[width=0.49 \linewidth]{./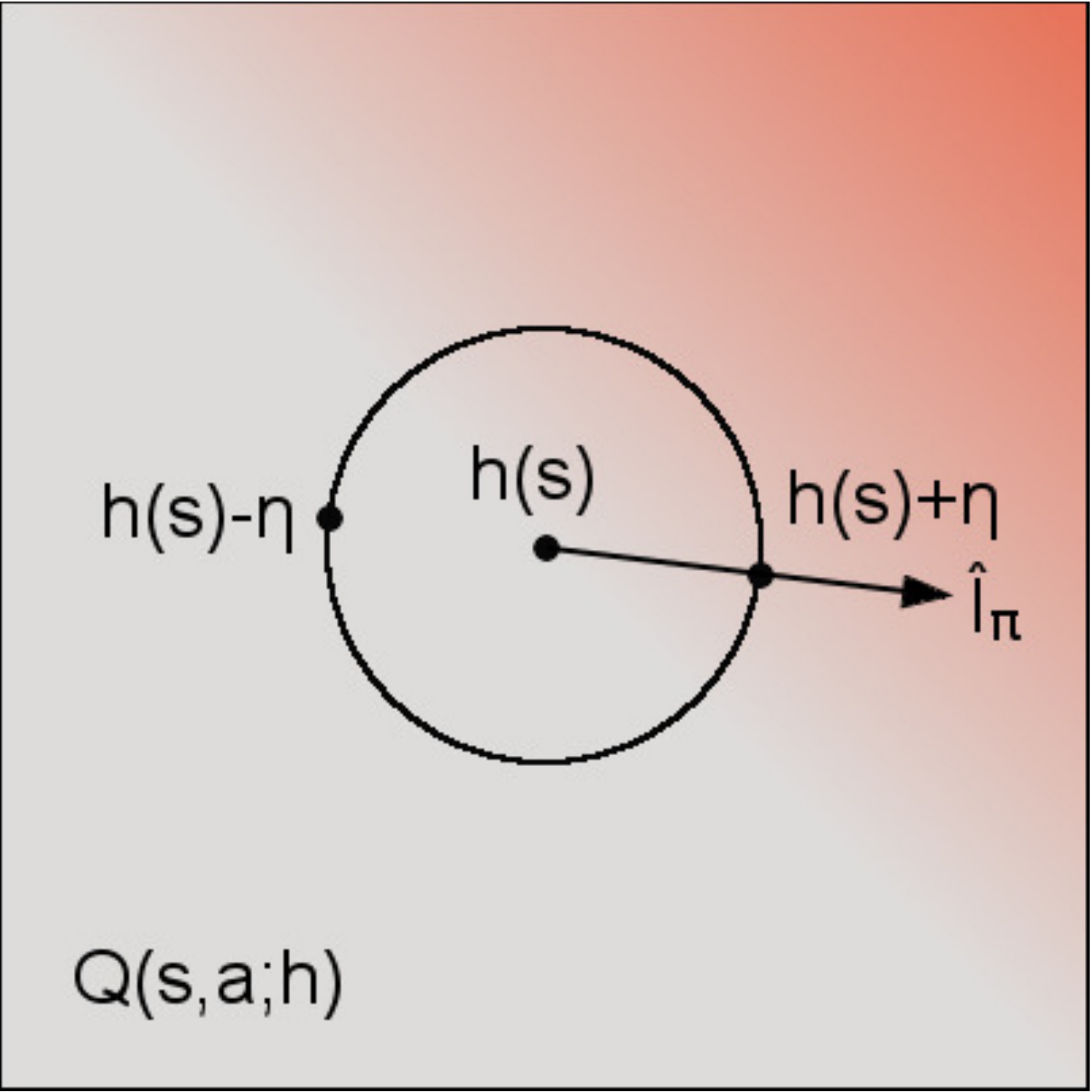} \includegraphics[width=0.49\linewidth]{./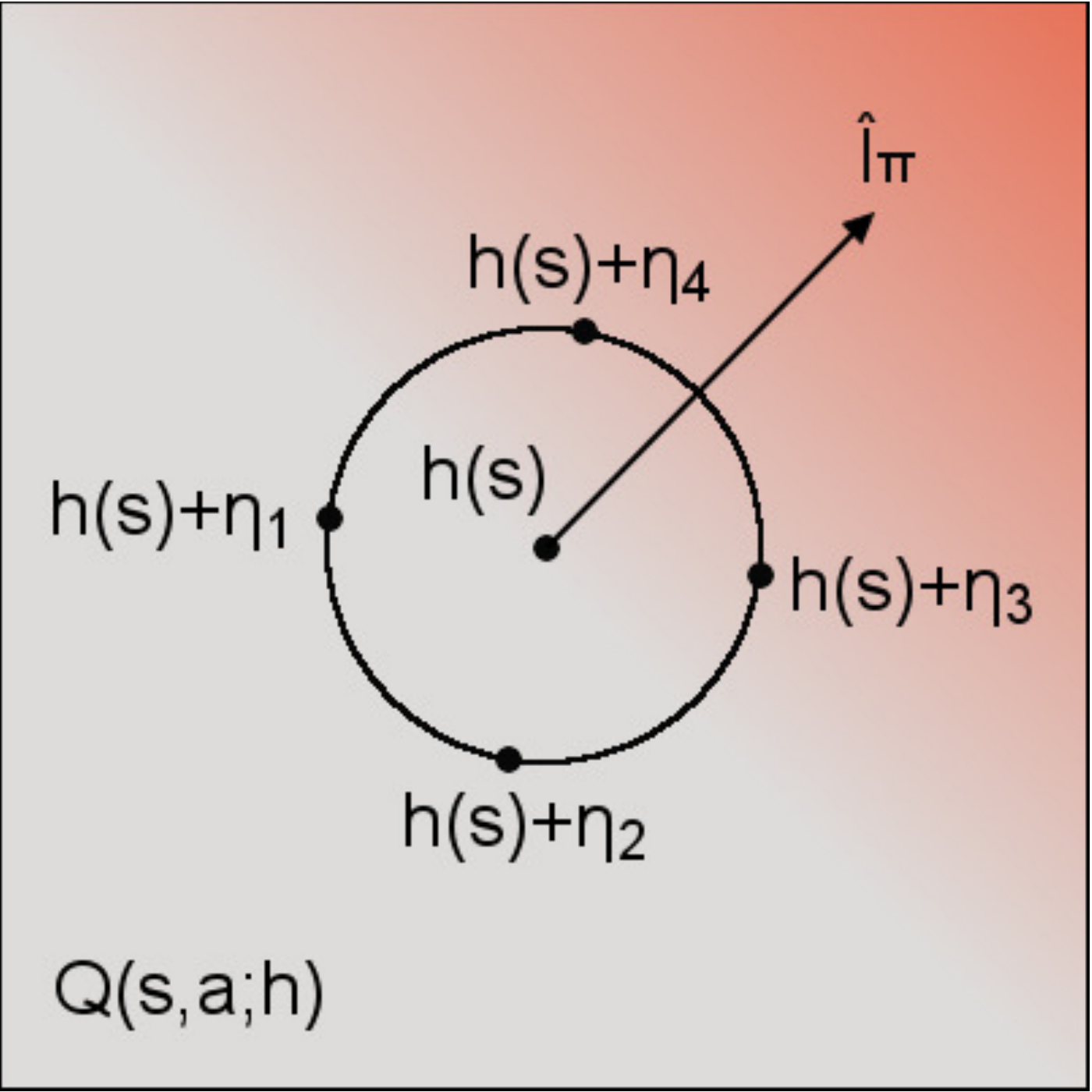}
	\caption{Numerical gradient via stochastic approximation; (left) two-sample approximation, (right) full-dimension. Red levels represents higher values of $Q(s,a;h)$.}
	\label{fig_gradient}
\end{figure}



\section{Convergence Analysis for Unbiased Stochastic Gradient Ascent}\label{sec_convergence}
This section contains the proof of Theorem \ref{thm_convergence_unbiased}. For this purpose let us introduce a probability space $\left(\Omega,\ccalF,P\right)$ and the following filtration defined as a sequence of increasing sigma-algebras $\left\{\emptyset, \Omega \right\} = \ccalF_0 \subset\ccalF_1 \subset \ldots \subset \ccalF_k \subset \ldots \subset \ccalF_{\infty}\subset \ccalF$, where $\ccalF_k$ is the sigma algebra generated by the random variables $h_0,\ldots,h_k$. Then, define the following constant $B=\left(L_1\sigma^2+L_2\eta_0\sigma^{3}\right)$, where $\sigma$ is the constant defined in Lemma \ref{lemma_stochastic_gradient} and $L_1$ and $L_2$ are those defined in Lemma \ref{lemma_lipchitz}. Next, consider the following sequence of random variables 
\begin{equation}\label{eqn_supermartingale}
  V_k = U(h_k)-B\sum_{j=k}^\infty \eta_j^2
  \end{equation}
Since the sequence $\eta_k$ is square summable and the expected discounted reward $U(h)$ is bounded (cf., Lemma \ref{lemma_bounded}), the random variable $V_k$ is bounded for all $k \geq 0$. We next show that the sequence \eqref{eqn_supermartingale} is a bounded submartingale. 
\begin{lemma}\label{lemma_supermartingale}
  The sequence $V_k$ defined in \eqref{eqn_supermartingale} is a bounded submartingale and it verifies that
\begin{equation}\label{eqn_supermartingale_equation}
    \begin{split}
    \mathbb{E}\left[{V_{k+1}|\ccalF_k}\right] \geq  V_k+\eta_k\left\|\nabla_h U(h_k)\right\|_\ccalH^2.
\end{split}
    \end{equation}
  \end{lemma}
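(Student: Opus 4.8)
The plan is to derive a one-step ascent inequality for $U$ along the update $h_{k+1}=h_k+\eta_k\hat{\nabla}_hU(h_k,\cdot)$, take the conditional expectation given $\ccalF_k$, and then absorb the second-order error terms into the deterministic tail $B\sum_{j=k}^\infty\eta_j^2$ so that the stochastic first-order term survives with a favorable sign. Boundedness of $V_k$ is immediate from $|U(h)|\le B_r/(1-\gamma)$ (Lemma \ref{lemma_bounded}) and $\sum_j\eta_j^2<\infty$, as already noted before the statement, so the work is entirely in the inequality \eqref{eqn_supermartingale_equation}.

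First I would invoke the smoothness estimate of Lemma \ref{lemma_lipchitz}. Writing the increment as $\Delta_k:=h_{k+1}-h_k=\eta_k\hat{\nabla}_hU(h_k,\cdot)$, that lemma supplies a bound of the form
\[
U(h_{k+1})\ge U(h_k)+\langle \nabla_hU(h_k),\Delta_k\rangle_{\ccalH}-L_1\|\Delta_k\|_{\ccalH}^2-L_2\|\Delta_k\|_{\ccalH}^3,
\]
i.e.\ a quadratic-plus-cubic control of the Taylor remainder, where $L_1$ and $L_2$ are the constants of Lemma \ref{lemma_lipchitz}. Substituting $\Delta_k=\eta_k\hat{\nabla}_hU(h_k,\cdot)$ and taking $\mathbb{E}[\cdot\mid\ccalF_k]$, I would treat the three terms separately.

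For the linear term, $\nabla_hU(h_k)$ is $\ccalF_k$-measurable, so the inner product and the (Bochner) expectation exchange, using that $\hat{\nabla}_hU(h_k,\cdot)$ is integrable in norm exactly as established in Proposition \ref{prop_unbiased_grad}; unbiasedness from that same proposition then gives $\mathbb{E}[\langle \nabla_hU(h_k),\eta_k\hat{\nabla}_hU(h_k,\cdot)\rangle_{\ccalH}\mid\ccalF_k]=\eta_k\|\nabla_hU(h_k)\|_{\ccalH}^2$. For the quadratic and cubic terms I would use the moment bounds furnished by Lemma \ref{lemma_stochastic_gradient}, namely $\mathbb{E}[\|\hat{\nabla}_hU(h_k,\cdot)\|_{\ccalH}^2\mid\ccalF_k]\le\sigma^2$ and $\mathbb{E}[\|\hat{\nabla}_hU(h_k,\cdot)\|_{\ccalH}^3\mid\ccalF_k]\le\sigma^3$, which yield the error $-L_1\eta_k^2\sigma^2-L_2\eta_k^3\sigma^3$. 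Bounding $\eta_k\le\eta_0$ (the step sizes are nonincreasing) converts the cubic step-size power into $L_2\eta_0\sigma^3\eta_k^2$, and collecting constants gives exactly
\[
\mathbb{E}\left[U(h_{k+1})\mid\ccalF_k\right]\ge U(h_k)+\eta_k\|\nabla_hU(h_k)\|_{\ccalH}^2-B\eta_k^2,\qquad B=L_1\sigma^2+L_2\eta_0\sigma^3.
\]

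Finally I would unwind the definition \eqref{eqn_supermartingale}: since $\mathbb{E}[V_{k+1}\mid\ccalF_k]=\mathbb{E}[U(h_{k+1})\mid\ccalF_k]-B\sum_{j=k+1}^\infty\eta_j^2$, combining the $-B\eta_k^2$ error with the tail $-B\sum_{j=k+1}^\infty\eta_j^2$ reconstitutes $-B\sum_{j=k}^\infty\eta_j^2$, so the nonnegative quantity $\eta_k\|\nabla_hU(h_k)\|_{\ccalH}^2$ is precisely what remains, establishing \eqref{eqn_supermartingale_equation} and hence the submartingale property. The step I expect to require the most care is the correct extraction and application of the cubic remainder: getting the precise quadratic-plus-cubic form of Lemma \ref{lemma_lipchitz} and its two separate moment bounds is what produces the $\sigma^3$ and $\eta_0$ contributions in $B$, and the passage $\eta_k^3\le\eta_0\eta_k^2$ is what makes the tail-sum bookkeeping close exactly.
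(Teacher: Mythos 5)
Your proposal is correct and follows essentially the same route as the paper: the quadratic-plus-cubic ascent inequality obtained from Lemma \ref{lemma_lipchitz} (the paper derives it explicitly via a mean-value Taylor expansion, adding and subtracting $\left<\nabla_h U(h_k),h_{k+1}-h_k\right>_{\ccalH}$ and applying Cauchy--Schwartz), conditional expectation with unbiasedness from Proposition \ref{prop_unbiased_grad} for the linear term, the moment bounds $\sigma^2,\sigma^3$ from Lemma \ref{lemma_stochastic_gradient} together with $\eta_k^3\le\eta_0\eta_k^2$ for the remainder, and the same constant $B=L_1\sigma^2+L_2\eta_0\sigma^3$ absorbed into the tail sum. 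No gaps.
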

\begin{proof}
   According to Lemma \ref{lemma_bounded} the value function $U(h_k)$ in \eqref{eqn_supermartingale} is upper-bounded. Thus $V_k$  is also upper bounded since  the stepsizes are square-summable according to \eqref{eqn_stepsize}. Observe as well that by definition $h_k\in\ccalF_k$ for all $k$ and therefore $V_k$ is adapted to the sequence of sigma-algebras. To show that $V_k$ is a submartingale it suffices to show \eqref{eqn_supermartingale_equation} which we do next. Writing the Taylor expansion of $U(h_{k+1})$ around $h_k$, yields
  \begin{equation}
    \begin{split}
     U(h_{k+1})
      &= U(h_k)+\left<\nabla_h U(f_k),h_{k+1}-h_k\right>_{\ccalH}, 
      \end{split}
  \end{equation}
 where $f_k = \lambda h_k +(1-\lambda)h_{k+1}$ with $\lambda\in[0,1]$. Adding and subtracting $\left<\nabla_h U(h_k),h_{k+1}-h_k\right>_{\ccalH}$ to the previous expression, using the Cauchy-Schwartz inequality and the result of Lemma \ref{lemma_lipchitz} we can lower bound $U(h_{k+1})$ by
  \begin{equation}
    \begin{split}
      &U(h_{k+1}) \geq U(h_k)+\left<\nabla_h U(h_k),h_{k+1}-h_k\right>_{\ccalH} \\
      &-L_1\left\|h_{k+1}-h_k\right\|_{\ccalH}^2-L_2\left\|h_{k+1}-h_k\right\|_{\ccalH}^3.
      \end{split}
  \end{equation}
  Let us consider the conditional expectation of the random variable $U(h_{k+1})$ with respect to the sigma-field $\ccalF_k$. Combine the monotonicity and the linearity of the expectation with the fact that $h_k$ is measurable with respect to $\ccalF_k$ to write
  \begin{equation}\label{eqn_first_supermartingale_bound}
    \begin{split}
      \mathbb{E}\left[{U(h_{k+1})|\ccalF_k}\right] \geq U(h_k)+\left<\nabla_h U(h_k),\mathbb{E}\left[h_{k+1}-h_k|\ccalF_k\right]\right>_{\ccalH} \\
      -L_1\mathbb{E}\left[\left\|h_{k+1}-h_k\right\|_{\ccalH}^2|\ccalF_k\right]
      -L_2\mathbb{E}\left[\left\|h_{k+1}-h_k\right\|_{\ccalH}^3|\ccalF_k\right].
      \end{split}
    \end{equation}
  Substitute $h_{k+1}$ by its expression in \eqref{eqn_stochastic_update} to write the expectation of the quadratic term as 
  \begin{equation}\label{eqn_bound_el1}
    \begin{split}
      L_1\mathbb{E}\left[\left\|h_{k+1}-h_k\right\|_{\ccalH}^2|\ccalF_k\right] &=  \eta_k^2 L_1\mathbb{E}\left[\left\|{\nabla}_h U(h_k,\cdot)\right\|_{\ccalH}^2|\ccalF_k\right].
      \end{split}
    \end{equation}
 Likewise, we have that 
  \begin{equation}\label{eqn_bound_el2}
    \begin{split}
      L_2\mathbb{E}\left[\left\|h_{k+1}-h_k\right\|_{\ccalH}^3|\ccalF_k\right] &\leq \eta_k^3 L_2\mathbb{E}\left[\left\|{\nabla}_h U(h_k,\cdot)\right\|_{\ccalH}^3|\ccalF_k\right] \\
      \end{split}
    \end{equation}
  Substituting \eqref{eqn_bound_el1} and \eqref{eqn_bound_el2} in \eqref{eqn_first_supermartingale_bound} and using the bounds for the moments of the stochastic gradient derived in Lemma \ref{lemma_stochastic_gradient} along with the fact that $\eta_k$ is nonincreasing and the definition of the constant $B=\eta_k^2L_1\sigma^2+\eta_k^2\eta_0L_2\sigma^3$, it results
  \begin{equation}
    \begin{split}
    \mathbb{E}\left[{V_{k+1}|\ccalF_k}\right] 
    \geq V_k+\left<\nabla_h U(h_k),\mathbb{E}\left[h_{k+1}-h_k|\ccalF_k\right]\right>_{\ccalH}.
    \end{split}
    \end{equation}
  To complete the proof observe that according to \eqref{eqn_stochastic_update} $h_{k+1}-h_k= \eta_k \hat \nabla_h U(h_k) $  and that the stochastic gradient is an unbiased estimate of the gradient (cf. Proposition \ref{prop_unbiased_grad}). 
  \end{proof}
%
The previous Lemma establishes that $V_k$ is a submartingale. A submartingale is in a sense a generalization of an increasing function and because it is bounded above it is expected that it converges. In fact this can be formalized (cf., \cite[Theorem 5.2.8]{durrett2010probability}). Moreover, the expression in \eqref{eqn_supermartingale_equation} and the convergence of $V_k$ suggest that the norm of the gradient $\left\| \nabla_h U(h_k,\cdot)\right\|$ goes to zero as $k$ goes to infinity. We show that this is the case in what follows.
  %
  %
  By virtue of Lemma \ref{lemma_supermartingale} it follows that the sequence $V_k$ defined in \eqref{eqn_supermartingale} is a bounded submartingale and therefore it converges almost everywhere to a limiting random variable $V:=\lim_{k\to\infty } V_k$ such that $\mathbb{E}|V|<\infty$ (cf., \cite[Theorem 5.2.8]{durrett2010probability}). Continuing the proof of Theorem \ref{thm_convergence_unbiased}, consider the conditional expectation of $V_{k+1}$ with respect to the sigma algebra $\ccalF_{k-1}$. Since $\ccalF_{k-1}\subset\ccalF_k$ it holds that
  \begin{equation}\label{eqn_conditional_exp_vk}
\mathbb{E}\left[V_{k+1}\big|\ccalF_{k-1}\right] = \mathbb{E}\left[\mathbb{E}\left[V_{k+1}\big|\ccalF_k\right]\big|\ccalF_{k-1}\right].
  \end{equation}
  Then, substitute  \eqref{eqn_supermartingale_equation} in \eqref{eqn_conditional_exp_vk} to obtain
  \begin{equation}
    \begin{split}
      \mathbb{E}\left[V_{k+1}\big|\ccalF_{k-1}\right] &\geq \mathbb{E}\left[V_k+\eta_k\left\|\nabla_h U(h_k,\cdot)\right\|^2\big|\ccalF_{k-1}\right] \\
      &= \mathbb{E}\left[V_k\big|\ccalF_{k-1}\right]+ \eta_k\mathbb{E}\left[\left\|\nabla_h U(h_k,\cdot)\right\|^2\big|\ccalF_{k-1}\right],
      \end{split}
  \end{equation}
		Next, use again \eqref{eqn_supermartingale_equation} to lower bound the first term on the right hand side of the previous equation
  \begin{equation}
    \begin{split}
      \mathbb{E}\left[V_{k+1}\big|\ccalF_{k-1}\right] &\geq  V_{k-1}+\eta_{k-1}\left\|\nabla_h U(h_{k-1},\cdot)\right\|^2 \\
      &+ \eta_k\mathbb{E}\left[\left\|\nabla_h U(h_k,\cdot)\right\|^2\big|\ccalF_{k-1}\right].
      \end{split}
    \end{equation}
    Repeating this procedure of conditioning on the previous sigma algebras recursively one obtains
    \begin{equation}
\mathbb{E}\left[V_{k+1}\right] \geq V_0+ \eta_0\left\|\nabla_h U(h_0,\cdot)\right\|^2 +\sum_{j=1}^{k}\eta_j\mathbb{E}\left[\left\|\nabla_h U(h_j,\cdot)\right\|^2\right].
    \end{equation}
    Since $V_k$ is a sequence of bounded random variables, by virtue of the Dominated Convergence Theorem we have that $\mathbb{E}\left[V \right] = \lim_{k\to\infty} \mathbb{E}\left[V_k\right]$. Hence, the previous inequality results in
    \begin{equation}
\mathbb{E}\left[V\right] \geq V_0 + \eta_0\left\|\nabla_h U(h_0,\cdot)\right\|^2 +\sum_{j=1}^{\infty}\eta_j\mathbb{E}\left[\left\|\nabla_h U(h_j,\cdot)\right\|^2\right],
      \end{equation}
with $\mathbb{E}|V|<\infty$, hence
    %
$\sum_{j=1}^{\infty}\eta_j\mathbb{E}\left[\left\|\nabla_h U(h_j,\cdot)\right\|^2\right] < \infty.$
    %
    The monotone convergence theorem applied to the sum $\sum_{j=1}^k \eta_j\left\| \nabla_h U(h_j,\cdot)\right\|^2 $ implies that
    \begin{equation}
      \lim_{k\to\infty} \mathbb{E}\left[\sum_{j=1}^k \eta_k\left\| \nabla_h U(h_j,\cdot)\right\|^2 \right] = \mathbb{E}\left[\sum_{j=1}^\infty \eta_k\left\| \nabla_h U(h_j,\cdot)\right\|^2 \right].
      \end{equation}
    Since the left hand side of the previous expression is bounded 
    %
    \begin{equation}\label{eqn_sum_gradient}
      \lim_{k\to\infty} \sum_{j=0}^k \eta_j\left\| \nabla_h U(h_j,\cdot)\right\|^2 <\infty \quad \mbox{a.e.}
    \end{equation}
 Because the sequence of step sizes $\eta_j$ is non-summable (cf., \eqref{eqn_stepsize}) the previous expression implies that
    \begin{equation}
\liminf_{k\to \infty} \left\|\nabla_h U(h_k,\cdot)\right\|^2 =0.
    \end{equation}

    %
    %
    %
    %
    We are left to show that $\limsup_{k\to \infty}\left\|\nabla_h U(h_k,\cdot)\right\|=0$ almost everywhere, which we do by contradiction. Assume that $\limsup_{k\to \infty}\left\|\nabla_h U(h_k(\omega),\cdot)\right\|= \epsilon>0$ for some $\omega\in\Omega$. Then, there exist subsequences $\{m_j\}$ and $\{n_j\}$ such that $m_j<n_j<m_{j+1}$ and that for $m_j\leq k <n_j $ we have that 
    \begin{equation}\label{eqn_higher_than_epsilon}
      \left\| \nabla_h U(h_k,\cdot)\right\|>\frac{\epsilon}{3}
\end{equation}
 and for $ n_j\leq k <m_{j+1}$ we have that 
    \begin{equation}\label{eqn_lower_than_epsilon}
      \left\| \nabla_h U(h_k,\cdot)\right\|\leq \frac{\epsilon}{3}, 
    \end{equation}
where we have dropped the $\omega$ to simplify the notation, but hereafter we argue for a specific sample point in the probability space. It is proved in Lemma \ref{lemma_square_int_martingale} in the appendix, that 
    \begin{equation}\label{eqn_square_integrable_martingale}
S_k =\sum_{j=0}^k \eta_j \left(\hat{\nabla}_h U(h_j)-{\nabla}_h U(h_j) \right) =\sum_{j=0}^k \eta_j e_j
     \end{equation}
    converges to a finite limit with probability one. By virtue of this result and \eqref{eqn_sum_gradient} there exists $\bar{j}$ such that
    \begin{equation}\label{eqn__bound_tails}
      \sum_{k=m_{\bar{j}}}^\infty \eta_k \left\|\nabla_h U(h_k,\cdot)\right\|^2 < \min\left\{\frac{\epsilon^2}{36L_1}, \frac{\epsilon^{3/2}}{6\sqrt{6L_2}}\right\} \end{equation}
    and
    \begin{equation}\label{eqn__second_bound_tails}
      \left\| \sum_{k=m_{\bar{j}}}^\infty \eta_ke_k\right\| < \min\left\{\frac{\epsilon}{12L_1}, \frac{\epsilon^{1/2}}{2\sqrt{6L_2}}\right\}
      \end{equation}
    For any $j\geq \bar{j}$ and any $m$ with $m_j\leq m <n_j$, by virtue of Lemma \ref{lemma_lipchitz}, we have
    \begin{equation}\label{eqn_difference_in_nablas}
      \begin{split}
        \left\|\nabla_h U(h_{n_j},\cdot)-\nabla_h U(h_m,\cdot)\right\|& \\
        \leq L_1\left\|h_{n_j}-h_m\right\|&+L_2\left\|h_{n_j}-h_m\right\|^2,
        \end{split}
    \end{equation}
    Recall that the difference $h_{n_j}-h_m$ can be written as
    \begin{equation}
      \begin{split}
        h_{n_j}-h_m&=\sum_{k=m}^{n_j-1}\eta_k \hat{\nabla}_h U(h_k,\cdot). 
    \end{split}
      \end{equation}
    Thus, defining the error $e_k =\hat{\nabla}_h U(h_k,\cdot)-{\nabla}_h U(h_k,\cdot)$, the following upper bound holds
    \begin{equation}\label{eqn_first_bound_diff_h}
      \begin{split}
        \left\|h_{n_j}-h_m\right\|&\leq \sum_{k=m}^{n_{j-1}}\eta_k\left\|\nabla_h U(h_k,\cdot)\right\| + \left\|\sum_{k=m}^{n_{j-1}}\eta_k e_k \right\| \\
        &\leq \frac{3}{\epsilon}\sum_{k=m}^{n_{j-1}}\eta_k\left\|\nabla_h U(h_k,\cdot)\right\|^2 + \left\|\sum_{k=m}^{n_{j-1}}\eta_k e_k \right\|,
        \end{split}
    \end{equation}
    where in the last inequality we used that that according to \eqref{eqn_higher_than_epsilon} for all $k$ such   $m\leq k <n_j$ we have that $(3/\epsilon)\left\| \nabla_h U(h_k,\cdot)\right\|\geq 1$. Using the bounds on the tails \eqref{eqn__bound_tails} and \eqref{eqn__second_bound_tails} it holds that
    \begin{equation}
\left\|h_{n_j}-h_m\right\|\leq \frac{3}{\epsilon}\frac{\epsilon^2}{36L_1} + \frac{\epsilon}{12L_1} = \frac{\epsilon}{6L_1}
      \end{equation}
        \begin{equation}
\left\|h_{n_j}-h_m\right\|\leq \frac{3}{\epsilon}\frac{\epsilon^{3/2}}{6\sqrt{6L_2}} + \frac{\epsilon^{1/2}}{2\sqrt{6L_2}} = \sqrt{\frac{\epsilon}{6L_2}}.
      \end{equation}
        Replacing the previous bounds in \eqref{eqn_difference_in_nablas} yields $        \left\|\nabla_h U(h_{n_j},\cdot)-\nabla_h U(h_m)\right\|\leq \epsilon/3$.
%
            The latter together with \eqref{eqn_lower_than_epsilon}  implies that $\left\|\nabla_h U(h_m,\cdot)\right\|<2\epsilon/3$ for all $m$ such   $m_j\leq m <n_j$, which contradicts \eqref{eqn_lower_than_epsilon} and therefore the assumption that $\limsup_{k\to\infty} \left\|\nabla_h U(h_k,\cdot)\right\| >0$. Hence, it must hold that $\lim_{k\to\infty} \left\|\nabla_h U(h_k,\cdot)\right\| =0$.  
  %
  %
  

%
\section{Sparse Projections in the Function Space}\label{sec_komp}
As observed before, the update \eqref{eqn_stochastic_update} requires the introduction of a new element $\kappa(s_{T_k},\cdot)$ of the kernel dictionary at each iteration, thus resulting in memory explosion. To overcome this limitation we modify the stochastic gradient ascent by introducing a projection over a RKHS of lower dimension as long as the induced error remains below a given compression budget. This algorithm is known as Orthogonal Match and Pursuit \cite{vincent2002kernel} and we summarize and adapt it to  policy gradient ascent in Algorithm \ref{algorithm_komp}. Starting with the policy $h_0 \equiv 0$, each stochastic gradient ascent iteration  defines a new policy
\begin{equation}\label{eqn_partial_grad_ascent}
\tilde{h}_{k+1} = h_k +\eta\hat{\nabla}_h U(h_k,\cdot),
  \end{equation}
where $\hat{\nabla}_h U(h_k,\cdot)$ is that in \eqref{eqn_stochastic_grad}. 
The difference between the updates \eqref{eqn_partial_grad_ascent} and  \eqref{eqn_gradient_ascent}   is that in \eqref{eqn_partial_grad_ascent} $h_k=\sum_{j=1}^{M_k} w_j^{(k)}\kappa(s_{j}^{(k)},\cdot)$ is represented by a reduced $M_k\leq k$ number of states $s_j^{(k)}$ and weights $w_j^{(k)}$, as it results from the pruning procedure below, (cf., $M_k=k$ for $h_{k+1}$ in \eqref{eqn_gradient_ascent}).

With state $s_{T_k}$ being $s_T$ in step 8 of Algorithm \ref{alg_stochastic_grad}, and $\tilde{w}_{k}:=\eta \frac{\hat{Q}(s_{T_k},a_{T_k};h_k)-\hat{Q}(s_{T_k},\bar{a}_{T_k};h_k)}{2(1-\gamma)}\Sigma^{-1}(a_{T_k}-h_k(s_{T_k}))$, we rewrite   \eqref{eqn_stochastic_grad} as in $\eta\hat{\nabla}_h U(h_k,\cdot) = \tilde{w}_k\kappa(s_{T_k},\cdot)$, and thus
%
%
\begin{equation}
\tilde{h}_{k+1} = \sum_{j=1}^{M_{k}} w_j^{(k)} \kappa(s_{j}^{(k)},\cdot)+\tilde{w}_k\kappa(S_{T_k},\cdot).
\end{equation}
 Hence, $h_k$ is represented by dictionary $D_k=[s_1^{(k)},\ldots, s_{M_k}^{(k)}]$ and associated weights $\bbw_k=\left[\left(w_1^{(k)}\right)^\top,\ldots,\left(w_{M_k}^{(k)}\right)^\top\right]^\top,$ and 
%
%
 $\tilde{h}_{k+1}$ is represented by the  updated  $\tilde{D}_{k+1} = [D_k, s_{T_k}]$ and $\tilde{\bbw}_{k+1} = [\bbw_k^\top, \tilde{w}_{k}^\top]^\top$, which has model order $\tilde{M}_{k+1} = M_k+1$. Then, to avoid memory explosion, we prune the dictionary as long as the induced error stays below a prescribed bound $\epsilon>0$. 
We start by storing copies of the previous dictionary, i.e., define ${D}_{k+1} = \tilde{D}_{k+1}$ and ${\bbw}_{k+1} = \tilde{\bbw}_{k+1}$. Let $\ccalH_{{D}^j_{k+1}}$ be the space spanned by all the elements of ${D}_{k+1}$ except for the $j$-th one. For each $j=1 \ldots {M}_{k+1}$ we identify the less informative dictionary element by solving
%
\begin{align}\label{eqn_pruning_optimization}
    e_j &= \min_{{h}\in\ccalH_{{D}^j_{k+1}}}\left\| {h}-\tilde{h}_{k+1}\right\|_\ccalH^2  =\bbc_j\\
   & +\hspace{-0.1cm} \min_{\bbw \in \mathbb{R}^{p M_{k+1}-1}}\hspace{-0.1cm}
    \bbw^\top \bbK_{{D}^j_{k+1},{D}^j_{k+1}} \bbw -2 \bbw^\top \bbK_{{D}^j_{k+1},\tilde{D}_{k+1}}\tilde{\bbw}_{k+1}, \nonumber 
\end{align}
%
%
%
which results from expanding the square after substituting $h$ and $\tilde{h}_{k+1}$ by their representations as weighted sums of kernel elements, and upon defining the block matrices $\bbK_{{D}^j_{k+1},{D}^j_{k+1}}$ and $\bbK_{{D}^j_{k+1},\tilde{D}_{k+1}}$ whose $(l,m)$-th blocks of size $p\times n$ are $\kappa( s_l^{(k)},s_m^{(k)})$ and $\kappa(s_l^{(k)}, \tilde{s}_m^{(k)})$, respectively,  with $ s_l^{(k)}$ and $s_m^{(k)}$ being the $l$-th and $m$-th elements of $D_{k+1}^{j},$ and with $\tilde{s}_l^{(k)}$ correspondingly  in  $\tilde{D}_{k+1}$.  
%
%
\begin{algorithm}[t]
 \caption{Kernel Orthogonal Matching Pursuit (KOMP)}
\begin{algorithmic}[1]
 \renewcommand{\algorithmicrequire}{\textbf{Input:}}
 \renewcommand{\algorithmicensure}{\textbf{Output:}}
 \Require function ${\tilde{h}_k}$ defined by Dictionary $\tilde{D}_k\in\mathbb{R}^{n\times \tilde{M}_k}$ weights $\tilde{\bbw}_k\in\mathbb{R}^{p \times \tilde{M}_k}$ and compression budget $\epsilon>0$
 \State \textit{Initialize}: ${D}_k = \tilde{D}_k$, ${W}_k=\tilde{W}_k$, $M_k=\tilde{M}_k$, $e^*=0$
 \While{$e^*<\epsilon$ and $M_k>0$}
 \For{$j=1\ldots {M}_k$}
 \State Find minimal error $e_j$ by solving \eqref{eqn_pruning_optimization}
 \EndFor
  \State Less informative element $j^* = \argmin_{j} e_j$ 
  \State Save error $e^* = e_{j^*}$
  \If{Error smaller than compression budget $e^* <\epsilon$}
  \State Prune Dict., ${D}_{k} \leftarrow {D}_{k}^{j^*} $, ${M}_k \leftarrow {M}_k-1$
  \State Update Weights as in \eqref{eqn_least_square_solution}
  $$
{\bbw}_k =  \bbK_{{D}^j_{k},{D}^j_{k}}^\dagger \bbK_{{D}^j_{k},\tilde{D}_{k}}\tilde{\bbw}_{k}
 $$
  \EndIf
  \EndWhile\\
 \Return ${D}_k,{\bbw}_k$ 
 \end{algorithmic}\label{algorithm_komp}
 \end{algorithm}
%
Problem \eqref{eqn_pruning_optimization} is a least-squares problem with the following closed-form solution 
\begin{equation}\label{eqn_least_square_solution}
\bbw^*_j = \bbK_{{D}^j_{k+1},{D}^j_{k+1}}^\dagger \bbK_{{D}^j_{k+1},\tilde{D}_{k+1}}\tilde{\bbw}_{k+1},
  \end{equation}
where, $\left(\cdot\right)^\dagger$ denotes the Moore-Penrose pseudo-inverse. After computing all compression errors $e_j$ we chose the dictionary element that yields the smallest error $j^* = \argmin_{j=1 \ldots M_{k+1}} e_j$, we remove the $j^*$-th column from the dictionary ${D}_{k+1}$, i.e., we redefine ${D}_{k+1} = {D}_{k+1}^{j^*}$ and the model order $M_{k+1} =M_{k+1}-1$ and update the corresponding weights as ${\bbw}_{k+1} = \bbw^*_{j^*}$. We repeat the process as long as the minimum compression error remains below the compression budget, i.e.,  $\min_{j=1 \ldots M_{k+1}} e_j<\epsilon$. The output of the pruning process is a function $h_{k+1}$ that is represent by at most the same number of elements than $\tilde{h}_{k+1}$ and such that the error introduced in this approximation is, by construction, smaller than the compression budget $\epsilon$. This output can be interpreted as a projection over a RKHS of smaller dimension. Let $D_{k+1}$ be the dictionary that Algorithm \ref{algorithm_komp} outputs. Then, the resulting policy can be expressed as 
\begin{equation}\label{eqn_proj_stochastic_ascent}
\hspace{-.1cm}h_{k+1}\hspace{-.1cm} =\hspace{-.1cm} \ccalP_{\ccalH_{D_{k+1}}}\hspace{-.1cm}\left[ \tilde{h}_{k+1}\right]\hspace{-.1cm} =\hspace{-.1cm} \ccalP_{\ccalH_{D_{k+1}}}\hspace{-.1cm}\left[h_{k}+\eta \hat{\nabla}_h U(h_k,\cdot)\right],
\end{equation}
where the operation $\ccalP_{\ccalH_{D_{k+1}}}\left[ \cdot \right]$ refers to the projection onto the RKHS spanned by the dictionary $D_{k+1}$. The algorithm described by \eqref{eqn_partial_grad_ascent} and \eqref{eqn_proj_stochastic_ascent} is summarized in Algorithm \ref{alg_proj_policy_gradient}. By projecting over a smaller subspace we control the model order of the policy $h_k$. However, the induced error translates into an estimation bias on the estimate of $\nabla_h U(h,\cdot)$ as we detail in the next proposition 

  \begin{algorithm}[t]
 \caption{Projected Stochastic Policy Gradient Ascent}
\begin{algorithmic}[1]
 \renewcommand{\algorithmicrequire}{\textbf{Input:}}
 \renewcommand{\algorithmicensure}{\textbf{Output:}}
 \Require step size $\eta$, compression budget $\epsilon$
 \State \textit{Initialize}: $h_0=0$
 \For{$k=0 \ldots$}
 \State Compute $\hat{\nabla}_h U(h_k,\cdot) =$ StochasticGradient($h_k$)
 \State Update policy via stochastic gradient ascent
 $$
 \tilde{h}_{k+1} = h_k +\eta \hat{\nabla}_h U(h_k,\cdot)
 $$
 \State Reduce model order $h_{k+1} =$ KOMP($\tilde{h}_{k+1},\epsilon$) 
 \EndFor
 \end{algorithmic}\label{alg_proj_policy_gradient}
 \end{algorithm}

\begin{proposition}\label{prop_error_bound}
  The update of Algorithm \ref{alg_proj_policy_gradient} is equivalent to running biased stochastic gradient ascent, with bias
\begin{equation}\label{eqn_bias}
b_k = \ccalP_{\ccalH_{D_{k+1}}}\left[h_{k}+\eta \hat{\nabla}_h U(h_k,\cdot)\right] -\left(h_{k}+\eta \hat{\nabla}_h U(h_k,\cdot)\right). 
  \end{equation}
  bounded by the compression budget $\varepsilon$ for all $k$.    
	%
\end{proposition}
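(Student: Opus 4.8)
The plan is to prove the statement in two stages: a definitional rewriting that exhibits the projected recursion as a biased ascent step, followed by a bound on the bias that rests entirely on the stopping rule of Algorithm~\ref{algorithm_komp}.

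First I would dispatch the equivalence, which requires no real work. Abbreviating the unprojected (ideal) ascent step by $\tilde{h}_{k+1} = h_k + \eta \hat{\nabla}_h U(h_k,\cdot)$, the projected update \eqref{eqn_proj_stochastic_ascent} is $h_{k+1} = \ccalP_{\ccalH_{D_{k+1}}}[\tilde{h}_{k+1}]$. Adding and subtracting $\tilde{h}_{k+1}$ and reading off the definition of $b_k$ in \eqref{eqn_bias} gives $h_{k+1} = h_k + \eta\hat{\nabla}_h U(h_k,\cdot) + b_k$, which is precisely a stochastic gradient ascent recursion perturbed by the additive term $b_k$. Hence the first claim holds by construction.

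The substantive part is the bound on $b_k$. Since $b_k = h_{k+1}-\tilde{h}_{k+1}$ with $h_{k+1}=\ccalP_{\ccalH_{D_{k+1}}}[\tilde{h}_{k+1}]$, the quantity $\|b_k\|_\ccalH^2$ is exactly the squared error of orthogonally projecting $\tilde{h}_{k+1}$ onto the final dictionary span $\ccalH_{D_{k+1}}$. I would use two facts about Algorithm~\ref{algorithm_komp}: the least-squares weight update \eqref{eqn_least_square_solution} realizes exactly this orthogonal projection, and every candidate error $e_j$ solved in \eqref{eqn_pruning_optimization} is measured against the fixed original iterate $\tilde{h}_{k+1}$ and not against the running pruned approximation. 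Consequently, after the last accepted removal the recorded value $e^*$ coincides with $\|b_k\|_\ccalH^2$, and the guard of the inner loop accepts a removal only when $e^*<\epsilon$; this yields $\|b_k\|_\ccalH^2 \le \epsilon$, the desired control by the compression budget. The degenerate case in which no element is ever pruned gives $D_{k+1}=\tilde{D}_{k+1}$, so $h_{k+1}=\tilde{h}_{k+1}$ and $b_k=0$, trivially satisfying the bound.

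The one point deserving care --- and what I regard as the main obstacle --- is ensuring that the error accumulated across several successive prunes remains below $\epsilon$, rather than only each individual increment. This is exactly where the reference to the original $\tilde{h}_{k+1}$ in \eqref{eqn_pruning_optimization} is essential: because the errors are never computed incrementally against the previous pruned function, the recorded $e^*$ always equals the total distance from $\tilde{h}_{k+1}$ to the current span. I would make this rigorous by noting that each prune replaces the current span by a strictly smaller subspace, so the optimal projection error is monotonically nondecreasing along the inner loop; the last accepted $e^*$ is therefore the largest, and bounding it by $\epsilon$ bounds the full cumulative bias. I expect this monotonicity observation, together with the identification of the least-squares step as an orthogonal projection, to be the crux of the argument.
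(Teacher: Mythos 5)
Your proposal is correct and follows essentially the same route as the paper: the add-and-subtract identity gives the biased recursion by construction, and the bound on $b_k$ is read off from the KOMP stopping rule, which you justify in more detail than the paper's one-line appeal to ``by definition $b_k$ is the error of the compression.'' One small caveat: since \eqref{eqn_pruning_optimization} defines $e_j$ as a \emph{squared} Hilbert norm, your argument literally yields $\|b_k\|_\ccalH^2<\epsilon$ rather than $\|b_k\|_\ccalH\le\epsilon$ --- an ambiguity already present in the paper's own statement and in its subsequent use in Lemma \ref{lemma_supermartingale_pars}, so it is worth flagging rather than a defect of your reasoning.
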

\begin{proof}
From \eqref{eqn_proj_stochastic_ascent} and adding and subtracting $\eta \hat{\nabla}_h U(h_k,\cdot)$, it is possible to write the difference $h_{k+1}-h_k$ as
\begin{equation}
  \begin{split}
    h_{k+1} - h_k  &=\ccalP_{\ccalH_{D_{k+1}}}\left[h_{k}+\eta \hat{\nabla}_h U(h_k,\cdot)\right] \\
    &-\left(h_{k}+ \eta \hat{\nabla}_h U(h_k,\cdot)\right) +\eta \hat{\nabla}_h U(h_k,\cdot).
  \end{split}
\end{equation}
Using the definition of the bias \eqref{eqn_bias} the previous expression can be written as
\begin{equation}\label{eqn_bk}
h_{k+1} = h_k +\eta \hat{\nabla}_h U(h_k,\cdot)+b_k.
\end{equation}
To complete the proof, notice that by definition $b_k$ is the error of the compression and thus its norm is bounded by the compression budget $\varepsilon$. 
  \end{proof}
As stated by the previous proposition the effect of introducing the KOMP algorithm is that of updating the policy by running gradient ascent, where now the estimate  is biased. Hence, we claim in the following result that Stochastic Policy Gradient Ascent (Algorithm \ref{alg_proj_policy_gradient}) converges to a neighborhood of a critical point of the expected discounted reward, whose size depends on the step-size of the algorithm as well as on compression error allowed. However, whereas the model order of the function obtained via stochastic gradient ascent without projection (Algorithm \ref{alg_policy_gradient}) could grow without bound, for the projected version we can ensure that the model order obtained is always bounded. We formalize these results next. 
\begin{theorem}\label{theo_parsimonious}
  Let $\eta>0$ and $\epsilon>0$ for all $k\geq 0$. Then there exists a constant $C:=C(\gamma,\eta,\epsilon,\Sigma,B_r,)$ such that 
  \begin{equation}\label{eqn_liminf}
    \liminf_{k\to\infty} \left\|\nabla_h U(h_k,\cdot)\right\|_{\ccalH} \leq \frac{\epsilon}{2\eta}+\frac{\sqrt{\epsilon^2+4\eta^3C}}{2\eta},
    \end{equation}
with probability one. Moreover, there exists a constant $M:= M(\epsilon)>0$ such that for every $k\geq 0$ the model order $M_k$ needed to represent the function $h_k$ is such that $M_k \leq M$.  
\end{theorem}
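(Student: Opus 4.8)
\emph{Overall approach.} The plan is to treat Algorithm \ref{alg_proj_policy_gradient} as biased stochastic gradient ascent, which Proposition \ref{prop_error_bound} licenses, and then carry out a descent-lemma argument parallel to Section \ref{sec_convergence} but with the bias and the constant step size kept explicit. First I would write the update as $h_{k+1} = h_k + \eta\hat{\nabla}_h U(h_k,\cdot) + b_k$ with $\|b_k\|_\ccalH \leq \epsilon$, and insert this into the second-order expansion underlying Lemma \ref{lemma_lipchitz}, namely $U(h_{k+1}) \geq U(h_k) + \langle \nabla_h U(h_k), h_{k+1}-h_k\rangle_\ccalH - L_1\|h_{k+1}-h_k\|_\ccalH^2 - L_2\|h_{k+1}-h_k\|_\ccalH^3$. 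Splitting the inner product into its stochastic-gradient and bias parts, applying Cauchy--Schwarz to the latter to get $\langle \nabla_h U(h_k), b_k\rangle_\ccalH \geq -\epsilon\|\nabla_h U(h_k,\cdot)\|_\ccalH$, and bounding $\|h_{k+1}-h_k\|_\ccalH \leq \eta\|\hat{\nabla}_h U(h_k,\cdot)\|_\ccalH + \epsilon$, I would take the conditional expectation with respect to $\ccalF_k$. Unbiasedness (Proposition \ref{prop_unbiased_grad}) turns the leading inner product into $\eta\|\nabla_h U(h_k,\cdot)\|_\ccalH^2$, while the moment bounds of Lemma \ref{lemma_stochastic_gradient} collapse the remaining quadratic and cubic terms into one finite constant, yielding
\begin{equation}
\mathbb{E}\left[U(h_{k+1})\big|\ccalF_k\right] \geq U(h_k) + \eta\left\|\nabla_h U(h_k,\cdot)\right\|_\ccalH^2 - \epsilon\left\|\nabla_h U(h_k,\cdot)\right\|_\ccalH - \eta^2 C,
\end{equation}
where $C=C(\gamma,\eta,\epsilon,\Sigma,B_r)$ absorbs $L_1,L_2$ and the bounded moments $\sigma$.

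\emph{The liminf bound.} Setting $g_k := \|\nabla_h U(h_k,\cdot)\|_\ccalH$ and $q(g):=\eta g^2 - \epsilon g - \eta^2 C$, the positive root of $q$ is $g_+ = \tfrac{\epsilon}{2\eta} + \tfrac{\sqrt{\epsilon^2 + 4\eta^3 C}}{2\eta}$, exactly the right-hand side of \eqref{eqn_liminf}, and $q(g)>0$ for $g>g_+$. I would then argue by contradiction: on the event $A=\{\liminf_k g_k > g_+\}$ there exist sample-dependent $\delta>0$ and $N$ with $q(g_k) \geq \alpha := q(g_++\delta)>0$ for all $k\geq N$. Writing the pathwise inequality as $U(h_{k+1}) - U(h_k) \geq q(g_k) + \xi_k$, where $\xi_k$ collects the mean-zero fluctuations (the stochastic-gradient error $\eta\langle \nabla_h U(h_k),e_k\rangle_\ccalH$ and the centered second- and third-order terms, so that $\mathbb{E}[\xi_k|\ccalF_k]=0$), and summing to step $K$, boundedness of $U$ (Lemma \ref{lemma_bounded}, with upper bound $U_{\max}$) forces $\sum_{k=0}^{K-1}q(g_k) \leq U_{\max}-U(h_0)-M_K$, with $M_K=\sum_{k<K}\xi_k$. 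On $A$ the left side grows at least like $\alpha K$, so $M_K$ would have to satisfy $M_K \leq -\alpha K + O(1)$.

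\emph{The main obstacle.} This last step is where the constant-step-size analysis genuinely departs from Section \ref{sec_convergence}: because $\eta$ does not vanish, $\sum_k\xi_k$ is no longer a convergent martingale (its increments have non-summable conditional variances), so the martingale-convergence argument used for the unbiased diminishing-step case is unavailable. Instead I would control $M_K$ with the strong law for martingales. The construction (bounded rewards, geometric stopping time, Gaussian policy with fixed $\Sigma$) makes every moment of $\hat{\nabla}_h U(h_k,\cdot)$ finite and uniformly bounded, hence $\mathbb{E}[\xi_k^2|\ccalF_k]$ is bounded, $\sum_k k^{-2}\mathbb{E}[\xi_k^2|\ccalF_k]<\infty$, and therefore $M_K/K \to 0$ almost surely. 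This contradicts $M_K \leq -\alpha K + O(1)$ on $A$, so $P(A)=0$ and \eqref{eqn_liminf} holds with probability one.

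\emph{Bounded model order.} For the second claim I would invoke the stopping rule of Algorithm \ref{algorithm_komp} together with compactness of $\ccalS$. The pruning loop halts only when removing any single dictionary element would incur squared error at least $\epsilon$, so every retained center $s_j^{(k)}$ lies at distance at least $\sqrt{\epsilon}$ in $\ccalH$ from the span of the remaining centers; that is, the unit-norm kernel sections $\{\kappa(s_j^{(k)},\cdot)\}$ are $\sqrt{\epsilon}$-approximately linearly independent. Since all centers lie in the compact set $\ccalS$ and $\kappa$ is continuous, a covering-number argument on the image of $\ccalS$ under the feature map shows only finitely many such mutually informative centers can coexist, giving a bound $M_k \leq M(\epsilon)$ uniform in $k$; this is the parsimony guarantee of \cite{koppel2017breaking} specialized to our dictionary, and because each iteration appends at most one center before pruning, the bound is preserved for all $k$.
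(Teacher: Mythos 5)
Your proposal is correct, and the first half coincides with the paper: the one-step inequality you display is exactly Lemma \ref{lemma_supermartingale_pars} (equation \eqref{eqn_supermartingale_equation_pars}), with the same constant $C$ in \eqref{eqn_bound_for_radius_convergence}, obtained the same way from Proposition \ref{prop_error_bound}, the Taylor/Lipschitz bound of Lemma \ref{lemma_lipchitz}, Cauchy--Schwarz on the bias term, and the moment bounds of Lemma \ref{lemma_stochastic_gradient}. Where you genuinely diverge is in how you extract \eqref{eqn_liminf} from that drift inequality. The paper introduces the first hitting time $N$ of the neighborhood \eqref{neigborhood_inequality}, shows $V_k=(U(h^*)-U(h_k))\mathbbm{1}(k\leq N)$ is a nonnegative supermartingale, sums the stopped drift $W_k$ via monotone and dominated convergence, and then splits on $\{N<\infty\}$ versus $\{N=\infty\}$. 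You instead run a pathwise contradiction on $\{\liminf_k\|\nabla_hU(h_k,\cdot)\|>g_+\}$ and control the accumulated martingale fluctuation with the strong law for martingales ($M_K/K\to 0$ a.s. from bounded conditional variances). Your route buys two things: it avoids the stopping-time bookkeeping altogether, and it delivers the $\liminf$ statement directly, whereas the paper's argument as written only tracks the \emph{first} entry into the neighborhood and implicitly needs to be restarted after each visit to upgrade ``visited once'' to a $\liminf$ bound. The paper's route, in exchange, needs only the second and third moments of $\hat{\nabla}_hU$ that Lemma \ref{lemma_stochastic_gradient} actually provides.

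That last point is the one caveat you should make explicit: your fluctuation term $\xi_k$ contains the centered versions of $\|h_{k+1}-h_k\|_{\ccalH}^2$ and $\|h_{k+1}-h_k\|_{\ccalH}^3$, so bounding $\mathbb{E}[\xi_k^2\,|\,\ccalF_k]$ requires the (uniformly bounded) sixth moment of $\|\hat{\nabla}_hU(h_k,\cdot)\|_{\ccalH}$, which the paper never establishes. It does hold --- the estimate factors into moments of $(T_Q+T_Q')$ and of a chi-squared variable, both of which have all moments finite, uniformly in $h$ --- but you should state and prove that extension of Lemma \ref{lemma_stochastic_gradient} rather than assert that ``every moment is finite.'' For the model-order claim the paper simply cites \cite{koppel2016parsimonious}; your covering-number sketch is precisely the content of that citation and is at a comparable level of detail.
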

\begin{proof}
The proof of this result is the matter of Section \ref{sec_convergence_pars}.
  \end{proof}
%
%
Observe that the optimal selection is $\epsilon=O(\eta^{3/2})$ in the sense that selecting
a smaller compression factor, the total error bound is of $O(\eta^{3/2})$. In that sense, such selection is not optimal, because we force a small compression error -- which entails larger model order -- and there is no benefit in terms of the convergence error. Then the parameter $\eta$ is to be chosen trading-off accuracy for speed of convergence.


\section{Convergence Analysis of Sparse Policy Gradient}\label{sec_convergence_pars}

This section contains the proof of Theorem \ref{theo_parsimonious}. It starts by providing a lower bound on the expectation of random variables $U(h_{k+1})$ conditioned to the sigma field $\ccalF_k$ 
\begin{lemma}\label{lemma_supermartingale_pars}
   The sequence of random variables $U(h_k)$ satisfies the following inequality 
\begin{equation}\label{eqn_supermartingale_equation_pars}
    \begin{split}
      \mathbb{E}\left[{U(h_{k+1})|\ccalF_k}\right] &\geq  U(h_k)- \eta^2 C\\
      &\hspace{-1cm}+\eta\left\|\nabla_h U(h_k)\right\|_\ccalH\left(\left\|\nabla_h U(h_k)\right\|_\ccalH -\frac{\epsilon}{\eta}\right), 
\end{split}
\end{equation}
where $C$ is the following positive constant
\begin{equation}\label{eqn_bound_for_radius_convergence}
C = L_1\left(\sigma^2+2\frac{\epsilon}{\eta}\sigma +\frac{\epsilon^2}{\eta^2}\right)+\eta L_2\left(\sigma^2+2\frac{\epsilon}{\eta}\sigma +\frac{\epsilon^2}{\eta^2}\right)^{3/2},
\end{equation}
where  $L_1$ and $L_2$ are the constants defined in Lemma \ref{lemma_lipchitz} and $\sigma$ is the constant defined in 
Lemma \ref{lemma_stochastic_gradient}.
  \end{lemma}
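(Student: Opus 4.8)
The plan is to follow the same template as the proof of Lemma \ref{lemma_supermartingale}, but to carry along the projection bias isolated in Proposition \ref{prop_error_bound}. First I would write the first-order Taylor expansion of $U(h_{k+1})$ about $h_k$, add and subtract $\left<\nabla_h U(h_k),h_{k+1}-h_k\right>_{\ccalH}$, and apply Cauchy--Schwartz together with the Lipschitz-type estimate of Lemma \ref{lemma_lipchitz} to obtain the penalized lower bound
\begin{equation}
U(h_{k+1}) \geq U(h_k)+\left<\nabla_h U(h_k),h_{k+1}-h_k\right>_{\ccalH}-L_1\left\|h_{k+1}-h_k\right\|_{\ccalH}^2-L_2\left\|h_{k+1}-h_k\right\|_{\ccalH}^3 .
\end{equation}
The only structural difference from the unbiased case is that now, by Proposition \ref{prop_error_bound}, the increment is $h_{k+1}-h_k=\eta\hat{\nabla}_h U(h_k,\cdot)+b_k$ with $\left\|b_k\right\|_{\ccalH}\leq\epsilon$, so I must track the extra $b_k$ in both the linear term and the moment penalties.

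Next I would take the conditional expectation with respect to $\ccalF_k$, using that $h_k$ is $\ccalF_k$-measurable. For the linear term, unbiasedness of the stochastic gradient (Proposition \ref{prop_unbiased_grad}) gives $\mathbb{E}[\hat{\nabla}_h U(h_k,\cdot)\,|\,\ccalF_k]=\nabla_h U(h_k)$, hence
\begin{equation}
\left<\nabla_h U(h_k),\mathbb{E}\left[h_{k+1}-h_k\,|\,\ccalF_k\right]\right>_{\ccalH}=\eta\left\|\nabla_h U(h_k)\right\|_{\ccalH}^2+\left<\nabla_h U(h_k),\mathbb{E}\left[b_k\,|\,\ccalF_k\right]\right>_{\ccalH}.
\end{equation}
Applying Cauchy--Schwartz to the last inner product and using $\left\|\mathbb{E}[b_k\,|\,\ccalF_k]\right\|_{\ccalH}\leq\epsilon$ lower-bounds it by $-\epsilon\left\|\nabla_h U(h_k)\right\|_{\ccalH}$, which is exactly the source of the $-\epsilon/\eta$ offset appearing in \eqref{eqn_supermartingale_equation_pars}.

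It remains to bound the two penalty moments. I would use the triangle inequality $\left\|h_{k+1}-h_k\right\|_{\ccalH}\leq\eta\left\|\hat{\nabla}_h U(h_k,\cdot)\right\|_{\ccalH}+\epsilon$ and the moment estimates of Lemma \ref{lemma_stochastic_gradient}, namely $\mathbb{E}[\|\hat{\nabla}_h U(h_k,\cdot)\|_{\ccalH}^m\,|\,\ccalF_k]\leq\sigma^m$ for $m=1,2,3$ (the first moment following by Jensen), to get $\mathbb{E}[\|h_{k+1}-h_k\|_{\ccalH}^2\,|\,\ccalF_k]\leq(\eta\sigma+\epsilon)^2$ and $\mathbb{E}[\|h_{k+1}-h_k\|_{\ccalH}^3\,|\,\ccalF_k]\leq(\eta\sigma+\epsilon)^3$ after expanding the binomials termwise. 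The key algebraic observation is the factorization $\eta^2\!\left(\sigma^2+2\tfrac{\epsilon}{\eta}\sigma+\tfrac{\epsilon^2}{\eta^2}\right)=(\eta\sigma+\epsilon)^2$ and $\eta^3\!\left(\sigma^2+2\tfrac{\epsilon}{\eta}\sigma+\tfrac{\epsilon^2}{\eta^2}\right)^{3/2}=(\eta\sigma+\epsilon)^3$, so that $L_1(\eta\sigma+\epsilon)^2+L_2(\eta\sigma+\epsilon)^3=\eta^2 C$ with $C$ as in \eqref{eqn_bound_for_radius_convergence}. Collecting the linear term and the two penalties then yields \eqref{eqn_supermartingale_equation_pars}.

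I expect the main obstacle to be purely a matter of bookkeeping rather than a genuine difficulty: one must correctly propagate the bias $b_k$ through both the inner-product term (producing the $-\epsilon\|\nabla_h U(h_k)\|_{\ccalH}$ contribution) and the second/third moments (inflating $\sigma$ to $\sigma+\epsilon/\eta$), and then recognize the perfect-square/perfect-cube factorization that lets everything consolidate into the single constant $\eta^2 C$. The one point requiring mild care is that Lemma \ref{lemma_stochastic_gradient} must control the moments of $\hat{\nabla}_h U(h_k,\cdot)$ up to order three, so that the cubic term $(\eta\sigma+\epsilon)^3$ is legitimately an upper bound; this is consistent with the almost-sure estimate $\|\hat{\nabla}_h U(h_k,\cdot)\|_{\ccalH}\leq \tfrac{B_r}{(1-\gamma)^2}\|\Sigma^{-1}(a-h(s))\|$ established in \eqref{eqn_bound_kappa_Q}, whose Gaussian factor has finite moments of every order.
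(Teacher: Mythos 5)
Your proposal is correct and follows essentially the same route as the paper: Taylor expansion with the Lipschitz-type bound of Lemma \ref{lemma_lipchitz}, the decomposition $h_{k+1}-h_k=\eta\hat{\nabla}_h U(h_k,\cdot)+b_k$ with $\|b_k\|\le\epsilon$ from Proposition \ref{prop_error_bound}, Cauchy--Schwartz on the bias cross-term to produce the $-\epsilon\|\nabla_h U(h_k)\|_{\ccalH}$ offset, and the moment bounds of Lemma \ref{lemma_stochastic_gradient} to consolidate the quadratic and cubic penalties into $\eta^2 C$. Your $(\eta\sigma+\epsilon)^m$ factorization is just a repackaging of the paper's termwise expansion of $\eta^2\left(\sigma^2+2\tfrac{\epsilon}{\eta}\sigma+\tfrac{\epsilon^2}{\eta^2}\right)$ and its $3/2$ power, so the two arguments coincide.
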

\begin{proof}
  As in the proof of Lemma \ref{lemma_supermartingale} we can lower bound $\mathbb{E}[U(h_{h+1})|\ccalF_k]$ by (cf., \eqref{eqn_first_supermartingale_bound})
  \begin{equation}\label{eqn_first_supermartingale_bound_pars}
    \begin{split}
      \mathbb{E}\left[{U(h_{k+1})|\ccalF_k}\right] \geq U(h_k)+\left<\nabla_h U(h_k),\mathbb{E}\left[h_{k+1}-h_k|\ccalF_k\right]\right>_{\ccalH} \\
      -L_1\mathbb{E}\left[\left\|h_{k+1}-h_k\right\|_{\ccalH}^2|\ccalF_k\right]
      -L_2\mathbb{E}\left[\left\|h_{k+1}-h_k\right\|_{\ccalH}^3|\ccalF_k\right].
      \end{split}
    \end{equation}

  Substitute \eqref{eqn_bk} for $h_{k+1}$  to write the expectation of the quadratic term in the right hand side of \eqref{eqn_first_supermartingale_bound_pars} as 
  \begin{align*}
         L_1\mathbb{E}\left[\left\|h_{k+1}-h_k\right\|_{\ccalH}^2|\ccalF_k\right] 
     \leq  L_1\left(\eta^2 \mathbb{E}\left[\left\|\hat{\nabla}_h U(h_k,\cdot)\right\|_{\ccalH}^2|\ccalF_k\right]\right.\\
		\left.+2\eta\epsilon\mathbb{E}\left[\left\|\hat{\nabla}_h U(h_k,\cdot)\right\|_{\ccalH}|\ccalF_k\right]+\epsilon^2 \right),
 \end{align*}
  where we have used that $\|b_k\|\leq \epsilon$ as stated in Proposition \ref{prop_error_bound}. Using the bounds provided in  Lemma \ref{lemma_stochastic_gradient}, the previous expression can be upper bounded by
  \begin{equation}\label{eleuno}
    L_1\mathbb{E}\left[\left\|h_{k+1}-h_k\right\|_{\ccalH}^2|\ccalF_k\right] \leq \eta^2L_1\left(\sigma^2+2\frac{\epsilon}{\eta}\sigma +\frac{\epsilon^2}{\eta^2}\right).
    \end{equation}
  With a similar procedure we obtain
  \begin{equation}\label{eledos}
    \begin{split}
      L_2\mathbb{E}\left[\left\|h_{k+1}-h_k\right\|_{\ccalH}^3|\ccalF_k\right] \leq \eta^2 \eta_0L_2\left(\sigma^2+2\frac{\epsilon}{\eta}\sigma +\frac{\epsilon^2}{\eta^2}\right)^{3/2}.
      \end{split}
    \end{equation}
  Observe that the sum of \eqref{eleuno} and \eqref{eledos} is equal to $\eta^2C$ in \eqref{eqn_bound_for_radius_convergence}. Then, substitute \eqref{eleuno} and \eqref{eledos} in  \eqref{eqn_first_supermartingale_bound_pars} to obtain
  \begin{equation}
    \begin{split}\label{eqn_martigale_innerproduct}
      \mathbb{E}\left[{U(h_{k+1})|\ccalF_k}\right] &\geq U(h_k)-C\eta^2\\
&      +\left<\nabla_h U(h_k),\mathbb{E}\left[h_{k+1}-h_k|\ccalF_k\right]\right>_{\ccalH}. 
\end{split}
    \end{equation}
  Finally, \eqref{eqn_supermartingale_equation_pars} results from applying the Cauchy-Schwartz inequality   to the inner product in \eqref{eqn_martigale_innerproduct} and then substituting \eqref{eqn_bk} for $h_{k+1}$, with $\|b_k\|\leq \epsilon$. 

  %
  %
  %
  \end{proof}
The previous Lemma establishes a lower bound on the expectation of $U(h_{k+1})$ conditioned to the sigma algebra $\ccalF_k$. This lower bound however, is not enough for $U(h_k)$ to be a submartingale, since the sign of the term added to $U(h_k)$ in the right hand side of \eqref{eqn_supermartingale_equation_pars} depends on the norm of $\nabla_h U(h_k)$. This is in contrast with the situation in Lemma \ref{lemma_supermartingale}, where the term was always positive. The origin of this issue lies on the bias introduced by the sparsification. However, when the norm of the gradient is large the term is negative and we have a submartingale outside of a neighborhood of the critical points. To formalize this idea let us define the neighborhood as
 \begin{equation}\label{neigborhood_inequality}
\left\|\nabla_h U(h_k,\cdot)\right\|_{\ccalH} \leq \frac{\epsilon}{2\eta}+\frac{\sqrt{\epsilon^2+4\eta^3C}}{2\eta},
\end{equation}
 and the corresponding stopping time
%
  %
  %
  %
  %
  \begin{equation}
N = \min_{k\geq 0} \left\{\left\|\nabla_h U(h_{k},\cdot)\right\|_{\ccalH} \leq \frac{\epsilon}{2\eta}+\frac{\sqrt{\epsilon^2+4\eta^3C}}{2\eta}  \right\}.
    \end{equation}
  
	In order to prove  \eqref{eqn_liminf} we will argue that either the limit exists and satisfies the bound in \eqref{eqn_liminf}, or $P(N<\infty)=1$, in which case \eqref{neigborhood_inequality} must be recursively satisfied after a finite number of iterations so that  \eqref{eqn_liminf} holds.  
  In this direction we define $V_k=\left(U(h^*)-U(h_{k})\right)\mathbbm{1}(k\leq N)$, with $\mathbbm{1}(\cdot)$ being the indicator function, and prove that $V_k$ is a non-negative submartingale. Indeed, since $U(h^*)$ maximizes $U(h)$, $V_k$ is always non-negative. In addition $V_k\in \ccalF_k$ since $U(h_k)\in\ccalF_k$ and $\mathbbm{1}(k-1\leq N)\in \ccalF_{k}$. To show that $\mathbb{E}[V_{k+1} | \ccalF_k] \leq V_k$ start by using that $\mathbbm{1}(k\leq N)\in\ccalF_{k}$ and write
  \begin{equation}
\mathbb{E}\left[V_{k+1} | \ccalF_k\right]  =  \mathbbm{1}(k+1\leq N) \mathbb{E}\left[U(h^*)-U(h_{k+1}) | \ccalF_k\right]. 
    \end{equation}
  Using \eqref{eqn_supermartingale_equation_pars} and defining the following variable
  \begin{align}
W_k &:= \eta \left\|\nabla_hU(h_k,\cdot) \right\|_{\ccalH}^2 - \epsilon \left\|\nabla_hU(h_k,\cdot) \right\|_{\ccalH} - \eta^2 C   \label{eqn_def_wk},
  \end{align}
  we can upper bound $\mathbb{E}\left[V_{k+1} | \ccalF_k\right]$ as
  \begin{align}\label{eqn_supermartingale_equation_pars_stop}
  \mathbb{E}\left[V_{k+1} | \ccalF_k\right]  \leq  \mathbbm{1}(k+1\leq N)\left(  \left(U(h^*)-\hspace{-.1cm}U(h_{k})\right)- W_k\right).
\end{align}
Notice that the bound in \eqref{neigborhood_inequality} is root of \eqref{eqn_def_wk} as a polynomial in the variable $\left\|\nabla_hU(h_k,\cdot) \right\|$. It follows that   $W_k>0$ as long as $k< N$, so that $\mathbbm{1}(k+1 \leq N)W_k \geq 0$ for all $k$. Also notice that the indicator function $\mathbbm{1}(k\leq N)$ is non-increasing with $k$, so that $\mathbbm{1}(k+1 \leq N)\leq  \mathbbm{1}(k+1 \leq N)$. Using these two facts, it follows from \eqref{eqn_supermartingale_equation_pars_stop} that $\mathbb{E}[V_{k+1} | \ccalF_k] \leq V_k$. Thus, $V_k$ is a nonnegative submartingale and therefore it converges to random variable $V$ such that $\mathbb{E}[V]\leq \mathbb{E}[V_0]$ (see e.g., \cite[Theorem 5.29]{durrett2010probability}).
Rearranging the terms in \eqref{eqn_supermartingale_equation_pars_stop} and considering the total expectation we have that 
\begin{equation}\label{eqn_supermartingale_equation_pars_before_limit}
  \begin{split}
   \mathbb{E}\left[\sum_{j = 0}^{k} \mathbbm{1}(j < N) W_k  \right]    \leq \mathbb{E}[V_0] -\mathbb{E}\left[ V_{k+1} \right] .
   \end{split}
    \end{equation}
 Again, by definition of the stopping time $N$, $\mathbbm{1}(k<N) W_k$ is nonnegative, and thus the sequence of random variables
 %
 $\sum_{j = 0}^{k}\mathbbm{1}(j < N)W_j$,
 %
 is monotonically increasing. Hence, the Monotone Convergence Theorem (see e.g.,\cite[Theorem 1.6.6]{durrett2010probability}) allows us to write 
 \begin{equation}\label{eqn_supermartingale_equation_pars_limit1}
   \begin{split}
     \lim_{k\to\infty} \mathbb{E}\left[\sum_{j = 0}^{k}\mathbbm{1}(j < N)W_j \right] =\mathbb{E}\left[\sum_{j = 0}^{\infty}\mathbbm{1}(j < N)W_j  \right].
\end{split}
   \end{equation}
On the other hand, $U(h_k)$ is bounded according to Lemma \ref{lemma_bounded}, thus  $V_k$ is a bounded sequence and then we use the Dominated Convergence Theorem (see e.g. \cite[Theorem 1.6.7]{durrett2010probability}) to obtain 
  \begin{equation}\label{eqn_supermartingale_equation_pars_limit2}
\mathbb{E}[V] = \mathbb{E}[\lim_{k\to\infty} V_k] = \lim_{k\to \infty} \mathbb{E}[V_k].
    \end{equation}
  Taking the limit of $k$ going to infinity in both sides of \eqref{eqn_supermartingale_equation_pars_before_limit} and using \eqref{eqn_supermartingale_equation_pars_limit1} and \eqref{eqn_supermartingale_equation_pars_limit2} we have that 
  \begin{equation}\begin{split}
    \mathbb{E}\left[\sum_{j = 0}^{\infty}\mathbbm{1}(j < N)W_j  \right] \leq \mathbb{E}[V_0]-\mathbb{E}[V] < \infty.
    \end{split}
    \end{equation}
  Observe that the expectation on the left hand side of the previous expression can be computed as
  \begin{equation}
    \begin{split}
      P(N\hspace{-.1cm}<\infty)      \mathbb{E}\hspace{-.1cm}\left[\sum_{j = 0}^{N-1}W_j  \Big|_{ N<\infty} \right] 
      \hspace{-.1cm}+\hspace{-.1cm} P(N=\infty)\mathbb{E}\hspace{-.1cm}\left[\sum_{j = 0}^{\infty}W_j \Big|_{N=\infty} \right]. 
\end{split}
    \end{equation}
 By virtue of Lemma \ref{lemma_bounded_gradient}, $\left\| \nabla_h U(h,\cdot) \right\|$ is uniformly bounded for all $h\in\ccalH$. Thus, the first sum  in the previous expression is finite. Hence, 
  \begin{equation} 
P(N=\infty)      \mathbb{E}\left[\sum_{j = 0}^{\infty}W_j\Big| N=\infty  \right]  <\infty.
  \end{equation}
  The latter can only hold if $P(N=\infty) = 0$ or if the expectation of the sum is bounded. If the former happens it means that infinitely often $\left\| \nabla_h U(h,\cdot) \right\|$  visits the neighborhood  \eqref{neigborhood_inequality},   and thus  \eqref{eqn_liminf} holds. It remains to analyze the case where the expectation of the sum is finite. Using the Monotone Convergence Theorem one can exchange the expectation with the sum and therefore we have that
  %
$\sum_{j = 0}^{\infty}    \mathbb{E}\left[W_j\Big| N=\infty  \right]  <\infty,$
  %
  which implies that $\lim_{k\to\infty} \mathbb{E}[W_k|N=\infty] =0$. Thus 
  \begin{equation}
    \lim_{k\to\infty}\mathbb{E}\left[\left(\eta\left\|\nabla_h U(h_{k},\cdot)\right\|^2-\left\| \nabla_h U(h_{k},\cdot)\right\|\epsilon-\eta^2 C\right)\right] = 0.
    \end{equation}
  Moreover, because the norm of the gradient is bounded, the Dominated Convergence Theorem allows us to write
    \begin{equation}
    \mathbb{E}\left[\lim_{k\to\infty}\left(\eta\left\|\nabla_h U(h_{k},\cdot)\right\|^2-\left\| \nabla_h U(h_{k},\cdot)\right\|\epsilon-\eta^2 C\right)\right] = 0.
    \end{equation}
    Because the random variable is nonnegative it must hold that 
  \begin{equation}
\lim_{k\to\infty} \left\|\nabla_h U(h_k,\cdot)\right\|_{\ccalH} = \frac{\epsilon}{2\eta}+\frac{\sqrt{\epsilon^2+4\eta^3C}}{2\eta}.
    \end{equation}
  Thus, \eqref{eqn_liminf} holds as well if $P(N=\infty)>0$. The proof that the model order of the representation is bounded for all $k$ is identical to that in \cite[Theorem 3]{koppel2016parsimonious}.
%


\section{Numerical Experiments}\label{sec_numerical}
In sections \ref{sec_mountain_car} and \ref{sec_cartpole} we test 
Algorithm \ref{alg_proj_policy_gradient} in the problems of the mountain car and the cartpole. 
\subsection{Mountain Cart}\label{sec_mountain_car}
We benchmarked Stochastic Projected Policy Gradient Ascent on a classic control problem, the Continuous Mountain Car \cite{argyriou2009there}, which is featured in OpenAI Gym \cite{mountaincar}. In this problem, the $n=2$ dimensional state space consists of position and velocity, bounded within $[-1.2, 0.6]$ and $[-0.07, 0.07]$, respectively. The action space is a scalar representing the real valued force on the car. The reward function is 100 when the car reaches the goal at position $0.6$, and in every episode it subtracts $0.1\sum_{t=t_0}^{t_f} a_t^2$, where $a_t$ are the actions selected. Because of the penalization of the actions, in the space of policies there are local maxima around policies that keep the car stationary in order to realize roughly zero reward. In order to avoid converging to such policy, we set $h_0$ to have kernels at $(0.65,-0.02)$ and $(-0.35,0.02)$ with respective weights $0.5$ and $-0.5$. In particular, we work with Gaussian kernels, that are nonsymmetric due to the difference in the scales of position and velocities attained by the mountain cart. Their covariance matrix is given by $\diag( [0.15, 0.015])$. 
\begin{figure}
  \centering
    \includegraphics[width=1\linewidth]{./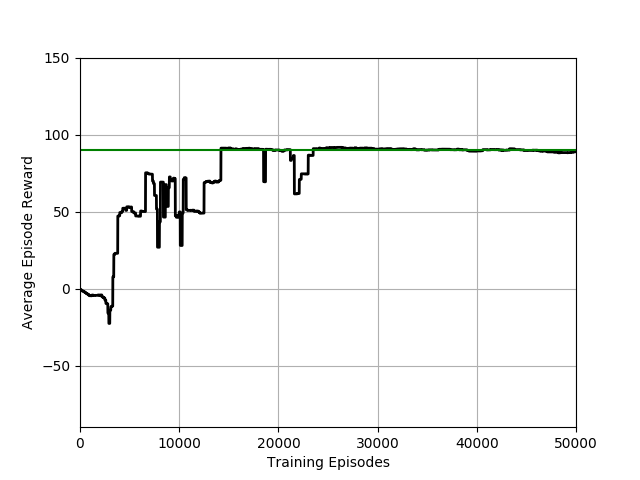}\\
    \includegraphics[width=1\linewidth]{./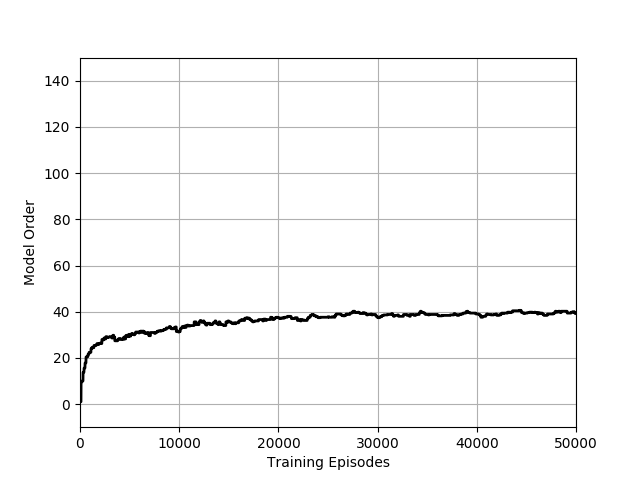}
    \caption{Result of representative run of Algorithm \ref{alg_proj_policy_gradient} over 50000 Continuous Mountain Car episodes.  The top figure shows the average reward obtained by the policy --showed in Figure \ref{fig_policy}-- after each training step (episode).  An average reward over 90 (green) indicates that we have solved the problem, reaching the goal location. The bottom figure shows the model complexity (number of Dictionary elements) during training remains bounded.}\label{fig_mountain_car}
\end{figure}
\begin{figure}
	\centering
		\includegraphics[width=1.0 \linewidth]{./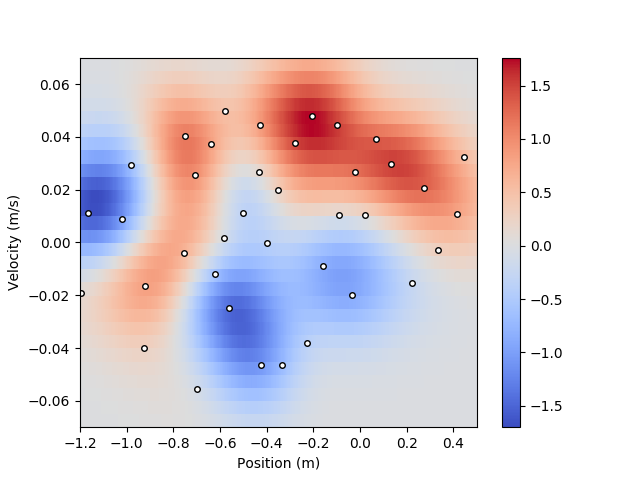}
	\caption{Learned policy for Continuous Mountain Car after 50000 episodes.}\label{fig_policy}
\end{figure}
%
The results obtained with Algorithm \ref{alg_proj_policy_gradient} for the following parameters: $\gamma =0.999$, $\Sigma = 1.3$, $\eta = 0.0005$ and  $\epsilon=0.000335$ are given in Figs. \ref{fig_mountain_car} and \ref{fig_policy}. Fig. \ref{fig_mountain_car} shows the average reward during training (top), and the model order (bottom). The policy learned  after 50000 training samples is given in Fig. \ref{fig_policy}. From Fig. \ref{fig_mountain_car} we can observe that the policy converges to a solution that solves the problem in about 25000 samples.  
The challenge in the mountain car is that it is not possible to escape the valley by just accelerating to the right. Hence, the optimal policy needs to be such that the cart oscillates to increase its velocity. In particular, in Fig. \ref{fig_policy} we observe that for positive speeds the acceleration is mostly positive, while when the speed is negative, so is the force.

In contrast to other Kernel based RL algorithms, such as \cite{tolstayanonparametric}, ours manages to significantly reduce the  computational complexity by only updating the dictionary after a sequence of actions. In practice, our algorithm performs cheap actions (as measured by time and computational complexity) in order to perform relatively few computationally intensive learning steps. In particular, the most costly subroutine is KOMP (Algorithm \ref{algorithm_komp}) and we resource to it only once per episode.

\subsection{Cartpole Problem}\label{sec_cartpole}

We also tested the algorithm for the Cartpole problem, featured in OpenAI Gym \cite{cartpole}. In this case the state space is $n=4$ dimensional, consisting of position and velocity of the cart, and angle and angular speed of the pole. The position and the angle are bounded respectively within $[-2.4, 2.4]$ and $[-41.8^\circ, 41.8^\circ]$ while the velocity and the angular velocity are unbounded. The action space is either to apply a fixed force either to the left or to the right of the cart.  A successful trial is one in which the pole is kept vertical within $\pm 0.5^\circ$ for more than 195 steps. As in the previous section, we work with Gaussian kernels, that are nonsymmetric due to the difference in the scales of position and velocities attained by the mountain cart. Their covariance matrix is given by $\diag( [0.3, 0.1, 0.1, 0.1])$. 
\begin{figure}
  \centering
    \includegraphics[width=1\linewidth]{./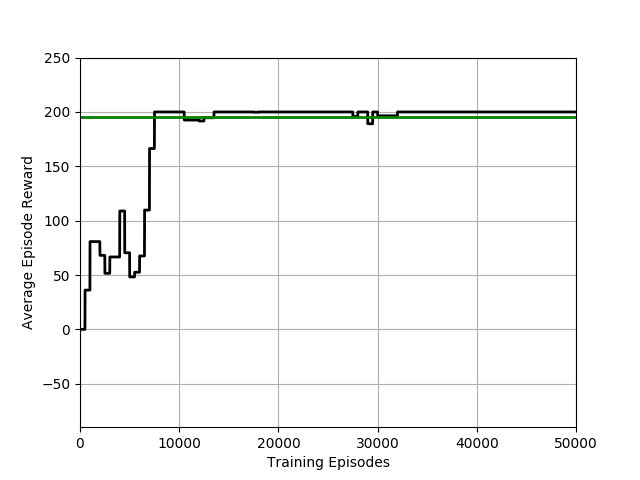}\\
    \includegraphics[width=1\linewidth]{./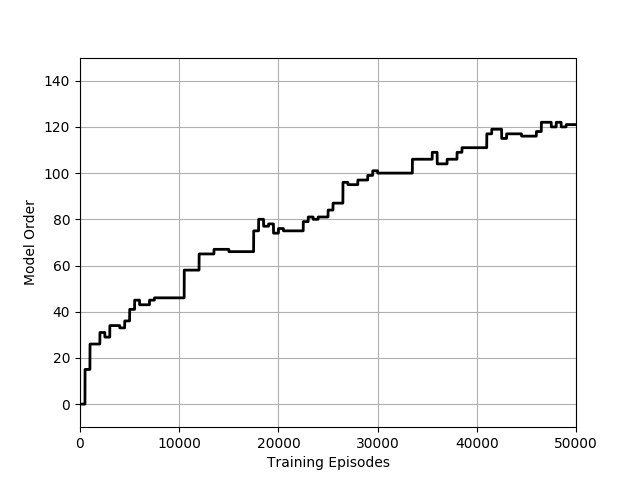}
    \caption{Result of representative run of Algorithm \ref{alg_proj_policy_gradient} over 50000 Cartpole episodes.  The top figure shows the average reward obtained by the policy --showed in Figure \ref{fig_policy_cartpole}-- after each training step (episode).  An average reward over 195 (green) indicates that we have solved the problem, holding the pole vertical for long enough. The bottom figure shows the model complexity (number of Dictionary elements) during training remains bounded.}\label{fig_cartpole}
\end{figure}
\begin{figure}
  \centering
    \includegraphics[width=1\linewidth]{./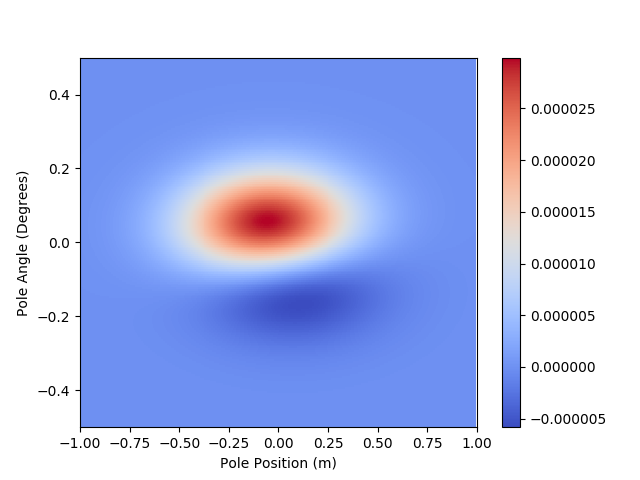}\\
    \includegraphics[width=1\linewidth]{./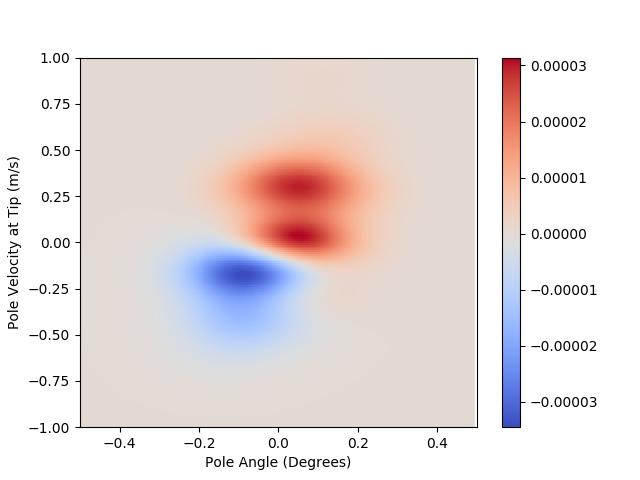}
    \caption{Learned policy for the cartpole problem after 50000 training episodes. In the above figure we observe the policy for the position and angle. The latter states, that if the pole tilts right --positive angles-- the cart should accelerate to the right. Likewise, if the pole tilts left, the cart needs to accelerate left. In the figure below We observe that the policy is such that for positive angles and angular speeds, the cart will accelerate to the right. These policies are intuitive from a physics standpoint. }\label{fig_policy_cartpole}
\end{figure}
%
%
The results obtained with Algorithm \ref{alg_proj_policy_gradient} for the following paramters: $\gamma =0.95$, $\Sigma = 0.01$, $\eta = 0.00005$ and  $\epsilon=8.839 10^{-7}$ are given in figures \ref{fig_cartpole} and \ref{fig_policy_cartpole}. In the former, we plot the average reward during training (top figure), and the model order (bottom figure). The policy learned from this experiment is given in Figure \ref{fig_policy_cartpole}, where we plot the policy learned after after 50000 iterations. From \ref{fig_cartpole} we can observe that the policy converges to a solution that allows to solve the problem in about 8000 training examples. According to the learned policy, the cart accelerates to the right when the pole is tilting right and to the left when tilting left, corroborating the capability of Algorithm \ref{alg_proj_policy_gradient} to unfold an intuitive behavior. 


\section{Conclusion}
We have considered the problem of learning a policy on a RKHS that maximizes the expected cumulative reward of an agent. In particular, we presented an algorithm to obtain an unbiased estimate of the gradient of the reward with respect to the policy. By running stochastic gradient ascent in the RKHS we were able to show convergence to a critical point of the reward. This algorithm, is not practical since the number of kernel elements that requires grows unbounded. To overcome this limitation, we combined the previous algorithm with destructive KOMP to ensure that the model order remains bounded. Reducing numerical complexity is traded-off for accuracy, with  a theoretical result that guarantees convergence to a neighborhood of the critical points.

\appendix

In this appendix we present some properties of the expected discounted reward and its gradient which are needed in the convergence analysis of functional stochastic gradient ascent. 
%
\begin{lemma}\label{lemma_bounded}
  Under Assumption \ref{assumption_reward_function} the expected discounted reward defined in \eqref{eqn_problem_statement} and the Q-function defined in \eqref{eqn_q_function} satisfy
  \begin{equation}
    \left| U(h) \right| < \frac{B_r}{1-\gamma} \quad \mbox{and} \quad  
    \left| Q(s,a;h) \right| < \frac{B_r}{1-\gamma} \quad \forall \quad h\in\ccalH.
  \end{equation}
\end{lemma}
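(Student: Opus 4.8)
The plan is to bound both quantities directly by the triangle inequality followed by a geometric series, treating $Q$ first since $U$ follows the same pattern. Starting from the definition \eqref{eqn_q_function}, I would move the absolute value inside the expectation and apply the triangle inequality termwise to write
\begin{equation}
\left|Q(s,a;h)\right| \leq \mathbb{E}\left[\sum_{t=0}^\infty \gamma^t \left|r(s_t,a_t)\right| \Big| h, s_0=s, a_0=a\right].
\end{equation}
The only point that needs care is the interchange of the infinite sum and the expectation, which I would justify exactly as in the proof of Proposition \ref{prop_unbiased_q}: the partial sums $\sum_{t=0}^N \gamma^t |r(s_t,a_t)|$ are nonnegative and nondecreasing in $N$, so the Monotone Convergence Theorem (cf.\ \cite[Theorem 1.6.6]{durrett2010probability}) applies and yields
\begin{equation}
\mathbb{E}\left[\sum_{t=0}^\infty \gamma^t \left|r(s_t,a_t)\right| \Big| h, s_0=s, a_0=a\right] = \sum_{t=0}^\infty \gamma^t \mathbb{E}\left[\left|r(s_t,a_t)\right| \Big| h, s_0=s, a_0=a\right].
\end{equation}

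Next I would invoke Assumption \ref{assumption_reward_function}, which bounds $\mathbb{E}[|r(s_t,a_t)| \mid h, s_0=s, a_0=a] \leq B_r$ uniformly in $t$, so that each summand is at most $\gamma^t B_r$. Summing the geometric series $\sum_{t=0}^\infty \gamma^t = 1/(1-\gamma)$, which converges because $\gamma \in (0,1)$, gives the bound $|Q(s,a;h)| \leq B_r/(1-\gamma)$. Repeating the identical argument with the unconditional expectation appearing in \eqref{eqn_problem_statement} yields $|U(h)| \leq B_r/(1-\gamma)$. Since nothing in the derivation depends on the particular choice of $h$ or of $(s,a)$, both functionals are bounded uniformly over $\ccalH$ and over $\ccalS \times \ccalA$.

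I expect no genuine obstacle here: the estimate is a routine geometric bound, and the measure-theoretic interchange is the same one already carried out in Proposition \ref{prop_unbiased_q}, so it can simply be cited. The one cosmetic subtlety is the strict inequality in the statement; the argument above delivers the nonstrict bound $\leq B_r/(1-\gamma)$, with equality only in the degenerate situation where $|r|$ saturates at $B_r$ with a fixed sign along almost every trajectory. Since every later use of this lemma (e.g.\ in defining the submartingale \eqref{eqn_supermartingale} and bounding $V_k$) requires only finiteness of the bound, this distinction is immaterial, and I would either note the strictness under the mild nondegeneracy caveat or simply record the bound as $\leq B_r/(1-\gamma)$.
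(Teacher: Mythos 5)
Your proof is correct and follows essentially the same route as the paper's: triangle inequality, the uniform bound $|r(s_t,a_t)|\leq B_r$ from Assumption \ref{assumption_reward_function}, and the geometric series $\sum_{t=0}^\infty \gamma^t = 1/(1-\gamma)$, with the $Q$ and $U$ cases handled identically. You are in fact slightly more careful than the paper, which does not explicitly invoke monotone convergence for the sum--expectation interchange and silently treats the stated strict inequality as the nonstrict bound, exactly the two points you flag.
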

\begin{proof}
  The triangle inequality applied to $|U(h)|$ yields  
  \begin{equation}
\left|U(h)\right| \leq \mathbb{E}\left[\left|\sum_{t=0}^\infty \gamma^tr(s_t,a_t)\right|\Big|h\right] \leq  \mathbb{E}\left[\sum_{t=0}^\infty \gamma^t|r(s_t,a_t)|\Big|h \right],
  \end{equation}
   By virtue of Assumption \ref{assumption_reward_function}, $|r(s,a)|<B_r$ for all $(s,a) \in \ccalS\times\ccalA$, hence it follows that
  %
$\left|U(h)\right| \leq B_r \sum_{t=0}^\infty \gamma^t = {B_r}/({1-\gamma}).$
  The proof of the result for $Q(s,a;h)$ is analogous. 
  \end{proof}
%

\begin{lemma}\label{lemma_bounded_gradient}
Let Assumption \ref{assumption_reward_function} hold, then $\nabla_h U(h,\cdot)$ defined as in \eqref{eqn_nabla_U} is bounded for all $h\in\ccalH$.
\end{lemma}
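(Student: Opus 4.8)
The plan is to bound the Hilbert-space norm of the expectation in \eqref{eqn_nabla_U} by passing the norm inside the expectation and then controlling each factor separately. Since the norm $\|\cdot\|_\ccalH$ is convex, Jensen's inequality (equivalently, the triangle inequality for Hilbert-space-valued integrals) gives
\begin{equation}
\left\|\nabla_h U(h,\cdot)\right\|_\ccalH \leq \frac{1}{1-\gamma}\mathbb{E}_\rho\left[\left|Q(s,a;h)\right|\,\left\|\kappa(s,\cdot)\Sigma^{-1}(a-h(s))\right\|_\ccalH \Big| h\right].
\end{equation}
The scalar $Q(s,a;h)$ is uniformly bounded by $B_r/(1-\gamma)$ by Lemma \ref{lemma_bounded}, so it can be pulled out of the integrand as a constant factor.

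Next I would bound the RKHS norm of the kernel term. Writing $\bbc := \Sigma^{-1}(a-h(s))$, the reproducing property \eqref{eqn_reproducing} yields $\|\kappa(s,\cdot)\bbc\|_\ccalH^2 = \bbc^\top\kappa(s,s)\bbc$, and the normalization in Definition \ref{def_rkhs} (that the Hilbert norm of $\kappa(s,\cdot)$ equals one) gives $\|\kappa(s,\cdot)\bbc\|_\ccalH \leq \|\bbc\| = \|\Sigma^{-1}(a-h(s))\|$. Combining the two bounds,
\begin{equation}
\left\|\nabla_h U(h,\cdot)\right\|_\ccalH \leq \frac{B_r}{(1-\gamma)^2}\,\mathbb{E}_\rho\left[\left\|\Sigma^{-1}(a-h(s))\right\| \Big| h\right].
\end{equation}

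The crux is to show that the remaining expectation is finite and, in fact, independent of $h$. Here I would use the factorization $\rho(s,a)=\rho(s)\pi_h(a|s)$ and condition on $s$: under $\pi_h(\cdot|s)$ the action satisfies $a\sim\ccalN(h(s),\Sigma)$, so the centered action $a-h(s)\sim\ccalN(0,\Sigma)$ regardless of the values of $s$ and $h$. Hence $z:=\Sigma^{-1}(a-h(s))$ is a zero-mean Gaussian with covariance $\Sigma^{-1}$, and by Jensen's inequality $\mathbb{E}\left[\|z\|\,\big|\,s\right] \leq \left(\mathbb{E}\left[\|z\|^2\,\big|\,s\right]\right)^{1/2} = \sqrt{\tr(\Sigma^{-1})}$, a constant that does not depend on $s$ or $h$. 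Integrating this constant against the probability density $\rho(s)$ leaves it unchanged, so that
\begin{equation}
\left\|\nabla_h U(h,\cdot)\right\|_\ccalH \leq \frac{B_r\sqrt{\tr(\Sigma^{-1})}}{(1-\gamma)^2},
\end{equation}
which is finite and uniform in $h\in\ccalH$, as claimed.

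I expect the only real subtlety to be the third step: the integrand $\Sigma^{-1}(a-h(s))$ is itself unbounded, so boundedness cannot come from a pointwise estimate and must instead rely on the finiteness of the Gaussian moments. The key observation that makes the bound uniform in $h$ is that the law of the centered action $a-h(s)$ is always $\ccalN(0,\Sigma)$, so the dependence on $h$ disappears after taking the conditional expectation over $a$.
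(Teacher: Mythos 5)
Your proposal is correct and follows essentially the same route as the paper's proof: push the norm inside the expectation, bound $|Q|$ by $B_r/(1-\gamma)$ via Lemma \ref{lemma_bounded}, use $\|\kappa(s,\cdot)\|=1$, and observe that the centered action is Gaussian so the remaining expectation is a finite constant independent of $h$. The only difference is that you make the final step explicit with the bound $\sqrt{\tr(\Sigma^{-1})}$, whereas the paper simply notes that the norm of a multivariate normal has bounded expectation.
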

\begin{proof}
 Starting from  \eqref{eqn_nabla_U} and considering $\left\|k(s,\cdot)\right\| =1$ (cf., Definition \ref{def_rkhs}), one can write
\begin{equation}\begin{split}
    \left\|\nabla_h U(h,\cdot) \right\|\leq \frac{1}{1-\gamma}\mathbb{E}_{\rho}\left[|Q(s,a;h)|\left\|\Sigma^{-1}\left(a-h(s)\right)\right\|\right],
    \end{split}
\end{equation}
which can be further upper bounded by virtue of Lemma \ref{lemma_bounded} as
\begin{equation}\label{eqn_norm_gradient_partial}
  \left\|\nabla_h U(h,\cdot) \right\|\leq \frac{B_r}{(1-\gamma)^2}\mathbb{E}_{(s,a)\sim \rho(s,a)}\left[\left\|\Sigma^{-1}\left(a-h(s)\right)\right\|\right].
\end{equation}
By construction $\Sigma^{-1/2}(a-h(s))$ is a multivariate normal distribution, hence the expectation of its norm is bounded. 
\end{proof}

\begin{lemma}\label{lemma_lipchitz}
  Let Assumption \ref{assumption_reward_function} hold, with constant $B_r$. Then the gradient of the expected discounted reward satisfies 
  \begin{equation}
    \left\|\nabla_h U(h_1,\cdot) - \nabla_h U(h_2,\cdot)\right\|_\ccalH \leq L_1\|h_1-h_2\|_\ccalH+L_2\|h_1-h_2\|_{\ccalH}^2,\label{eq_bound_lemma5}
  \end{equation}
  for all $h_1,h_2 \in\ccalH$ with $L_1$ and $L_2$ given by 
  \begin{equation*}
    L_1 = B_r \frac{(1-\gamma+p (1+\gamma))}{\lambda_{\min}(\Sigma)(1-\gamma)^3},\     L_2 = B_r\frac{(1+\gamma) \sqrt{p}}{\left(\lambda_{\min}(\Sigma)\right)^{3/2}(1-\gamma)^3}.
  \end{equation*}
%
  %

  \end{lemma}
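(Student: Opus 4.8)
The plan is to work from the trajectory representation of the gradient established in the proof of Proposition \ref{prop_unbiased_grad}, namely $\nabla_h U(h,\cdot)=\sum_{t=0}^\infty \gamma^t \mathbb{E}[Q(s_t,a_t;h)\,\kappa(s_t,\cdot)\,\Sigma^{-1}(a_t-h(s_t))\mid h]$, where the expectation is over trajectories generated by the Gaussian policy $\pi_h$. The policy $h$ enters through three channels: the explicit mean-shift factor $\Sigma^{-1}(a_t-h(s_t))$, the $Q$-function $Q(s_t,a_t;h)$, and the trajectory sampling density $p_h(\tau_t)$ with $\tau_t=(s_0,a_0,\dots,s_t,a_t)$. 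I would set $g:=h_1-h_2$, interpolate with $h_\lambda:=h_2+\lambda g$ for $\lambda\in[0,1]$, and note that differentiating the log-density of a trajectory in $\lambda$ produces the score $\frac{d}{d\lambda}\log p_{h_\lambda}(\tau_t)=\sum_{u=0}^{t}(a_u-h_\lambda(s_u))^\top\Sigma^{-1}g(s_u)$, which is the only place the changing sampling law appears. Throughout, the reproducing property together with the unit-norm normalization of Definition \ref{def_rkhs} gives the pointwise control $\|h_1(s)-h_2(s)\|\le\|g\|_\ccalH$, and Lemma \ref{lemma_bounded} gives $|Q|\le B_r/(1-\gamma)$.

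First I would dispatch the two linear contributions. Writing the difference of integrands as a telescoping sum that changes one factor at a time, the mean-shift difference $\Sigma^{-1}(a_t-h_1(s_t))-\Sigma^{-1}(a_t-h_2(s_t))=\Sigma^{-1}(h_2-h_1)(s_t)$ is deterministic and bounded by $\|g\|_\ccalH/\lambda_{\min}(\Sigma)$; multiplying by the bounded factors $|Q|$ and $\|\kappa(s_t,\cdot)\|=1$ and summing $\sum_t\gamma^t=1/(1-\gamma)$ yields a term of order $B_r\|g\|_\ccalH/(\lambda_{\min}(\Sigma)(1-\gamma)^2)$, which is the $(1-\gamma)$ part of $L_1$. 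For the $Q$-difference I would prove an auxiliary bound $|Q(s,a;h_1)-Q(s,a;h_2)|\le C_Q\|g\|_\ccalH$ by the same $\lambda$-differentiation applied to the definition \eqref{eqn_q_function}: the score appears linearly in $g$, and a Cauchy--Schwarz estimate against the Gaussian first moment $\mathbb{E}\|\Sigma^{-1/2}(a_u-h_\lambda(s_u))\|\le\sqrt{p}$ gives $C_Q\propto B_r\sqrt{p}/(\lambda_{\min}(\Sigma)^{1/2}(1-\gamma)^2)$. Paired with the Gaussian moment $\mathbb{E}\|\Sigma^{-1}(a_t-h_1(s_t))\|$, this contributes the $p(1+\gamma)$ part of $L_1$.

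The hard part will be the third channel, the change of the sampling distribution $p_{h_1}-p_{h_2}$, which is also where the quadratic term originates. I would express its contribution as $\int_0^1 \mathbb{E}_{p_{h_\lambda}}\!\big[Q(s_t,a_t;h_2)\,\kappa(s_t,\cdot)\,\Sigma^{-1}(a_t-h_2(s_t))\,\textstyle\sum_{u\le t}(a_u-h_\lambda(s_u))^\top\Sigma^{-1}g(s_u)\big]\,d\lambda$ and then split the mean-shift factor as $a_t-h_2(s_t)=(a_t-h_\lambda(s_t))+\lambda g(s_t)$. The first summand pairs two centred Gaussian factors: for $u=t$ the paired second moment is of order $\Sigma^{-1}\Sigma\,\Sigma^{-1}=\Sigma^{-1}$ and the cross terms $u\neq t$ are controlled through the correlation carried by the bounded $Q$, giving a contribution linear in $\|g\|_\ccalH$ that completes $L_1$. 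The second summand carries the deterministic factor $\lambda\Sigma^{-1}g(s_t)$ together with the score, i.e. two factors of $g$; bounding it by Cauchy--Schwarz against a single Gaussian moment produces exactly the quadratic term $L_2\|g\|_\ccalH^2$, with the three $\Sigma^{-1/2}$ factors accounting for the $\lambda_{\min}(\Sigma)^{-3/2}$ scaling and the single centred Gaussian for the $\sqrt{p}$. The main technical obstacle is organising the inner score sum $\sum_{u\le t}$ together with the outer $\sum_t\gamma^t$ so that the geometric and arithmetic--geometric series assemble into the stated $(1+\gamma)/(1-\gamma)^3$ factors; as in Propositions \ref{prop_unbiased_q} and \ref{prop_unbiased_grad}, the uniform bound $|Q|\le B_r/(1-\gamma)$ and the finiteness of Gaussian moments let the Dominated and Monotone Convergence Theorems justify every interchange of sum, integral and $\lambda$-derivative. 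Collecting the two linear pieces into $L_1$ and the single quadratic piece into $L_2$ then gives \eqref{eq_bound_lemma5}.
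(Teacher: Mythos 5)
Your proposal is correct and rests on the same core mechanism as the paper's proof: the interpolation $h_\lambda$, the score obtained by differentiating the product $\prod_r\pi_{h_\lambda}(a_r|s_r)$ in $h$, the reproducing-property bound $\|h(s)\|\le\|h\|_\ccalH$, Gaussian moment estimates, and the arithmetic--geometric series that assembles the $(1+\gamma)/(1-\gamma)^3$ factor; in both arguments the quadratic term comes from re-centering the mean-shift factor at $h_\lambda$ and absorbing the deterministic remainder, which carries the second factor of $\|h_1-h_2\|_\ccalH$. The genuine difference is the decomposition. The paper first substitutes \eqref{eqn_q_function} into the gradient to unroll $Q$ into the double sum $\sum_{t}\sum_{u}\gamma^{t+u}\,\mathbb{E}_{p_h}[r(s_{t+u},a_{t+u})\kappa(s_t,\cdot)\Sigma^{-1}(a_t-h(s_t))]$, so that the $h$-dependence of $Q$ is absorbed into a single density-change term over the horizon $t+u$ (whence the $(t+u+1)\gamma^{t+u}$ series), and only a two-way add-and-subtract is needed. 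You keep $Q$ intact, telescope over three channels, and prove a separate Lipschitz-in-$h$ bound for $Q$ by the same score computation. The two bookkeepings reassemble the identical series --- your density-change term supplies the $\sum_t (t+1)\gamma^t$ piece and your $Q$-Lipschitz term the $\sum_u u\gamma^u$ piece of the paper's $(t+u+1)\gamma^{t+u}$ --- so the constants match; the paper's unrolling simply avoids stating the auxiliary lemma. The one point to execute carefully is the ordering of the telescoping: if the mean-shift factor or the sampling density in your $Q$-difference term is evaluated at a policy mismatched with the one generating the expectation, you pick up an additional quadratic contribution, which is harmless for the Lipschitz-type conclusion but would perturb the stated value of $L_2$.
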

\begin{proof}
  %
%
%
%
  Consider the following bound to be used later 
	\begin{equation}
         \left\|h(s)\right\| = \left|\left\langle  h,\kappa(s,\cdot)\right \rangle_\ccalH\right| \leq \left\|h\right\|.\label{eqn_bound_hs}
     \end{equation}
		 due to the Cauchy-Scwartz inequality and with $\|\kappa(s,.)\|=1$ (cf., Definition \ref{def_rkhs}). Substituting \eqref{eqn_q_function} for $Q(s,a;h)$  in \eqref{eqn_after_swaping_sum_expectation} yields
	  \begin{equation}
      \hspace{-.1cm}\nabla_h U(h,\cdot) \hspace{-.1cm}=\hspace{-.1cm}\sum_{t=0}^\infty \sum_{u=0}^\infty \hspace{-.1cm}\gamma^{t+u}\mathbb{E}_{p_h}\left[r(s_{t+u},a_{t+u})\kappa(s_t,\cdot)\zeta_t^h\right] \label{eqn_nablau_sumr}
\end{equation}
where we have defined $\zeta^h_t:=\Sigma^{-1}\left(a_t-h(s_t)\right)$ for notational brevity. The expectation in \eqref{eqn_nablau_sumr} is integrated with 
\begin{align}\label{eqn_phsatu}
p_h(\bbs,\bba):= p_{t+u}(\bbs,\bba) \prod_{r=0}^{t+u} \pi_{h_1}(a_r|s_r)
\end{align}
 with $\bbs$ and $\bba$ collecting states and actions up to time $t+u$, and with $p_{t+u}(\bbs,\bba) :=p(s_0)\prod_{r=0}^{t+u-1} p(s_{r+1}|s_r,a_r)$.	Expanding the expectation as an integral and 
%
                 adding and subtracting $\mathbb{E}_{p_{h_2}}\left[r(s_{t+u},a_{t+u})\kappa(s_{t},\cdot) \zeta_t^{h_1} \right]$,  yields   %
\begin{align*}
 &\nabla_hU(h_1,\cdot) - \nabla_h U(h_2,\cdot)  \\
   &=\sum_{t=0}^\infty \sum_{u=0}^\infty \gamma^{t+u}\int r(s_{t+u},a_{t+u})\zeta_t^{h_1} \kappa(s_t,\cdot)p_{t+u}(\bbs,\bba) \\
 &\hspace{2.5cm}\times  \left(\prod_{r=0}^{t+u} \pi_{h_1}(a_r|s_r)  -\prod_{r=0}^{t+u} \pi_{h_2}(a_r|s_r)\right) d\bbs d\bba\\
 &+\sum_{t=0}^\infty \sum_{u=0}^\infty \gamma^{t+u} \int r(s_{t+u},a_{t+u})\Sigma^{-1}\left( h_2(s_t)-h_1(s_t) \right)\\
&\hspace{2.5cm}\times \kappa(s_t,\cdot)p_{h_2}(\bbs,\bba) d\bbs d\bba.
\end{align*}

                  Using that $|r(s_{t+u},a_{t+u})|\leq B_r$ and $\|\kappa(s_t,\cdot)\|=1$ (cf., Assumption \ref{assumption_reward_function} and Definition \ref{def_rkhs}, respectively) we can bound
                  \begin{align}\label{eqn_split_delta}
                        &\left\|\nabla_hU(h_1,\cdot) - \nabla_h U(h_2,\cdot) \right\|\leq \sum_{t=0}^\infty \sum_{u=0}^\infty \gamma^{t+u}B_r (I_1+I_2)
												\end{align}									%
with
\begin{align*}												
           \nonumber         &I_1:=\int \left\|\zeta_t^{h_1}\right\| \left|\Delta_{\pi}(h_1,h_2,s,a)\right| p_{t+u}(\bbs,\bba) d\bbs d\bba\\
                      &I_2:=\int \left\|\Sigma^{-1}\left( h_2(s_t)-h_1(s_t)\right) \right\|                     p_{h_2}(\bbs,\bba) d\bbs d\bba,\\
&\Delta_{\pi}(h_1,h_2,s,a)  := \prod_{r=0}^{t+u}\pi_{h_2}(a_r|s_r)-\prod_{r=0}^{t+u}\pi_{h_1}(a_r|s_r).
                    \end{align*}

 To obtain a bound for $I_1$ in \eqref{eqn_split_delta} define  
 %
%
%
 %
  $h_{\lambda} = \lambda h_1+(1-\lambda)h_2$ with $\lambda\in [0,1]$. Next, consider the Taylor expansion of 	$\prod_{r=0}^{t+u}\pi_{h}(a_r|s_r)$ as a function of $h$, which yields 
       \begin{equation} \begin{split}
\Delta_{\pi}(h_1,h_2,\bbs,\bba) =\sum_{r=0}^{t+u} \left\langle \zeta_r^{h_\lambda}\prod_{r=0}^{t+u}\pi_{h_\lambda}(a_r|s_r) \kappa(s_r,\cdot),h_1-h_2\right\rangle
         \end{split} \end{equation}
     %
     Thus, the Cauchy-Schwartz inequality allows us to write
     \begin{equation}\label{eqn_bound_delta_pi}
\left|\Delta_{\pi}(h_1,h_2,s,a) \right|        \leq \left\|h_1-h_2\right\|\sum_{r=0}^{t+u}\left\|\zeta_r^{h_\lambda}\right\|\prod_{r=0}^{t+u} \pi_{h_\lambda}(a_r|s_r).
       \end{equation}
     With this in mind we bound $I_1$ in \eqref{eqn_split_delta}.  Substituting \eqref{eqn_bound_delta_pi} and using the definition of $p_{h_\lambda}$ in \eqref{eqn_phsatu} yields
      \begin{align}
      \nonumber          &I_1=\int  p_{t+u}(\bbs,\bba)\left\|\zeta_t^{{h_1}}\right\|\left|\Delta_{\pi}(h_1,h_2,s,a)\right|\,d\bbs d\bba\\
      \nonumber  &\leq \left\|h_1-h_2\right\|\int  p_{h_\lambda}(\bbs,\bba)\left\|\zeta_t^{{h_1}}\right\|\sum_{r=0}^{t+u}\left\|\zeta_r^{h_\lambda}\right\| \,d\bbs d\bba.
\end{align}
      Writing the previous integral as an expectation and using the fact that $\zeta_t^{h_1}=\zeta_t^{h_\lambda}+\Sigma^{-1}(h_\lambda(s_t)-h_1(s_t)$, it reduces to 
      \begin{equation}
        I_1\leq
\left\|h_1-h_2\right\|\mathbb E_{p_{h_\lambda}}\hspace{-.1cm}\left[\left\|\zeta_t^{h_\lambda}\hspace{-.1cm}+\hspace{-.1cm}\Sigma^{-1}\hspace{-.1cm}(h_\lambda(s_t)-{h_1}(s_t))\right\|\sum_{r=0}^{t+u}\left\|\zeta_r^{h_\lambda}\right\|\right].\label{eq_I1lemma5}
      \end{equation}
Notice that $\Sigma^{-1/2}\zeta_t^{h_\lambda}$ and $\Sigma^{-1/2}\zeta_r^{h_\lambda}$ are multivariate independent white Gaussian variables, with first order moment bounded by $\sqrt{p}$.
      Then, the triangle inequality along with the bound \eqref{eqn_bound_hs} applied to $h(s)=h_{\lambda}(s)-h_1(s)$ in \eqref{eq_I1lemma5} yield
 \begin{align}
      \nonumber          &I_1\leq \left\|h_1-h_2\right\| \sum_{r=0}^{t+u} 
			\mathbb E_{p_{h_\lambda}}\left[ \left\|\zeta_t^{h_\lambda}\right\|\left\|\zeta_r^{h_\lambda}\right\|\right]\\
			 \nonumber&+\left\|h_1-h_2\right\| \sum_{r=0}^{t+u} \|h_\lambda-{h_1}\| \mathbb E_{p_{h_\lambda}}\left[ \left\|\Sigma^{-1} \zeta_r^{h_\lambda}\right\|\right]\\
	 &		\leq  (t+u+1)\left( \frac{p\left\|h_1-h_2\right\| }{\lambda_{\min}(\Sigma)}+ \frac{\sqrt{p}\left\|h_1-h_2\right\|^2}{\left(\lambda_{\min}(\Sigma)\right)^{3/2}}\right),
      \label{eqn_bound_first_integral}
     \end{align}

To bound $I_2$ in \eqref{eqn_split_delta},  apply again \eqref{eqn_bound_hs} to $h(s)=h_2(s)-h_1(s)$. It follows that $I_2$ is bounded by  $(\lambda_{\min}(\Sigma))^{-1}\|h_1-h_2\|$, which  together with \eqref{eqn_bound_first_integral} can be substituted in \eqref{eqn_split_delta}, to conclude the proof obtaining \eqref{eq_bound_lemma5} after adding the geometric sum
$\sum_{t=0}^\infty\sum_{u=0}^\infty (t+u+1)\gamma^{t+u} = {1+\gamma}/{(1-\gamma)^3}.$
 
\end{proof}

\begin{lemma}\label{lemma_stochastic_gradient}
Let $\Gamma(\cdot)$ be the Gamma function, and define 
  \begin{equation}
      \sigma= \frac{(3\gamma)^{1/3}}{(1-\gamma)^2}\frac{1}{\lambda_{\min}\left(\Sigma^{1/2}\right)}\left(4\frac{\Gamma(2+p/2)}{\Gamma(p/2)}\right)^{1/4},
  \end{equation}
  then the following bounds hold 
  \begin{equation}\label{eqn_moment_bounds}
\mathbb{E}\left[\left\|\hat{\nabla}_h U(h,\cdot) \right\|^2\right] \leq \sigma^2 \quad \mbox{and} \quad \mathbb{E}\left[\left\|\hat{\nabla}_h U(h,\cdot) \right\|^3\right] \leq \sigma^{3}.
  \end{equation}
\end{lemma}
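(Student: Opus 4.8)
The plan is to obtain both inequalities in \eqref{eqn_moment_bounds} from a single pointwise bound on the gradient norm, and then to reduce the two moment estimates to one computation. Writing $\eta_T:=a_T-h(s_T)$ and recalling that $\|\kappa(s_T,\cdot)\|_{\ccalH}=1$ by Definition \ref{def_rkhs}, the representation \eqref{eqn_stochastic_grad} yields the pointwise bound
\[
\left\|\hat{\nabla}_h U(h,\cdot)\right\|_{\ccalH} \le \frac{1}{2(1-\gamma)}\left|\hat{Q}_1-\hat{Q}_2\right|\,\left\|\Sigma^{-1}\eta_T\right\|,
\]
where $\hat{Q}_1:=\hat{Q}(s_T,a_T;h)$ and $\hat{Q}_2:=\hat{Q}(s_T,\bar a_T;h)$ are the two independent rollouts produced in steps \ref{stepQ} and \ref{stepQ2}. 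First I would bound the reward rollouts: each call to Algorithm \ref{alg_q_estimate} returns $\hat{Q}=(1-\gamma)\sum_{t=0}^{T_Q}r(s_t,a_t)$, so Assumption \ref{assumption_reward_function} gives $|\hat{Q}|\le B_r/(1-\gamma)$, and this bound is uniform in the visited states, so the Q-factor is controlled independently of $\eta_T$.

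The key structural step is that it suffices to bound the third moment. Since $\|\hat{\nabla}_h U(h,\cdot)\|_{\ccalH}\ge 0$, the Lyapunov (power-mean) inequality gives $\big(\mathbb{E}\|\hat{\nabla}_h U\|_{\ccalH}^2\big)^{1/2}\le\big(\mathbb{E}\|\hat{\nabla}_h U\|_{\ccalH}^3\big)^{1/3}$, so once $\mathbb{E}\|\hat{\nabla}_h U\|_{\ccalH}^3\le\sigma^3$ is established the bound $\mathbb{E}\|\hat{\nabla}_h U\|_{\ccalH}^2\le\sigma^2$ follows automatically. I would therefore only estimate the third moment. Using the pointwise bound together with the mutual independence of the two rollouts and the increment $\eta_T$, the expectation factors as
\[
\mathbb{E}\left\|\hat{\nabla}_h U\right\|_{\ccalH}^3 \le \frac{1}{8(1-\gamma)^3}\,\mathbb{E}\!\left[\left|\hat{Q}_1-\hat{Q}_2\right|^3\right]\,\mathbb{E}\!\left[\left\|\Sigma^{-1}\eta_T\right\|^3\right].
\]

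The two remaining factors are handled separately. For the Gaussian factor I would write $\eta_T=\Sigma^{1/2}z$ with $z\sim\ccalN(0,I_p)$, so that $\|\Sigma^{-1}\eta_T\|=\|\Sigma^{-1/2}z\|\le\lambda_{\min}(\Sigma^{1/2})^{-1}\|z\|$ with $\|z\|$ chi-distributed; applying Jensen to pass to the fourth moment, $\mathbb{E}\|\Sigma^{-1}\eta_T\|^3\le\big(\mathbb{E}\|\Sigma^{-1}\eta_T\|^4\big)^{3/4}$, and the chi moment $\mathbb{E}\|z\|^4=4\,\Gamma(2+p/2)/\Gamma(p/2)$ produces exactly the Gamma-function term in $\sigma$ raised to the power $3/4$, while $\lambda_{\min}(\Sigma^{1/2})^{-1}$ gives the $\Sigma$-dependence. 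For the reward factor I would use the uniform bound above, so that $\mathbb{E}|\hat{Q}_1-\hat{Q}_2|^3$ is controlled by a constant times $(1-\gamma)^{-3}$ with a $\gamma$-dependent numerator coming from the geometric law $P(T_Q=t)=(1-\gamma)\gamma^t$; collecting this with the prefactor $(1-\gamma)^{-3}$ gives the $(1-\gamma)^{-6}$ denominator, and taking the overall cube root of the third-moment bound returns the stated $\sigma$, whose Gaussian part is then $\big(4\,\Gamma(2+p/2)/\Gamma(p/2)\big)^{1/4}$ as required.

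The main obstacle is not a single hard estimate but the careful bookkeeping of the dependence structure: I must argue that, conditioned on $(s_T,a_T)$, the two Q-rollouts are independent of the main-trajectory increment $\eta_T$, and that the uniform reward bound lets the state dependence drop out so that the expectation genuinely factors into a product. A secondary nuisance is arranging the geometric and chi moments so that the resulting constant is dominated by the closed form for $\sigma$; the cleanest route is to push every power through Jensen to a common (fourth) Gaussian moment and to use the uniform bound $|\hat{Q}|\le B_r/(1-\gamma)$ rather than the random-horizon bound, which keeps the constant in the exact form stated in the lemma.
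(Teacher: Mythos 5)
There is a genuine gap, and it sits at the load-bearing step of your argument: the claimed uniform bound $|\hat{Q}|\le B_r/(1-\gamma)$ is false for the \emph{estimator}. That bound holds for the true $Q$-function (Lemma \ref{lemma_bounded}), but Algorithm \ref{alg_q_estimate} returns $\hat{Q}=(1-\gamma)\sum_{t=0}^{T_Q}r(s_t,a_t)$, so Assumption \ref{assumption_reward_function} only gives $|\hat{Q}|\le (1-\gamma)(T_Q+1)B_r$, which is unbounded because the geometric horizon $T_Q$ has unbounded support. This error propagates in two ways. First, your factorization $\mathbb{E}\big[|\hat{Q}_1-\hat{Q}_2|^3\,\|\Sigma^{-1}\eta_T\|^3\big]=\mathbb{E}\big[|\hat{Q}_1-\hat{Q}_2|^3\big]\,\mathbb{E}\big[\|\Sigma^{-1}\eta_T\|^3\big]$ is not justified: conditioned on $(s_T,a_T)$ the rollouts still depend on that state--action pair, and $\eta_T=a_T-h(s_T)$ is a function of it, so the product of expectations does not follow from conditional independence alone --- it would follow from a uniform bound on $|\hat Q_1-\hat Q_2|$, but that is exactly what you do not have. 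Second, even granting the factorization, your route cannot produce the stated constant: a $(1-\gamma)^{-3}$ bound on $\mathbb{E}|\hat Q_1-\hat Q_2|^3$ with no $\gamma$ in the numerator would yield $\sigma^3\propto (1-\gamma)^{-6}$ without the factor $3\gamma$, whereas the lemma's $\sigma$ contains $(3\gamma)^{1/3}$, which must come from somewhere.

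The paper's proof repairs both issues with one move you explicitly declined: it bounds $|\hat{Q}_1-\hat{Q}_2|$ pointwise by (a constant times) $B_r(T_Q+T_Q')$, i.e.\ it keeps the random horizons. Since $T_Q$ and $T_Q'$ are drawn independently of the trajectory, the expectation then genuinely factors into $\mathbb{E}[(T_Q+T_Q')^3]$ times the Gaussian moment, and the negative-binomial law $P(T_Q+T_Q'=k)=(1-\gamma)^2(k+1)\gamma^k$ gives $\mathbb{E}[(T_Q+T_Q')^3]=\gamma(1+14\gamma+8\gamma^2)/(1-\gamma)^3\le 23\gamma/(1-\gamma)^3$, which after dividing by the prefactor $8$ is where the $3\gamma$ in $\sigma^3$ comes from. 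The remainder of your argument --- reducing the second moment to the third via the power-mean inequality, and passing the chi moment through Jensen to $\mathbb{E}\|z\|^4=4\,\Gamma(2+p/2)/\Gamma(p/2)$ with the $\lambda_{\min}(\Sigma^{1/2})^{-1}$ factor --- matches the paper and is fine as stated.
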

\begin{proof}
Let us start by bounding the cube the norm of the stochastic gradient defined in \eqref{eqn_stochastic_grad}.
\begin{align}
\nonumber  \left\|  \hat{\nabla}_hU(h,\cdot)\right\|^3 &\leq \frac{1}{8(1-\gamma)^3}\left\|\hat{Q}(s_T,a_T;h)-\hat{Q}(s_T,\bar{a}_T;h)\right\|^3 \\
  &\times\left\|\kappa(s_T,\cdot)\right\|^3\left\|\Sigma^{-1}(a_T-h(s_T))\right\|^3.
\label{eq_bound_nablau}\end{align}
Substituting $\left\| \kappa(s_t,\cdot)\right\| = 1$ (cf., Definition \ref{def_rkhs}) and using the fact that the difference between estimates of $Q$ is bounded by $B_r (T_{Q}+T_Q^\prime)$, \eqref{eq_bound_nablau} is upper bounded by 
\begin{equation}\nonumber
\left\|  \hat{\nabla}_hU(h,\cdot)\right\|^3 \leq\frac{B_r^3}{8(1-\gamma)^3}(T_Q+T_Q^\prime)^3 \left\|\Sigma^{-1}(a_T-h(s_T))\right\|^3.
\end{equation}
From the independence of $T_Q$ and $T_Q^\prime$ with respect to the state evolution, and the monotonicity of the expectation, it results 
\begin{equation}\label{eqn_product_expectations}
  \begin{split}
  \mathbb{E}\left[\left\|  \hat{\nabla}_hU(h,\cdot)\right\|^3\right] \leq \frac{B_r^3}{8(1-\gamma)^3}\mathbb{E}\left[\left(T_Q+T_Q^\prime\right)^3\right]\\
    \times \mathbb{E}\left[\left\|\Sigma^{-1}(a_T-h(s_T))\right\|^3\right].
  \end{split}
\end{equation}
The sum of two independent geometric variables satisfies
\begin{equation}
\nonumber P(T_Q+T_Q^\prime=k) = (1-\gamma)^2(k+1) \gamma^k.
\end{equation} 
%
%
%
Thus, the third moment is upper bounded by
\begin{align}
    \nonumber \mathbb{E}\left[(T_Q+T_Q^\prime)^3\right] &= \sum_{k=0}^\infty k^3 (1-\gamma)^2(k+1)\gamma^k\\
		&= \frac{\gamma(1+14\gamma+8\gamma^2)}{(1-\gamma)^3}\leq \frac{23 \gamma}{(1-\gamma)^3}   \label{exp_sum_geom}   \end{align} 
where the last inequality follows from the fact that $\gamma<1$. On the other hand observe that $\left\|\Sigma^{-1/2}a_T-h(s_T)\right\|^2$ is Chi-squared with parameter $p$ since it is a sum of squares of normal random variables. Hence, the second expectation in \eqref{eqn_product_expectations} can be bounded using Jensen's inequality by, 
\begin{align}
   \nonumber &\mathbb{E}\left[\left\|\Sigma^{-1}(a_T-h(s_T))\right\|^3\right] \leq \frac{1}{\lambda_{\min}(\Sigma^{1/2})^{3}}\mathbb{E}\left[\chi_p^{3/2}\right]\\
     &\leq \frac{1}{\lambda_{\min}(\Sigma^{1/2})^{3}}\mathbb{E}\left[\chi_p^{2}\right]^{3/4} =\frac{1}{\lambda_{\min}(\Sigma^{1/2})^{3}}\left(4\frac{\Gamma(2+p/2)}{\Gamma(p/2)}\right)^{3/4}\label{bound_gaussian_cube}
  \end{align}
Substituting \eqref{exp_sum_geom} and \eqref{bound_gaussian_cube} in \eqref{eqn_product_expectations} yields the the bound for the  third moment of the stochastic gradient  in  \eqref{eqn_moment_bounds}. To validate the bound on the second moment and conclude the proof, consider $x=\left\|  \hat{\nabla}_hU(h,\cdot)\right\|^3$ and observe that since $x^{2/3}$ is  concave,  one can reverse Jensen's inequality  to obtain 
\begin{equation}\nonumber
\mathbb{E}\left[\left(\left\|  \hat{\nabla}_hU(h,\cdot)\right\|^3\right)^{2/3}\right]\leq  \mathbb{E}\left[\left\|  \hat{\nabla}_hU(h,\cdot)\right\|^3\right]^{2/3} \leq \left(\sigma^3\right)^{2/3}  . 
  \end{equation}
  . \end{proof}

\begin{lemma}\label{lemma_square_int_martingale}
Let $e_j=\hat{\nabla}_h U(h_j)-{\nabla}_h U(h_j)$ and let $\eta_j$ be such that it satisfies \eqref{eqn_stepsize}. Then, the sequence  
%
$S_k  :=\sum_{j=0}^k \eta_j e_j,$
%
  converges to a finite limit with probability one.
\end{lemma}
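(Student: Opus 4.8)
The plan is to recognize $S_k$ as a martingale taking values in $\ccalH$ whose increments are square-summable in mean, and then to conclude almost-sure convergence from the $L^2$ martingale convergence theorem. First I would check the martingale structure. Since $h_{j+1}$ is built from $h_j$ and the stochastic gradient drawn at step $j$, the error $e_j=\hat{\nabla}_h U(h_j)-\nabla_h U(h_j)$ is $\ccalF_{j+1}$-measurable, and Proposition \ref{prop_unbiased_grad} gives $\mathbb{E}\left[\hat{\nabla}_h U(h_j,\cdot)\big|\ccalF_j\right]=\nabla_h U(h_j,\cdot)$, so that $\mathbb{E}\left[e_j\big|\ccalF_j\right]=0$. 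Hence $\{\eta_j e_j\}$ is a martingale-difference sequence and $S_k$ is a martingale adapted to $\{\ccalF_{k+1}\}$.

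Next I would quantify the $L^2$ size of the increments. Orthogonality of the martingale differences --- the cross terms $\mathbb{E}\left[\langle e_i,e_j\rangle_\ccalH\right]$ vanish for $i<j$ after conditioning on $\ccalF_j$, because $e_i$ is then $\ccalF_j$-measurable --- yields, for any $m<n$, the identity $\mathbb{E}\left[\left\|S_n-S_m\right\|_\ccalH^2\right]=\sum_{j=m+1}^n \eta_j^2\,\mathbb{E}\left[\left\|e_j\right\|_\ccalH^2\right]$. The conditional variance decomposition bounds $\mathbb{E}\left[\left\|e_j\right\|_\ccalH^2\big|\ccalF_j\right]\leq \mathbb{E}\left[\left\|\hat{\nabla}_h U(h_j,\cdot)\right\|_\ccalH^2\big|\ccalF_j\right]\leq \sigma^2$, where the last step is Lemma \ref{lemma_stochastic_gradient}. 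Because the step sizes are square-summable by \eqref{eqn_stepsize}, the tail $\sum_{j=m+1}^n \eta_j^2$ tends to zero, so $\{S_k\}$ is Cauchy in $L^2(\Omega;\ccalH)$; in particular $\sup_k \mathbb{E}\left[\left\|S_k\right\|_\ccalH^2\right]\leq \sigma^2\sum_{j=0}^\infty \eta_j^2<\infty$.

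Finally, I would upgrade $L^2$-boundedness to almost-sure convergence. Applying Doob's maximal inequality to the nonnegative real submartingale $k\mapsto\left\|S_k-S_m\right\|_\ccalH$ (with $m$ fixed) controls $\mathbb{E}\left[\max_{m\leq k\leq n}\left\|S_k-S_m\right\|_\ccalH^2\right]$ by a constant times $\mathbb{E}\left[\left\|S_n-S_m\right\|_\ccalH^2\right]$; the vanishing of this quantity as $m\to\infty$ forces $\{S_k\}$ to be almost-surely Cauchy in $\ccalH$, and by completeness it converges to a finite limit. This is exactly the conclusion of the $L^2$ martingale convergence theorem \cite{durrett2010probability}. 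The point requiring care --- and the main obstacle --- is that $S_k$ is $\ccalH$-valued rather than scalar, so the real-line theorem cannot be cited verbatim; the argument nonetheless goes through because $\left\|S_k-S_m\right\|_\ccalH$ is a genuine real submartingale (by conditional Jensen), which is all Doob's inequality needs, and completeness of $\ccalH$ converts the almost-sure Cauchy property into an actual limit. The remaining manipulations are routine.
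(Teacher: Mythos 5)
Your proof is correct and follows essentially the same route as the paper: both identify $S_k$ as a martingale via the unbiasedness of the stochastic gradient (Proposition \ref{prop_unbiased_grad}), bound $\mathbb{E}\left[\left\|S_k\right\|_\ccalH^2\right]$ by $\sigma^2\sum_{j}\eta_j^2$ using Lemma \ref{lemma_stochastic_gradient} together with \eqref{eqn_stepsize}, and conclude by $L^2$ martingale convergence --- the paper cites \cite[Theorem 5.4.9]{durrett2010probability} where you instead unpack the Doob maximal-inequality argument. Your explicit handling of the $\ccalH$-valued case through the real nonnegative submartingale $k\mapsto\left\|S_k-S_m\right\|_\ccalH$ is a point the paper glosses over (it writes the cross term as $e_k^\top S_{k-1}$ rather than as an inner product in $\ccalH$), but this does not change the substance of the argument.
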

\begin{proof}
  By virtue of Theorem 5.4.9 \cite{durrett2010probability}, it suffices to show that $S_k$ is a square integrable martingale and that
  \begin{equation}\label{eqn_last_condition_square_integrable}
\lim_{n\to\infty} \sum_{m=1}^n \mathbb{E}\left[(S_m-S_{m-1})^2\big| \ccalF_{m}\right] <\infty \quad \mbox{a.e.}
  \end{equation}

Recall that the estimate of the gradient is unbiased, i.e. $\mathbb{E}\left[ \hat{\nabla}_h U(h_k,\cdot)\big|\ccalF_k\right]={\nabla}_h U(h_k,\cdot)$, hence we have that $\mathbb{E}\left[ e_k\big|\ccalF_k\right]=0$. This allows us to write 
            %
$\mathbb{E}\left[S_k\big|\ccalF_k\right] = S_{k-1} + \mathbb{E}\left[\eta_ke_k\big|\ccalF_k\right] = S_{k-1}.$
            %
            Thus $S_k$ is a martingale. To show that it is square integrable, compute the squared norm of $S_k$ as
            \begin{equation}\begin{split}
              \left\|S_k\right\|^2 = \left\|\sum_{j=0}^k \eta_je_j\right\|^2 = \eta_k^2\|e_k\|^2+2\eta_ke_k^\top\sum_{j=0}^{k-1}\eta_je_j \\+ \left\|\sum_{j=0}^{k-1} \eta_je_j\right\|^2 
              = \eta_k^2\|e_k\|^2 + 2\eta_k e_k^\top S_{k-1} + \left\|S_{k-1}\right\|^2.
              \end{split}\end{equation}
            Take the expectation with respect to the sigma field $\ccalF_k$ and use the fact that $\mathbb{E}\left[ e_k\big|\ccalF_k\right]=0$ to write
            \begin{equation}
              \mathbb{E}\left[\left\|S_k\right\|^2\big|\ccalF_k\right] = \eta_k^2\mathbb{E}\left[\|e_k\|^2\big|\ccalF_k\right]+\left\|S_{k-1}\right\|^2.
            \end{equation}
            %
                        %
												%
												%
           Taking expectations with respect to smaller sigma algebras recursively yields 
                          $ \mathbb{E}\left[\left\|S_k\right\|^2\right] = \sum_{j=0}^k\eta_j^2\mathbb{E}\left[\|e_j\|^2\right]$.
                        Since the step sizes are square summable and the second moment of the error is bounded (cf., lemmas \ref{lemma_bounded_gradient} and \ref{lemma_stochastic_gradient}) the second moment of $S_k$ is bounded for all $k$. We next show that \eqref{eqn_last_condition_square_integrable} holds. By definition of $S_k$ one has that 
                        \begin{equation}\begin{split}
\sum_{m=1}^n \mathbb{E}\left[\|S_m-S_{m-1}\|^2\big|\ccalF_{m}\right]= \sum_{m=1}^n \eta_m^2 \mathbb{E}\left[\|e_j\|^2\big|\ccalF_m\right].
\end{split}                          \end{equation}
                       Which is bounded for all $n$ as it was previously argued. This completes the proof that $\lim_{k\to\infty} S_k$ converges to a finite random variable with probability one.

\end{proof}

\bibliographystyle{ieeetr}
\bibliography{bib}

\begin{thebibliography}{10}

\bibitem{howard1964dynamic}
R.~A. Howard, {\em Dynamic programming and Markov processes}.
\newblock Wiley for The Massachusetts Institute of Technology, 1964.

\bibitem{kober2013reinforcement}
J.~Kober, J.~A. Bagnell, and J.~Peters, ``Reinforcement learning in robotics: A
  survey,'' {\em The International Journal of Robotics Research}, vol.~32,
  no.~11, pp.~1238--1274, 2013.

\bibitem{shreve1978alternative}
S.~E. Shreve and D.~P. Bertsekas, ``Alternative theoretical frameworks for
  finite horizon discrete-time stochastic optimal control,'' {\em SIAM J. on
  control and optimization}, vol.~16, no.~6, pp.~953--978, 1978.

\bibitem{rasonyi2005utility}
M.~R{\'a}sonyi, L.~Stettner, {\em et~al.}, ``On utility maximization in
  discrete-time financial market models,'' {\em The Annals of Applied
  Probability}, vol.~15, no.~2, pp.~1367--1395, 2005.

\bibitem{sutton1998reinforcement}
R.~S. Sutton and A.~G. Barto, {\em Reinforcement learning: An introduction},
  vol.~1.
\newblock MIT press Cambridge, 1998.

\bibitem{watkins1992q}
C.~J. Watkins and P.~Dayan, ``Q-learning,'' {\em Machine learning}, vol.~8,
  no.~3-4, pp.~279--292, 1992.

\bibitem{sutton2000policy}
R.~S. Sutton, D.~A. McAllester, S.~P. Singh, and Y.~Mansour, ``Policy gradient
  methods for reinforcement learning with function approximation,'' in {\em
  Adv. in neural information proc. sys.}, pp.~1057--1063, 2000.

\bibitem{deisenroth2013survey}
M.~P. Deisenroth, G.~Neumann, J.~Peters, {\em et~al.}, ``A survey on policy
  search for robotics,'' {\em Foundations and Trends{\textregistered} in
  Robotics}, vol.~2, no.~1--2, pp.~1--142, 2013.

\bibitem{friedman2001elements}
J.~Friedman, T.~Hastie, and R.~Tibshirani, {\em The elements of statistical
  learning}, vol.~1.
\newblock Springer series in statistics New York, 2001.

\bibitem{sutton2009convergent}
R.~S. Sutton, H.~R. Maei, and C.~Szepesv{\'a}ri, ``A convergent $ o (n) $
  temporal-difference algorithm for off-policy learning with linear function
  approximation,'' in {\em Advances in neural information processing systems},
  pp.~1609--1616, 2009.

\bibitem{bhatnagar2009convergent}
S.~Bhatnagar, D.~Precup, D.~Silver, R.~S. Sutton, H.~R. Maei, and
  C.~Szepesv{\'a}ri, ``Convergent temporal-difference learning with arbitrary
  smooth function approximation,'' in {\em Advances in Neural Information
  Processing Systems}, pp.~1204--1212, 2009.

\bibitem{mnih2013playing}
V.~Mnih, K.~Kavukcuoglu, D.~Silver, A.~Graves, I.~Antonoglou, D.~Wierstra, and
  M.~Riedmiller, ``Playing atari with deep reinforcement learning,'' {\em arXiv
  preprint arXiv:1312.5602}, 2013.

\bibitem{koppel2017breaking}
A.~Koppel, G.~Warnell, E.~Stump, P.~Stone, and A.~Ribeiro, ``Breaking bellman's
  curse of dimensionality: Efficient kernel gradient temporal difference,''
  {\em arXiv preprint arXiv:1709.04221}, 2017.

\bibitem{tolstayanonparametric}
E.~Tolstaya, A.~Koppel, E.~Stump, and A.~Ribeiro, ``Nonparametric stochastic
  compositional gradient descent for q-learning in continuous markov decision
  problems,''

\bibitem{lever2015modelling}
G.~Lever and R.~Stafford, ``Modelling policies in mdps in reproducing kernel
  hilbert space,'' in {\em A. I. and Statistics}, pp.~590--598, 2015.

\bibitem{nesterov2011random}
Y.~Nesterov and V.~Spokoiny, ``Random gradient-free minimization of convex
  functions,'' tech. rep., Universit{\'e} catholique de Louvain, Center for
  Operations Research and Econometrics (CORE), 2011.

\bibitem{ghadimi2013stochastic}
S.~Ghadimi and G.~Lan, ``Stochastic first-and zeroth-order methods for
  nonconvex stochastic programming,'' {\em SIAM Journal on Optimization},
  vol.~23, no.~4, pp.~2341--2368, 2013.

\bibitem{bertsekas1999nonlinear}
D.~P. Bertsekas, {\em Nonlinear programming}.
\newblock Athena Sci., Belmont, 1999.

\bibitem{koppel2016parsimonious}
A.~Koppel, G.~Warnell, E.~Stump, and A.~Ribeiro, ``Parsimonious online learning
  with kernels via sparse projections in function space,'' {\em arXiv preprint
  arXiv:1612.04111}, 2016.

\bibitem{koppel2017decentralized}
A.~Koppel, S.~Paternain, C.~Richard, and A.~Ribeiro, ``Decentralized online
  learning with kernels,'' {\em arXiv preprint arXiv:1710.04062}, 2017.

\bibitem{robbins1951stochastic}
H.~Robbins and S.~Monro, ``A stochastic approximation method,'' {\em The annals
  of mathematical statistics}, pp.~400--407, 1951.

\bibitem{kivinen2004online}
J.~Kivinen, A.~J. Smola, and R.~C. Williamson, ``Online learning with
  kernels,'' {\em Trans. on Sig. Proc.}, vol.~52, no.~8, pp.~2165--2176, 2004.

\bibitem{zhang2004solving}
T.~Zhang, ``Solving large scale linear prediction problems using stochastic
  gradient descent algorithms,'' in {\em Proc. of the twenty-first int. conf.
  on Machine learning}, p.~116, ACM, 2004.

\bibitem{pontil2005error}
M.~Pontil, Y.~Ying, and D.-X. Zhou, ``Error analysis for online gradient
  descent algorithms in reproducing kernel hilbert spaces,'' tech. rep., Tech.
  Report, Dep. of Comp. Sci., Univ. College London, 2005.

\bibitem{pemantle1990nonconvergence}
R.~Pemantle, ``Nonconvergence to unstable points in urn models and stochastic
  approximations,'' {\em The Annals of Prob.}, pp.~698--712, 1990.

\bibitem{durrett2010probability}
R.~Durrett, {\em Probability: theory and examples}.
\newblock Cambridge university press, 2010.

\bibitem{nesterov2017random}
Y.~Nesterov and V.~Spokoiny, ``Random gradient-free minimization of convex
  functions,'' {\em Foundations of Computational Mathematics}, vol.~17, no.~2,
  pp.~527--566, 2017.

\bibitem{schwartz1998distributions}
L.~Schwartz, {\em Th\'eorie des distributions}.
\newblock Hermann, Paris, 1966.

\bibitem{vincent2002kernel}
P.~Vincent and Y.~Bengio, ``Kernel matching pursuit,'' {\em Machine Learning},
  vol.~48, no.~1, pp.~165--187, 2002.

\bibitem{argyriou2009there}
A.~Argyriou, C.~A. Micchelli, and M.~Pontil, ``When is there a representer
  theorem? vector versus matrix regularizers,'' {\em Journal of Machine
  Learning Research}, vol.~10, no.~Nov, pp.~2507--2529, 2009.

\bibitem{mountaincar}
``Openai gym-- continuous mountine car.''
  \url{https://gym.openai.com/envs/MountainCarContinuous-v0/}.

\bibitem{cartpole}
``Openai gym-- cartpole.'' \url{https://gym.openai.com/envs/CartPole-v0/}.

\end{thebibliography}
\vspace{-0.5cm}
\begin{IEEEbiography}[{\includegraphics[width=1in,height=1.25in,clip,keepaspectratio]{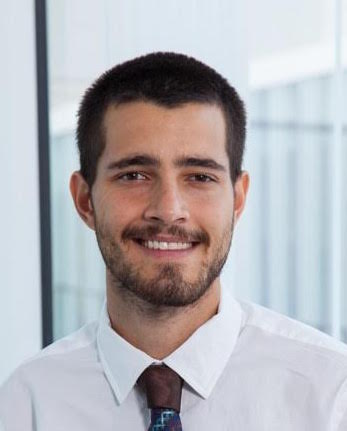}}]{Santiago Paternain} 
  received the B.Sc. degree in electrical engineering from Universidad de la Rep\'ublica Oriental del Uruguay, Montevideo, Uruguay in 2012. Since August 2013, he has been working toward the Ph.D. degree in the Department of Electrical and Systems Engineering, University of Pennsylvania. His research interests include optimization and control of dynamical systems. Mr. Paternain received the 2017 CDC best paper award.
\end{IEEEbiography}
\vspace{-.5cm}
\begin{IEEEbiography}[{\includegraphics[width=1in,height=1.25in,clip,keepaspectratio]{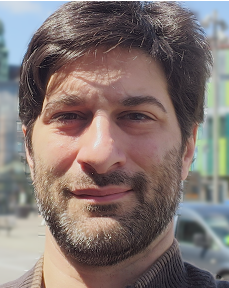}}]{Juan Andr\'es Bazerque} received the B.Sc. degree in electrical engineering from Universidad de la Rep\'ublica (UdelaR), Montevideo, Uruguay, in 2003, and the M.Sc. and Ph.D. degrees from the Department of Electrical and Computer Engineering, University of Minnesota (UofM), Minneapolis, in 2010 and 1013 respectively.
 Since 2015 he is an Assistant Professor with the  Department of Electrical Engineering  at UdelaR. His current research interests include  stochastic optimization and  networked systems, focusing on reinforcement learning, graph signal processing, and power systems optimization and control. 
Dr. Bazerque is the recipient of the UofM's Master Thesis Award 2009-2010, and co-reciepient of the best paper award at the 2nd International Conference on Cognitive Radio Oriented Wireless Networks and Communication 2007.   
 \end{IEEEbiography}
\vspace{-.5cm}
\begin{IEEEbiography}[{\includegraphics[width=1in,height=1.25in,clip,keepaspectratio]{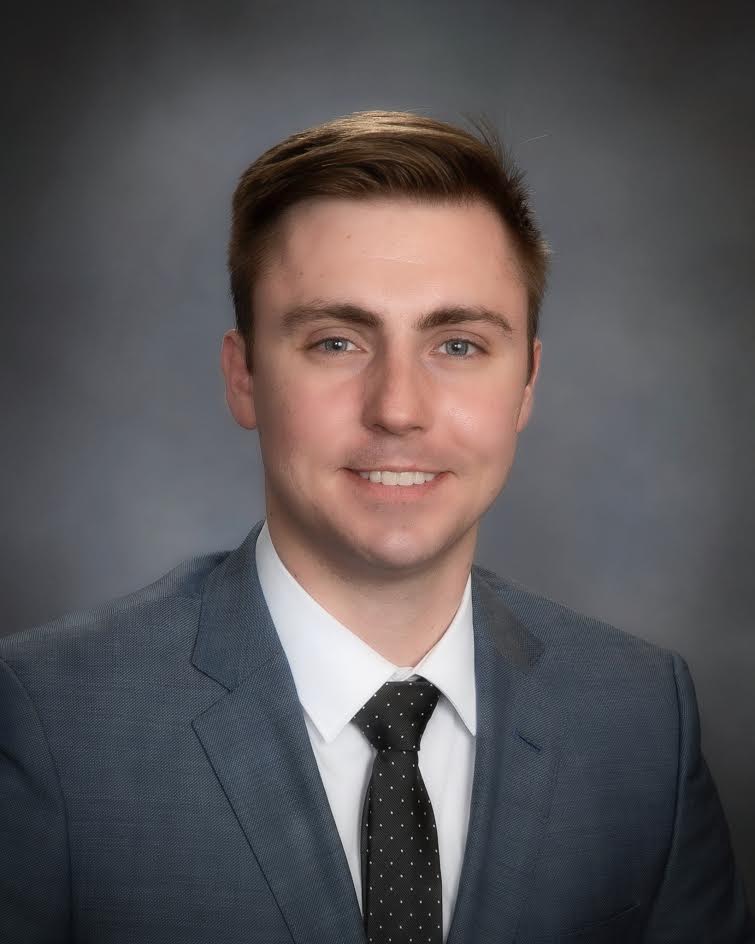}}]{Austin Small} Austin C. Small is currently pursuing the B.S. degree in computer science and electrical engineering with the University of Pennsylvania, Philadelphia, PA, USA.  His current research interests include machine learning for distributed systems, robotics, and applications at the intersection of machine learning and biomedical research.
\end{IEEEbiography}
\vspace{-.5cm}
\begin{IEEEbiography}[{\includegraphics[width=1in,height=1.25in,clip,keepaspectratio]{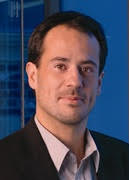}}]{Alejandro Ribeiro}  received the B.Sc. degree in electrical engineering from the Universidad de la Rep\'ublica Oriental del Uruguay, Montevideo, in 1998 and the M.Sc. and Ph.D. degree in electrical engineering from the Department of Electrical and Computer Engineering, the University of Minnesota, Minneapolis in 2005 and 2007. From 1998 to 2003, he was a member of the technical staff at Bellsouth Montevideo. After his M.Sc. and Ph.D studies, in 2008 he joined the University of Pennsylvania (Penn), Philadelphia, where he is currently the Rosenbluth Associate Professor at the Department of Electrical and Systems Engineering. His research interests are in the applications of statistical signal processing to the study of networks and networked phenomena. His focus is on structured representations of networked data structures, graph signal processing, network optimization, robot teams, and networked control. Dr. Ribeiro received the 2014 O. Hugo Schuck best paper award, and paper awards at the 2016 SSP Workshop, 2016 SAM Workshop, 2015 Asilomar SSC Conference, ACC 2013, ICASSP 2006, and ICASSP 2005. His teaching has been recognized with the 2017 Lindback award for distinguished teaching and the 2012 S. Reid Warren, Jr. Award presented by Penn's undergraduate student body for outstanding teaching. Dr. Ribeiro is a Fulbright scholar class of 2003 and a Penn Fellow class of 2015.
\end{IEEEbiography}



\end{document}